\pgfplotsset{compat=1.18}
\tikzset{snake it/.style={
    decorate,
    decoration={snake,amplitude=0.07cm,post=lineto, post length=0.2cm},
}}
\tikzset{green_box/.style={
    very thick,
    text=mygreen!40!black,
    draw=mygreen!80!white,
    rounded corners,
    minimum height=1.1cm,
    minimum width=1.1cm,
    fill=mygreen!30,
}}
\tikzset{blue_box/.style={
    very thick,
    text=myblue!40!black,
    draw=myblue!80!white,
    rounded corners,
    minimum height=1.1cm,
    minimum width=1.1cm,
    fill=myblue!30,
}}
\definecolor{LightGray}{RGB}{220,220,220}
\definecolor{myred}{RGB}{255, 19, 0}
\definecolor{myblue}{RGB}{14, 81, 167}
\definecolor{myorange}{RGB}{255, 129, 0}
\definecolor{mygreen}{RGB}{0, 146, 44}
\definecolor{mypurple}{HTML}{9f53e0}
\definecolor{boxcolor}{HTML}{f1f1ff}
\definecolor{myyellow}{HTML}{ffd000}
\theoremstyle{definition}
\newtheorem{lemma}{Lemma}[section]
\newtheorem{lemmaBreakable}[lemma]{Lemma}
\newtheorem{proposition}[lemma]{Proposition}
\newtheorem{theorem}[lemma]{Theorem}
\newtheorem{corollary}[lemma]{Corollary}
\theoremstyle{definition}
\newtheorem{definition}[lemma]{Definition}
\newtheorem{remark}[lemma]{Remark}
\title{\vspace{-2cm}Randomness from causally independent processes}
\author[1]{Martin Sandfuchs}
\author[1]{Carla Ferradini}
\author[1]{Renato Renner}
\affil[1]{Institute for Theoretical Physics, ETH Zurich, Zurich, Switzerland}
\date{}
\newcommand{\Ssub}[1]{\mathcal{S}_\bullet(#1)}
\newcommand{\cM}{\mathcal{M}}
\newcommand{\cN}{\mathcal{N}}
\begin{document}

\maketitle

\begin{abstract}
    We consider a pair of causally independent processes, modelled as the tensor product of two channels, acting on a possibly correlated input to produce random outputs $X$ and $Y$. We show that, assuming the processes produce a sufficient amount of randomness, one can extract uniform randomness from $X$ and $Y$. This generalizes prior results, which assumed that $X$ and $Y$ are (conditionally) independent. Note that in contrast to the independence of quantum states, the independence of channels can be enforced through spacelike separation. As a consequence, our results allow for the generation of randomness under more practical and physically justifiable assumptions than previously possible. We illustrate this with the example of device-independent randomness amplification, where we can remove the constraint that the adversary only has access to classical side information about the source.
\end{abstract}

\section{Introduction} \label{sec:introduction}

Consider the following scenario which is shown in \cref{fig:setup_spacetime}. Two experimentalists are located in two distant places, say Zurich and Sydney. Simultaneously, they both perform experiments designed to generate randomness, $X$ and $Y$, respectively.\footnotemark{} Due to their geographic locations, $X$ and $Y$ are produced in a spacelike separated fashion, i.e., there is no causal influence from Zurich to Sydney or vice versa during the course of the experiment. However, because of experimental imperfections, neither $X$ or $Y$ are perfectly random. Furthermore, the two experimentalists' data may be correlated due to the influence of events in their common past (e.g., solar activity). Nevertheless, since $X$ and $Y$ were produced by independent processes (enforced by the spacelike separation), they cannot be too badly correlated. As a result, we may hope to construct a function $\mathrm{Ext}$ such that $Z = \mathrm{Ext}(X, Y)$ is a string of uncorrelated bits. A diagram of the model considered in this paper is given in \cref{fig:setup_circuit} below.

\begin{figure}[ht!]
    \centering
    \begin{tikzpicture}[scale=0.85]
        \node (X) at (-3, 0) {};
        \node (Y) at (+3, 0) {};

        \begin{scope}[on background layer={}]
            \fill[fill=myblue!10!white,opacity=0.7] (-6, -3) arc(-180:0:+3cm and 0.65cm) -- (X.center) -- cycle;
            \fill[fill=myblue!10!white,opacity=0.7] (+6, -3) arc(-180:0:-3cm and 0.65cm) -- (Y.center) -- cycle;
        \end{scope}

        \draw[fill=mygreen!30] (-5.5, -2.5) arc(-180:0:2.5cm and 0.4cm) -- (X.center) -- cycle;
        \node[text=mygreen!40!black] at (-3, -1.3) {$\cM$};

        \fill[very thick,fill=myorange!50!white,opacity=0.5] (-3, -2.5) ellipse (2.5cm and 0.4cm);
        \draw[very thick,myorange!70!black,dotted] (-5.5, -2.5) arc(+180:0:2.5cm and 0.4cm);
        \node[text=myorange!70!black] at (-3, -2.5) {$A$};

        \draw[very thick,draw=mygreen!80!white] (-5.5, -2.5) arc(-180:0:2.5cm and 0.4cm) -- (X.center) -- cycle;

        \draw[fill=mygreen!30] (+5.5, -2.5) arc(-180:0:-2.5cm and 0.4cm) -- (Y.center) -- cycle;
        \node[text=mygreen!40!black] at (+3, -1.3) {$\cN$};

        \fill[very thick,fill=myorange!50!white,opacity=0.5] (+3, -2.5) ellipse (-2.5cm and 0.4cm);
        \draw[very thick,myorange!70!black,dotted] (+5.5, -2.5) arc(+180:0:-2.5cm and 0.4cm);
        \node[text=myorange!70!black] at (+3, -2.5) {$B$};

        \draw[very thick,draw=mygreen!80!white] (+5.5, -2.5) arc(-180:0:-2.5cm and 0.4cm) -- (Y.center) -- cycle;

        \filldraw[black] (X) circle (2pt) node[above=0.1cm]{$X$};
        \filldraw[black] (Y) circle (2pt) node[above=0.1cm]{$Y$};

        \draw[very thick,->,>=stealth] (-7.0, -4.0) -- (-7.0, +0.5) node[above]{$t$};
        \draw[very thick,->,>=stealth] (-7.0, -4.0) -- (+7.5, -4.0) node[right]{$x$};

        \begin{scope}[on background layer={}]
            \draw[very thick,dashed,draw=gray] (-7.2, 0) node[left,black]{$t_1$} -- (+7.2, 0);
        \draw[very thick,dashed,draw=gray] (-7.2, -2.5) node[left,black]{$t_0$} -- (+7.2, -2.5);
        \end{scope}

        \draw[very thick] (-3, -3.9) -- (-3, -4.1) node[below]{Zurich};
        \draw[very thick] (+3, -3.9) -- (+3, -4.1) node[below]{Sydney};
    \end{tikzpicture}
    \caption{
        \textbf{Spacetime diagram illustrating the generation of $X$ and $Y$.} Two randomness generating processes begin at time $t_0$ and finish producing randomness by time $t_1$. Due to the spatial distance between the two experimentalists, the two processes $\cM$ and $\cN$ act independently on $A$ and $B$, which are spacelike separated regions of the Cauchy surface at time $t_0$.
    }
    \label{fig:setup_spacetime}
\end{figure}
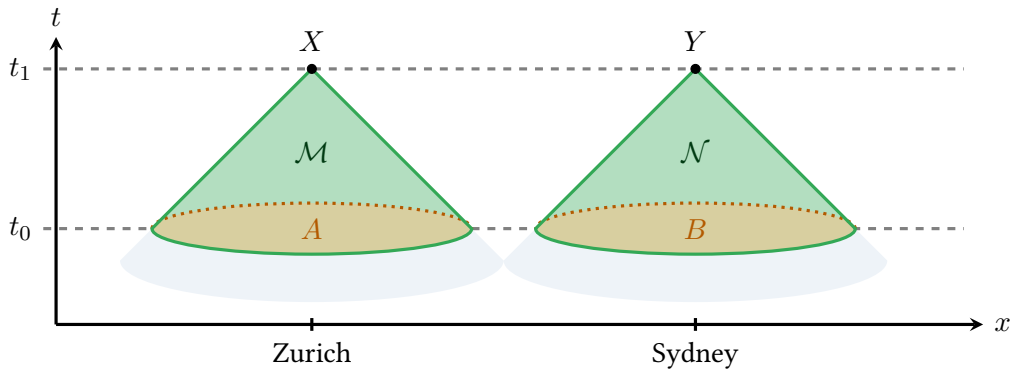
\footnotetext{For concreteness, one can imagine that they both perform (imperfect) polarization measurements on suitably prepared photons (see, for instance, \cite{Frauchiger_2013}). Note that, in contrast to classical processes, measurement results in quantum mechanics can be fundamentally unpredictable \cite{Heisenberg_1927, Bell_1964, Berta_2010}.}

The function $\mathrm{Ext}$ described above is commonly referred to as a two-source extractor and has been studied extensively in both classical and quantum information theory (see \cite{Chattopadhyay_2022} for a review of the classical literature). Initially, researchers considered the scenario when $X$ and $Y$ are independent random variables \cite{Santha_1986,Chor_1988}. This has since been extended to the situation where the adversary holds quantum side information. Specifically, in \cite{Kasher_2010}, the authors considered states of the form $\rho_{XYC_1 C_2} = \rho_{X C_1} \otimes \rho_{Y C_2}$, i.e., the side information about $X$ is independent from the side information about $Y$.\footnote{In \cite{Kasher_2010}, they also consider adversaries holding entangled side information. However, they only obtain results against adversaries with bounded quantum storage, an assumption we don't make here.} In \cite{Friedman_2016}, this was generalized to states $\rho_{XYC}$ satisfying the Markov chain condition $X \leftrightarrow C \leftrightarrow Y$, which can be interpreted as $X$ and $Y$ being independent when conditioned on $C$ \cite{Hayden_2004}. It is easy to see that if $\rho_{A B}$ in \cref{fig:setup_circuit} is a purely classical (or, more generally, separable) state, then one obtains that $X$ and $Y$ are independent when conditioned on the channel inputs $A$ and $B$, i.e., $X \leftrightarrow A B \leftrightarrow Y$ forms a classical Markov chain. Hence, for classically correlated inputs, our setup can be treated using the Markov model considered in \cite{Friedman_2016} (see also the discussion in \cref{subsec:markov_model}). This validates our intuition that a state produced by two independent processes is sufficiently uncorrelated to extract randomness. 
However, for entangled inputs, our model can no longer be captured by quantum Markov chains (we formally show this in \cref{thm:non_markovianity}). In this sense, the setup in \cref{fig:setup_circuit} can be seen as a generalized notion of conditional independence beyond quantum Markov chains.

\begin{figure}[ht!]
    \centering
    \begin{tikzpicture}
        \node[green_box] (M) at (0, +0.8) {$\cM$};
        \node[green_box] (N) at (0, -0.8) {$\cN$};
        \node at (0, 0) {$\otimes$};
        \draw[thick,->,>=stealth,dashed] ([xshift=-0.8cm] M.west) -- node[above]{$A$} (M.west);
        \draw[thick,->,>=stealth,dashed] ([xshift=-0.8cm] N.west) -- node[above]{$B$} (N.west);

        \node[blue_box,minimum height=2.7cm,minimum width=1.3cm] (Ext) at (2.3, 0) {$\mathrm{Ext}$};
        \draw[thick,->,>=stealth] (M.east) -- node[above]{$X$} (M.east-|Ext.west);
        \draw[thick,->,>=stealth] (N.east) -- node[above]{$Y$} (N.east-|Ext.west);
        \node[] (Z) at ([xshift=1.1cm] Ext.east) {};
        \draw[thick,->,>=stealth] (Ext.east) -- node[above]{$Z$} (Z.west);

        \draw[thick,->,>=stealth,dashed,rounded corners=0.1cm] (M.north) -- ([yshift=+0.5cm] M.north) -- node[above]{$S$} ([yshift=+0.5cm] M.north-|Z.west);
        \draw[thick,->,>=stealth,dashed,rounded corners=0.1cm] (N.south) -- ([yshift=-0.5cm] N.south) -- node[below]{$T$} ([yshift=-0.5cm] N.south-|Z.west);

        \draw[very thick,decorate,decoration={calligraphic brace,amplitude=0.15cm,raise=0.1cm}] ([xshift=-0.8cm,yshift=-0.2cm] N.west) -- node[left=0.25cm]{$\rho_{AB}$} ([xshift=-0.8cm,yshift=+0.2cm] M.west);

        \draw[very thick,decorate,decoration={calligraphic brace,amplitude=0.15cm,raise=0.0cm}] ([yshift=+0.6cm] M.north-|Z) -- node[right=0.2cm]{$\rho_{ZST}^\mathrm{Ext}$} ([yshift=-0.6cm] N.south-|Z);
    \end{tikzpicture}
    \caption{
        \textbf{Circuit diagram of the setup.} Two independent channels $\cM$ and $\cN$ are applied to an initial quantum state $\rho_{AB}$ to produce classical values $X$ and $Y$, respectively. Additionally, we allow the channels to produce quantum side information $S$ and $T$. The state $\rho_{AB}$ should be understood to capture all degrees of freedom that $\cM$ and $\cN$ may depend on (see also \cref{fig:setup_spacetime}). An extractor function $\text{Ext}$ produces a random bitstring~$Z$, which should be uniformly distributed and independent from~$S$ and~$T$. The length of the generated bitstring~$Z$ depends on the amount of randomness---measured in terms of entropy---produced by the channels $\cM$ and $\cN$. Note that one may also consider an extra purifying system $E$ for $\rho_{AB}$. This could be passed through $\cM$ or $\cN$, i.e., there is no need to explicitly model the identity channel on $E$.
    }
    \label{fig:setup_circuit}
\end{figure}
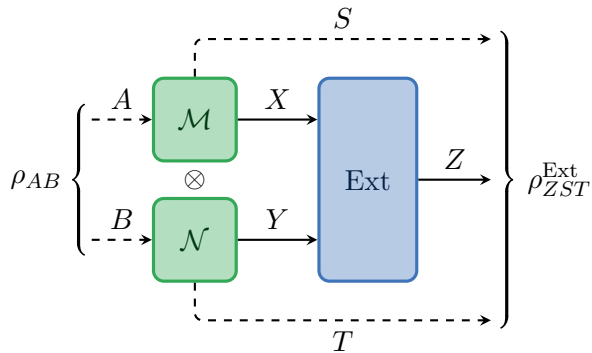

In practice, it is hard (or even impossible) to justify the (conditional) independence of the state of two systems: even if they are spatially separated, they could depend on a common past. On the other hand, as illustrated by our introductory example, the causal independence of quantum processes can be experimentally enforced.\footnote{Causal independence holds even without spacelike separation if the processes take place in separate and sealed laboratories, as is commonly assumed in cryptography.} This makes our setup attractive for constructing quantum random number generators, where one aims to eliminate unnecessary device assumptions. Apart from being easier to justify, our model also allows for new applications. As an example of this, in \cref{sec:quantum_DIRA}, we demonstrate how our results can be used to prove the security of device-independent randomness amplification schemes when the adversary holds quantum side information about the source of randomness (as opposed to the classical side information considered in, for example, \cite{Kessler_2020, Foreman_2023}).

The remainder of this paper is organized as follows. In \cref{sec:preliminaries}, we summarize some preliminaries. Readers familiar with the formalism of quantum information theory should feel free to skip this section. In \cref{sec:two_process_extractors} we formally introduce our model of extractors, the \emph{two-process extractors}, which will be the object of interest throughout the remaining sections. In \cref{sec:ip_security}, we show that a simple construction, the inner product construction, can be used to extract a single bit of uniform randomness in our model. In \cref{sec:deor_security}, we extend these results to extract multiple bits of randomness. Next, in \cref{sec:smoothing}, we show that our model is robust, i.e., the extractors still work when the entropy conditions are only satisfied approximately. In \cref{sec:prior_work}, we discuss the relation of our model to prior work. In \cref{sec:quantum_DIRA}, we apply our results to device-independent randomness amplification protocols. Finally, in \cref{sec:conclusion}, we summarise the main conclusions and discuss some open problems.

\section{Preliminaries and notation} \label{sec:preliminaries}

Here, we summarize some of the main notations and quantities used in the statements and proofs that follow. For a detailed introduction to the formalism of quantum information theory, we refer to the literature, e.g., \cite{Nielsen_2010}. Note that, somewhat unconventionally, throughout this paper we will allow for sub-normalized states and channels. That is, when we say ``state'', we mean a positive semi-definite linear operator $\rho$ with $\tr[\rho] \leq 1$. A summary of the notation is given in \cref{tab:notation} below.

\renewcommand{\arraystretch}{1.4}
\begin{longtable}[H]{|c||p{13cm}|}
        \hline
        \textbf{Notation} & \textbf{Description} \\
        \hline
        \hline
        $A^n$ & The composite system $A_1 \ldots A_n$ \\ \hline
        $\mathrm{Lin}(A, B)$ & Set of linear operators from the space $A$ to $B$ \\ \hline
        $\mathrm{Lin}(A)$ & The same as $\mathrm{Lin}(A, A)$ \\ \hline
        $L_{B|A}^*$ & The adjoint of $L_{B|A} \in \mathrm{Lin}(A, B)$ \\ \hline
        $S \perp T$ & $S$ and $T$ are orthogonal, i.e., $ST = TS = 0$ \\ \hline
        $S \geq 0$ & $S\in\mathrm{Lin}(A)$ is positive semi-definite \\ \hline
        $S \leq T$ & $T - S \geq 0$, i.e., $T - S$ is positive semi-definite \\ \hline
        $\Ssub{A}$ & The set of sub-normalized density operators on system $A$, i.e., $\Ssub{A} = \{\rho_{A} \in \mathrm{Lin}(A): \rho_{A} \geq 0, 0 < \tr[\rho] \leq 1\}$ \\ \hline
        $\rho_{AB}$ & Density operator acting jointly on systems $A$ and $B$, i.e., $\rho_{AB} \in \Ssub{AB}$ \\ \hline
        $\rho_A$ & Reduced density operator on $A$ obtained by tracing out $B$: $\rho_A = \tr_B[\rho_{AB}]$ \\ \hline
        $\rho_X$ & Classical state on $\mathcal{H}_X$, describing a random variable $X$ with alphabet~$\mathcal{X}$: $\rho_X=\sum_{x \in \mathcal{X}} P_X(x) \ketbra{x}_X$, for a fixed computational basis $\{\ket{x}\}_{x}$ of $\mathcal{H}_X$  \\ \hline
        $\rho_{XA}$ & Classical-quantum state describing a random variable $X$ correlated with a quantum system $A$: $\rho_{XA}=\sum_{x \in \mathcal{X}} \ketbra{x}_X \otimes \rho_{A\land X=x}$ \\ \hline
        $\omega_Z$ & The maximally mixed state on the system $Z$ \\ \hline
        $S_{AB} T_{BC}$ & Shorthand for $(S_{A B} \otimes \mathds{1}_C)(\mathds{1}_A \otimes T_{BC})$ \\ \hline
        $\mathcal{I}_{R}$ & Identity channel on the system $R$ \\ \hline
        $\mathcal{E}_{B|A}$ & A channel, i.e., a completely-positive and trace non-increasing (CPTNI) map from $\mathrm{Lin}(A)$ to $\mathrm{Lin}(B)$ \\ \hline
        $\mathcal{E}_{B|A}[\rho_{AR}]$ & Application of a channel to a state of a larger system, i.e., $\mathcal{E}_{B|A}[\rho_{AR}] \coloneqq (\mathcal{E}_{B|A} \otimes \mathcal{I}_R)[\rho_{AR}]$ \\ \hline
        $f_{Y|X}[\rho_{XR}]$ & Notation for $f_{Y|X}[\rho_{XR}] = \sum_{x} \ketbra{f(x)}_Y \mel{x}{\rho_{XR}}{x}_X$, i.e., the function $f$ applied as a channel \\ \hline
        $f_{YX|X}[\rho_{XR}]$ & The same as $f_{Y|X}[\rho_{XR}]$ but with a copy of $X$ appended to the output. Explicitly, $f_{YX|X}[\rho_{XR}] = \sum_x \ketbra{f(x)}_Y \otimes \ketbra{x}_X \otimes \mel{x}{\rho_{XR}}{x}_X$ \\ \hline
        $x \cdot y$ & The inner product between $x$ and $y$, i.e., $x \cdot y = \sum_i x_i y_i$. \\ \hline
        $\ket{\Omega}_{AA'}$ & The non-normalized maximally entangled state, i.e., $\ket{\Omega}_{AA'} = \sum_i \ket{i}_{A} \otimes \ket{i}_{A'}$ \\ \hline
        $\Omega_{AA'}$ & The non-normalized state $\Omega_{AA'} = \ketbra{\Omega}_{AA'}$ \\ \hline
        $\|S\|_1$ & Schatten $1$-norm of $S$, given by $\|S\|_1 = \tr_{}[\sqrt{S^* S}]$ \\ \hline
        $\log$ & Logarithm to the base $2$ \\ \hline
        
    \caption{
        \textbf{Summary of notation.} Subscripts in capital letters refer to systems. We use $A, B, \ldots$ for generic quantum systems, while $X, Y, Z$ refer to classical systems, i.e., systems whose states are diagonal in a fixed computational basis.
    }
    \label{tab:notation}
\end{longtable}

\begin{remark}
    Note that if $\rho_A$ is a state, then $\rho_A^{1/2} \Omega_{AA'} \rho_A^{1/2}$ is a purification of $\rho_A$. Furthermore, for any $K_A \in \mathrm{Lin}(A)$ it holds that $K_A \ket{\Omega}_{AA'} = K_{A'}^T \ket{\Omega}_{AA'}$, where $\circ^T$ denotes the transpose in the basis underlying the definition of $\ket{\Omega}_{AA'}$. One can easily show that $\smash{(\rho_A^T)^{1/2}} = \smash{(\rho_A^{1/2})^T}$ and, similarly, $(\rho_{A}^{-1})^T = (\rho_A^T)^{-1}$.
\end{remark}

\begin{definition}[Instruments] \label{def:channel}
    An \emph{instrument} is a channel $\cM_{XS|A}$ where $X$ is a classical system. Any instrument can be decomposed as
    \begin{equation}
        \cM_{XS|A}[S_A] = \sum_x \ketbra{x}_X \otimes \cM^{x}_{S|A}[S_A],
    \end{equation}
    for some CPTNI maps $\cM^{x}_{S|A}$.
\end{definition}

\begin{definition}[Adjoint channel]
    For any channel $\mathcal{E}_{B|A}$, we denote by $\mathcal{E}_{B|A}^*$ its adjoint with respect to the Hilbert-Schmidt inner product, i.e., the unique superoperator such that $\tr_B[T_B^* \mathcal{E}_{B|A}[S_A]] = \tr_A[(\mathcal{E}_{B|A}^*[T_B])^* S_A]$ holds for all $S_A \in \mathrm{Lin}(A)$ and $T_B \in \mathrm{Lin}(B)$. Note that if $\mathcal{E}_{B|A}$ is completely positive, then so is $\mathcal{E}_{B|A}^*$. If $\mathcal{E}_{B|A}$ is trace non-increasing, then $\mathcal{E}_{B|A}^*$ is sub-unital, i.e., $\mathcal{E}_{B|A}^*[\mathds{1}_B]\leq \mathds{1}_A$.
\end{definition}

\begin{lemma}[Stinespring dilation \cite{Stinespring_1955}] \label{lem:stinespring}
    Let $\mathcal{E}_{B|A}$ be a channel. Then, there exists $K_{BR|A} \in \mathrm{Lin}(A,BR)$, called a \emph{Stinespring dilation}, such that
    \begin{equation}
        \mathcal{E}_{B|A}[S_A] = \tr_{R}\left[K_{BR|A} S_A K_{BR|A}^*\right].
    \end{equation}
    Furthermore $K_{BR|A}^* K_{BR|A} \leq \mathds{1}_A$ with equality iff $\mathcal{E}_{B|A}$ is trace-preserving.
\end{lemma}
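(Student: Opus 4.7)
The plan is to construct $K$ by first extracting a Kraus decomposition of $\mathcal{E}_{B|A}$ from its Choi operator, and then bundling the Kraus operators into a single operator acting into the enlarged output space $BR$. To this end, define the Choi operator $J_{BA'} := \mathcal{E}_{B|A}[\Omega_{AA'}]$, where $A' \cong A$. Since $\mathcal{E}_{B|A}$ is completely positive, $J_{BA'} \geq 0$, so it admits a spectral decomposition $J_{BA'} = \sum_{i=1}^{d} \ketbra{\phi_i}_{BA'}$ with (unnormalised) eigenvectors $\ket{\phi_i}_{BA'}$ and $d = \mathrm{rank}(J_{BA'})$. Using the identity $K_A \ket{\Omega}_{AA'} = K_{A'}^T \ket{\Omega}_{AA'}$ from the preceding remark (equivalently, the standard vec-correspondence $\mathrm{Lin}(A,B) \cong B \otimes A'$), each $\ket{\phi_i}_{BA'}$ arises as $(E_i)_{B|A} \ket{\Omega}_{AA'}$ for a unique $E_i \in \mathrm{Lin}(A, B)$.

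The next step is to verify that $\{E_i\}_{i=1}^{d}$ is a set of Kraus operators for $\mathcal{E}_{B|A}$. The map $\mathcal{E}'_{B|A}[S_A] := \sum_i E_i S_A E_i^*$ is completely positive and, by construction, has the same Choi operator as $\mathcal{E}_{B|A}$; since a linear map from $\mathrm{Lin}(A)$ to $\mathrm{Lin}(B)$ is uniquely determined by its action on $\Omega_{AA'}$, it follows that $\mathcal{E} = \mathcal{E}'$. Now let $R$ be an auxiliary space of dimension $d$ with orthonormal basis $\{\ket{i}_R\}_{i=1}^{d}$, and set
\[
    K_{BR|A} := \sum_{i=1}^{d} E_i \otimes \ket{i}_R \in \mathrm{Lin}(A, BR).
\]
A direct computation gives $K_{BR|A} S_A K_{BR|A}^* = \sum_{i,j} (E_i S_A E_j^*) \otimes \ket{i}\bra{j}_R$, so tracing out $R$ yields $\sum_i E_i S_A E_i^* = \mathcal{E}_{B|A}[S_A]$. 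For the norm condition, I would compute $K_{BR|A}^* K_{BR|A} = \sum_{i,j} \braket{i}{j}_R \, E_i^* E_j = \sum_i E_i^* E_i = \mathcal{E}_{B|A}^*[\mathds{1}_B]$, where the final equality follows directly from the Kraus form of the adjoint. Since $\mathcal{E}_{B|A}$ is trace non-increasing, $\mathcal{E}_{B|A}^*$ is sub-unital, so $K_{BR|A}^* K_{BR|A} \leq \mathds{1}_A$, with equality iff $\mathcal{E}_{B|A}^*$ is unital, i.e., iff $\mathcal{E}_{B|A}$ is trace-preserving.

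The main obstacle is the bookkeeping in the passage from Choi eigenvectors to Kraus operators: one must handle the transpose convention and the identification $\mathrm{Lin}(A, B) \leftrightarrow BA'$ consistently. This is exactly what the preceding remark packages up, after which the remaining verifications reduce to a short direct calculation with the basis $\{\ket{i}_R\}$.
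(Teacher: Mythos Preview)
Your argument is correct and is the standard route to Stinespring via the Choi operator and the associated Kraus decomposition; the bookkeeping with the vec-correspondence and the computation of $K^*K = \mathcal{E}^*[\mathds{1}_B]$ are fine. Note, however, that the paper does not supply its own proof of this lemma: it is stated in the preliminaries with a citation to \cite{Stinespring_1955} and left unproved, so there is no paper proof to compare against. Your write-up would serve perfectly well as a self-contained justification if one were desired.
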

To quantify the quality of randomness, we will require some measure of distance. Since we will be dealing with sub-normalized states, some care is required when defining our distance measures. 

\begin{definition}[Trace norm]
  Let $S$ be a linear operator. Define the \emph{trace norm} by
  \begin{equation}
    \norm{S}_+ \coloneqq \max_{0 \leq \Lambda \leq \mathds{1}} \left| \tr[\Lambda S] \right|.
  \end{equation}
\end{definition}

\begin{remark}[Relation to $1$-norm]
    If $\rho$ and $\sigma$ are positive operators then \cite[Section 3.2]{Tomamichel_2016}
    \begin{equation}
        \norm{\rho - \sigma}_+ = \frac{1}{2} \norm{\rho - \sigma}_1 + \frac{1}{2} \abs{\tr[\rho] - \tr[\sigma]}.
    \end{equation}
    In particular, for states such that $\tr[\rho] = \tr[\sigma]$ we have that $\norm{\rho - \sigma}_+ = \frac{1}{2} \norm{\rho - \sigma}_1$. More generally, the equality above implies
    \begin{equation}
        \frac{1}{2} \|\rho - \sigma\|_1 \leq \|\rho - \sigma\|_+ \leq \|\rho - \sigma\|_1.
    \end{equation}
\end{remark}

For technical reasons, the following distance measure will prove to be useful.

\begin{definition}[Purified distance] \label{def:purified_distance}
    Let $\rho_{A},\sigma_{A} \in \Ssub{A}$. Define the \emph{purified distance} by
    \begin{equation}
        P\left( \rho_{A},\sigma_{A} \right) \coloneqq \inf_{\rho_{AB},\sigma_{AB}}\norm{\rho_{AB} - \sigma_{AB}}_{+},
    \end{equation}
    where the infimum runs over all purifications $\rho_{AB}$ and $\sigma_{AB}$ of $\rho_{A}$ and $\sigma_{A}$, respectively.
\end{definition}
\begin{remark}
    By the data-processing inequality for $\norm{\circ}_{+}$ we have that $\norm{\rho - \sigma}_{+} \leq P(\rho,\sigma).$
\end{remark}

The following property of the purified distance will be useful.

\begin{lemma}[{\cite[Corollary 9]{Tomamichel_2010}}] \label{lem:purified_extension}
    Let $\rho_{AB} \in \Ssub{AB}$ and $\sigma_{A} \in \Ssub{A}$. Then, there exists an extension $\sigma_{AB} \in \Ssub{AB}$ of $\sigma_{A}$ such that $P(\rho_{AB}, \sigma_{AB}) = P(\rho_A, \sigma_A)$.
\end{lemma}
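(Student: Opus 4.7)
The plan is to reduce the statement to an Uhlmann-type matching of purifications. I would start by fixing an arbitrary purification $\rho_{ABR}$ of $\rho_{AB}$; viewing $BR$ as a single purifying register, this is equally a purification of $\rho_A$. The key ingredient I would invoke is that for the purified distance, the infimum in its definition is realized by compatibly chosen purifications on any sufficiently large joint system: for any purification $\rho_{ABR}$ of $\rho_A$, there exists a purification $\sigma_{ABR}$ of $\sigma_A$ on the same system $ABR$ such that $\|\rho_{ABR}-\sigma_{ABR}\|_+ = P(\rho_A,\sigma_A)$. This is the content of Uhlmann's theorem adapted to sub-normalized states and to the purified distance.

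I would then define the candidate extension by $\sigma_{AB} \coloneqq \tr_R[\sigma_{ABR}]$, which manifestly reduces to $\sigma_A$ on $A$ and is therefore an extension. To compare $P(\rho_{AB},\sigma_{AB})$ with $P(\rho_A,\sigma_A)$ I would use data processing twice. For the upper bound, note that $\rho_{ABR}$ and $\sigma_{ABR}$ are purifications of $\rho_{AB}$ and $\sigma_{AB}$ respectively, so by the defining infimum
\[
P(\rho_{AB},\sigma_{AB}) \leq \|\rho_{ABR}-\sigma_{ABR}\|_+ = P(\rho_A,\sigma_A).
\]
For the lower bound, data processing of the purified distance under the partial trace channel $\tr_B$ gives $P(\rho_A,\sigma_A) \leq P(\rho_{AB},\sigma_{AB})$, and the two bounds combine to yield equality.

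The main obstacle---or rather the only nontrivial ingredient---is the Uhlmann-type matching statement for sub-normalized states. For normalized states this is standard and follows from the Uhlmann characterization of the fidelity as $F(\rho,\sigma) = \max_{\ket{\psi},\ket{\varphi}} |\braket{\psi}{\varphi}|$ over purifications on a common system. For sub-normalized states one augments each state by a scalar ``loss'' flag, replacing $\rho$ by $\rho \oplus (1-\tr[\rho])\ketbra{\perp}$ on an enlarged space, applies the usual Uhlmann theorem there, and then projects back to obtain purifications achieving the purified distance. Once this matching is in hand, the remainder of the argument is a direct consequence of contractivity of $\|\cdot\|_+$ and $P$ under quantum channels.
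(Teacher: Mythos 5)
The paper cites this statement to \cite[Corollary~9]{Tomamichel_2010} without giving a proof, so there is no in-paper argument to compare against. Your proposal is the standard proof of this fact and is structurally correct: fix a purification $\rho_{ABR}$ of $\rho_{AB}$, Uhlmann-match a purification $\sigma_{ABR}$ of $\sigma_A$ on the same register, set $\sigma_{AB}\coloneqq\tr_R[\sigma_{ABR}]$, and combine the infimum-based upper bound with the lower bound $P(\rho_A,\sigma_A)\le P(\rho_{AB},\sigma_{AB})$ (which, with \cref{def:purified_distance}, is immediate: any pair of common purifications of $\rho_{AB}$ and $\sigma_{AB}$ is also a pair of common purifications of $\rho_A$ and $\sigma_A$).

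One small caution concerns the justification you sketch for the Uhlmann step. \Cref{def:purified_distance} sets $P(\rho_A,\sigma_A)=\inf\norm{\rho_{AB}-\sigma_{AB}}_{+}$ over purifications, and for sub-normalized states this is \emph{not} the quantity $\sqrt{1-F_*^2}$ built from the generalized fidelity: e.g.\ for $\rho=\ketbra{0}$ and $\sigma=\tfrac12\ketbra{0}$ the former evaluates to $\tfrac12$ while the latter is $\tfrac1{\sqrt2}$. Your ``loss-flag'' reduction is calibrated to $F_*$ and so does not literally produce the optimizer for this definition. The fix is short: for rank-one $\rho_{AC}=\ketbra{\psi}$, $\sigma_{AC}=\ketbra{\varphi}$ with $\tr[\rho_{AC}]=p$, $\tr[\sigma_{AC}]=q$ one computes directly that
\begin{equation*}
    \norm{\rho_{AC}-\sigma_{AC}}_{+}=\tfrac12\sqrt{(p+q)^2-4\abs{\braket{\psi}{\varphi}}^2}+\tfrac12\abs{p-q},
\end{equation*}
which, at fixed $p,q$, is monotone decreasing in $\abs{\braket{\psi}{\varphi}}$; hence minimizing $\norm{\cdot}_{+}$ over purifications is the same as maximizing the overlap, and ordinary Uhlmann guarantees that the maximal overlap is attained with $\rho_{ABR}$ held fixed, provided $BR$ is chosen large enough (pad $R$ if necessary). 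With that substitution in place, your argument goes through.
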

To quantify the amount of randomness in the outputs $X$ and $Y$, we will use the following entropic quantities.

\begin{definition}[{Rényi entropies \cite{Muller_Lennert_2013,Wilde_2014}}] \label{def:renyi_divergence}
    Let $\alpha \in \left[ \frac{1}{2},\infty \right]$, $\rho \in \Ssub{A}$ and $\sigma \geq 0$. Define the \emph{sandwiched Rényi divergence} of order $\alpha$ as
    \begin{equation}
    \begin{aligned}
        D_{\alpha}(\rho,\sigma) \coloneqq & \begin{cases}
            \frac{1}{\alpha - 1}\log\left( \tr[\left( \sigma^{\frac{1 - \alpha}{2\alpha}}\rho\sigma^{\frac{1 - \alpha}{2\alpha}} \right)^{\alpha}] \right) & \text{if $(\alpha < 1 \land \rho \not\perp \sigma)$ or $(\mathrm{supp}(\rho) \subseteq \mathrm{supp}(\sigma))$} \\
            +\infty & \text{otherwise}
        \end{cases}.
    \end{aligned}
    \end{equation}
    Let $\rho_{AB} \in \Ssub{AB}$. Define the \emph{sandwiched conditional Rényi entropy}
    \begin{equation}
    \begin{aligned}
        H_{\alpha}^{\downarrow}\left( A|B \right)_{\rho} \coloneqq & - D_{\alpha}\left( \rho_{AB},\mathds{1}_{A} \otimes \rho_{B} \right) \\
        H_{\alpha}^{\uparrow}\left( A|B \right)_{\rho} \coloneqq & \max_{\sigma_{B}} - D_{\alpha}\left( \rho_{AB},\mathds{1}_{A} \otimes \sigma_{B} \right).
    \end{aligned}
    \end{equation}
    We also use the standard notation
    $H_{\min} \coloneqq H_{\infty}^{\uparrow}$.
\end{definition}
\begin{remark}
    In \cref{def:renyi_divergence} we use the convention from \cite{Wilde_2014} without the normalization by $\tr[\rho]$ as is done in \cite{Muller_Lennert_2013,Tomamichel_2016}. Note that this has no impact on the definition of $H_{\min}$.
\end{remark}

\begin{definition}[Smooth min-entropy]
    Let $\rho_{AB} \in \Ssub{AB}$ and $0 \leq \varepsilon < \sqrt{\tr[\rho]}$. The \emph{conditional smooth min-entropy} of $A$ given $B$ is defined by
    \begin{equation}
        H_{\min}^{\varepsilon} (A|B)_{\rho} \coloneqq \sup_{\tilde{\rho} \in \mathcal{B}_{\rho}^{\varepsilon}}H_{\min}(A|B)_{\tilde{\rho}}.
    \end{equation}
    Similarly, we define
    \begin{equation}
        H_{\min}^{\downarrow,\epsilon}(A|B)_\rho \coloneqq \sup_{\tilde{\rho} \in \mathcal{B}_\rho^\varepsilon} H_{\infty}^\downarrow(A|B)_{\tilde{\rho}}.
    \end{equation}
    In both definitions we use $\mathcal{B}_\rho^\varepsilon \coloneqq \{\tilde{\rho}_{AB} \in \Ssub{AB}: P(\tilde{\rho}_{AB},\rho_{AB}) \leq \varepsilon \}$.
\end{definition}

\section{Two-process extractors} \label{sec:two_process_extractors}

As explained in the introduction, the objective is to use $X$ and $Y$ to produce an almost uniformly random bitstring $Z$. Naturally, for this one needs a measure for how close $Z$ is to a perfectly random bitstring. We will characterize the quality of $Z$ in terms of the trace distance, as is commonly done in cryptography \cite{Ben-Or_2005, Renner_2006, Portmann_2022}. Let us introduce the following terminology.
\begin{definition}
    Let $\rho_{XYA}$ be a quantum state where $X$ and $Y$ are classical. Given some function $\mathrm{Ext}: \mathcal{X} \times \mathcal{Y} \rightarrow \mathcal{Z}$, we say that $Z = \mathrm{Ext}(X, Y)$ is \emph{$\varepsilon$-random relative to $A$} if
    \begin{equation}
        \frac{1}{2} \norm{\mathrm{Ext}_{Z|XY}[\rho_{XYA}] - \omega_{Z} \otimes \rho_{A}}_1 \leq \varepsilon,
    \end{equation}
    where $\omega_Z$ is the maximally mixed state on $Z$. Similarly, we say that $Z = \mathrm{Ext}(X, Y)$ is \emph{$\varepsilon$-random relative to $YA$} if
    \begin{equation}
        \frac{1}{2} \norm{\mathrm{Ext}_{ZY|XY}[\rho_{XYA}] - \omega_{Z} \otimes \rho_{YA}}_1 \leq \varepsilon.
    \end{equation}
\end{definition}
The above definition can be understood as requiring that $\rho_{ZA}$ behaves as $\omega_Z \otimes \rho_A$ except with probability $\varepsilon$ \cite{Ferradini_2025}.

As stated in the introduction, our goal is to find a function $\mathrm{Ext}$ such that $Z = \mathrm{Ext}(X, Y)$ is $\varepsilon$-random whenever $X$ and $Y$ were produced by causally independent and sufficiently random processes (see \cref{fig:setup_circuit}). This motivates the following definition.
\begin{definition}[Two-process extractor] \label{def:two_process_extractor}
    Let $k_1, k_2, \varepsilon \geq 0$.
    We call a function $\mathrm{Ext}: \{0,1\}^{n_1} \times \{0,1\}^{n_2} \rightarrow \{0,1\}^m$ a \emph{$(k_1, k_2, \varepsilon)$-weak two-process extractor}
    if for all pure states $\rho_{AB}$ and all instruments $\cM_{XS|A}$ and $\cN_{YT|B}$ with 
    \begin{equation}
        H_{\min}(X|SB)_{\cM[\rho]} \geq k_1 \quad \text{and} \quad H_{\min}(Y|TA)_{\cN[\rho]} \geq k_2,
    \end{equation}
    the state $\rho_{XYST}^{\mathrm{out}} \coloneqq \left(\cM_{XS|A} \otimes \cN_{YT|B}\right)[\rho_{AB}]$ is such that $Z = \mathrm{Ext}(X,Y)$ is $\varepsilon$-random relative to $ST$.
    \vspace{5.5pt}

    Similarly, we call $\mathrm{Ext}$ a \emph{$(k_1, k_2, \varepsilon)$ two-process extractor strong in $Y$}, if for all instruments and states as above with 
    \begin{equation}
        H_{\min}(X|SB)_{\cM[\rho]} \geq k_1 \quad \text{and} \quad H_{\min}(Y|A)_{\cN[\rho]} \geq k_2,
    \end{equation}
    the state $\rho_{XYST}^{\mathrm{out}}$ 
    is such that $Z=\mathrm{Ext}(X,Y)$ is $\varepsilon$-random relative to $YST$.
\end{definition}

\begin{remark}[Purity of input state] \label{rem:non_pure_inputs}
    \Cref{def:two_process_extractor} requires the input state $\rho_{AB}$ to be pure. This is mostly for convenience of notation. One can easily apply \cref{def:two_process_extractor} to non-pure $\rho_{AB}$. For this, let $\rho_{AB}$ be an arbitrary density operator with purification $\rho_{ABR}$. Let us define $\rho^\mathrm{Ext}_{ZST} = (\mathrm{Ext}_{Z|XY} \circ \cM_{XS|A} \otimes \cN_{YT|B} \otimes \tr_R)[\rho_{ABR}]$.
    We can then apply \cref{def:two_process_extractor} to $\rho_{ABR}$, $\cM_{XS|A}$ and $\cN_{YT|B} \otimes \tr_R$ to bound
    \begin{equation}
        \frac{1}{2} \norm{\rho^{\mathrm{Ext}}_{ZST} - \omega_Z \otimes \rho^\mathrm{Ext}_{ST}}_1 \leq \varepsilon.
    \end{equation}
    Note, however, that the entropy conditions now need to be applied to the purification $\rho_{ABR}$. More precisely, they now read
    \begin{equation}
        H_{\min}(X|SBR)_{\cM[\rho]} \geq k_1 \quad \text{and} \quad H_{\min}(Y|TA)_{(\cN \otimes \tr)[\rho]} \geq k_2
    \end{equation}
    for weak extractors and
    \begin{equation}
        H_{\min}(X|SBR)_{\cM[\rho]} \geq k_1 \quad \text{and} \quad H_{\min}(Y|A)_{(\cN \otimes \tr)[\rho]} \geq k_2
    \end{equation}
    for strong extractors. The above conditions can be understood as requiring that $\cM$ produces new entropy instead of simply passing along the entropy already contained in $\rho_{AB}$.
    
    Above, we decided to apply \cref{def:two_process_extractor} to the channels $\cM_{XS|A}$ and $\cN_{YT|B} \otimes \tr_R$. Alternatively, one could also use the channels $\cM_{XS|A} \otimes \tr_R$ and $\cN_{YT|B}$ to swap the $R$ system between the two entropies.
\end{remark}

\begin{remark}[Alternative model for randomness extraction] \label{rem:alt_model}
    In \cref{sec:alternative_model}, we consider a different model in which only $Y$ is produced by applying the instrument $\cN$, whereas $X$ is already part of the initial state. For some applications, such as device-independent randomness amplification considered in \cref{sec:quantum_DIRA}, this model can be more convenient. We show that this model is equivalent to the notion of two-process extractors given above.
\end{remark}

\section{Extracting a single bit}
\label{sec:ip_security}

A well-known extractor for independent $X$ and $Y$ is the inner product construction \cite{Vazirani_1985,Chor_1988}. We will first define the inner product construction and then show that it can also be used to extract randomness in our model.

\begin{definition}[Inner product (IP) construction]
  Let $x$ and $y$ be bitstrings of length $n$. Define the \emph{inner product construction} $\mathrm{IP}^n: \{0, 1\}^n \times \{0, 1\}^n \rightarrow \{0, 1\}$ by
  \begin{equation}
    \mathrm{IP}^n(x, y) \coloneqq x \cdot y = \sum_i x_i y_i,
  \end{equation}
  where addition is modulo 2.
\end{definition}

The following lemma shows that the inner product construction can be used to extract randomness in a slightly different setup from what is considered in \cref{def:two_process_extractor}. More precisely, it considers the scenario where only $Y$ is produced by an instrument $\cN_{YT|B}$ whereas $X$ is already part of the input state $\rho_{XB}$ (see also \cref{sec:alternative_model} and \cref{rem:alt_model}).

\begin{lemma} \label{lem:ip_ext_strong}
    Let $\rho_{XB}$ be a cq state and $\cN_{YT|B}$ be an instrument. Define $\rho_{XYT}^{\text{out}} \coloneqq \cN_{YT|B}\left[ \rho_{XB} \right]$,
    then, for any $\sigma_{B} \in \Ssub{B}$, $Z = \mathrm{IP}^n (X,Y)$ is $\varepsilon$-random relative to $YT$ for
    \begin{equation}
        \varepsilon = \frac{1}{2} \sqrt{2^{n - k_1 - k_2}}
    \end{equation}
    where
    \begin{equation}
        k_{1} \coloneqq -D_{2}\left( \rho_{XB},\mathds{1}_{X} \otimes \sigma_{B} \right)
        \quad \text{and} \quad 
        k_{2} \coloneqq -\log\left( \sum_{y}\tr[\left( \sigma_{B}^{1/4}\left( \cN_{T|B}^{y} \right)^{*}[\mathds{1}_{T}]\sigma_{B}^{1/4} \right)^{2}]\right).
    \end{equation}
\end{lemma}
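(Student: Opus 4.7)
The plan is to expand the extractor via Fourier analysis over $\mathbb{F}_2$ and then apply a $\sigma_B$-weighted Cauchy-Schwarz. Writing $\rho_{XB} = \sum_x \ketbra{x}_X \otimes \rho^x_B$ and $\cN_{YT|B}[\,\cdot\,] = \sum_y \ketbra{y}_Y \otimes \cN^y_{T|B}[\,\cdot\,]$, and using $\mathds{1}[x \cdot y = z] - \tfrac12 = \tfrac12 (-1)^{x \cdot y + z}$, the deviation becomes
\begin{equation*}
\mathrm{Ext}_{ZY|XY}[\rho^{\text{out}}] - \omega_Z \otimes \rho^{\text{out}}_{YT} \;=\; \tfrac{1}{2} (\ketbra{0} - \ketbra{1})_Z \otimes \Delta_{YT}, \qquad \Delta_{YT} \coloneqq \sum_y \ketbra{y}_Y \otimes \cN^y_{T|B}[\hat\rho^y_B],
\end{equation*}
where $\hat\rho^y_B \coloneqq \sum_x (-1)^{x \cdot y} \rho^x_B$. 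Block-diagonality in $Z$ gives $\|\mathrm{Ext}[\rho^{\text{out}}] - \omega_Z \otimes \rho^{\text{out}}_{YT}\|_1 = \|\Delta_{YT}\|_1$, so the task reduces to proving $\|\Delta_{YT}\|_1 \leq \sqrt{2^{n-k_1-k_2}}$.

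Next, pick Hermitian $M^y_T$ with $-\mathds{1}_T \leq M^y_T \leq \mathds{1}_T$ achieving $\|\cN^y[\hat\rho^y_B]\|_1 = \tr[M^y_T\, \cN^y[\hat\rho^y_B]]$ (e.g., the sign of $\cN^y[\hat\rho^y_B]$). Passing $\cN^y$ to its adjoint and collecting terms yields
\begin{equation*}
\|\Delta_{YT}\|_1 \;=\; \tr[L_{XB}\, \rho_{XB}], \qquad L_{XB} = \sum_x \ketbra{x}_X \otimes L^x_B, \qquad L^x_B = \sum_y (-1)^{x\cdot y} (\cN^y)^*[M^y_T].
\end{equation*}
Splitting $L_{XB}\, \rho_{XB}$ as the product of $A = (\mathds{1}_X \otimes \sigma_B^{1/4})\, L_{XB}\, (\mathds{1}_X \otimes \sigma_B^{1/4})$ and $B = (\mathds{1}_X \otimes \sigma_B^{-1/4})\, \rho_{XB}\, (\mathds{1}_X \otimes \sigma_B^{-1/4})$ and applying the Hilbert-Schmidt Cauchy-Schwarz gives $\|\Delta_{YT}\|_1^2 \leq \tr[A^2] \cdot \tr[B^2]$. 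The factor $\tr[B^2] = 2^{-k_1}$ is precisely the definition of $k_1$. For $\tr[A^2]$, use $A^2 = \sum_x \ketbra{x}_X \otimes (\sigma_B^{1/4} L^x_B \sigma_B^{1/4})^2$, expand $L^x_B$, and sum over $x$ using the Walsh-Hadamard orthogonality $\sum_x (-1)^{x \cdot (y + y')} = 2^n \delta_{y,y'}$ to collapse the double sum to $\tr[A^2] = 2^n \sum_y \tr[(\sigma_B^{1/4}(\cN^y)^*[M^y_T]\sigma_B^{1/4})^2]$.

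The final step is to compare this to the quantity with $\mathds{1}_T$ in place of $M^y_T$. Let $C^y = \sigma_B^{1/4}(\cN^y)^*[M^y_T]\sigma_B^{1/4}$ and $D^y = \sigma_B^{1/4}(\cN^y)^*[\mathds{1}_T]\sigma_B^{1/4}$, both Hermitian. Since $(\cN^y)^*$ and conjugation by $\sigma_B^{1/4}$ preserve positive-operator order, $-\mathds{1}_T \leq M^y_T \leq \mathds{1}_T$ propagates to $-D^y \leq C^y \leq D^y$, so $D^y \pm C^y \geq 0$ and
\begin{equation*}
\tr[(D^y)^2] - \tr[(C^y)^2] \;=\; \tr[(D^y - C^y)(D^y + C^y)] \;=\; \tr\bigl[(D^y - C^y)^{1/2}(D^y + C^y)(D^y - C^y)^{1/2}\bigr] \;\geq\; 0.
\end{equation*}
Summing over $y$ bounds $\sum_y \tr[(C^y)^2]$ by $2^{-k_2}$, so $\|\Delta_{YT}\|_1^2 \leq 2^n \cdot 2^{-k_2} \cdot 2^{-k_1}$ and $\varepsilon = \tfrac{1}{2}\|\Delta_{YT}\|_1 \leq \tfrac{1}{2}\sqrt{2^{n-k_1-k_2}}$. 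The main subtlety is precisely this trace inequality: the stronger operator inequality $(C^y)^2 \leq (D^y)^2$ fails in general for non-commuting Hermitians with $-D^y \leq C^y \leq D^y$, and one must stay at trace level via the identity $\tr[(D-C)(D+C)] = \tr[D^2] - \tr[C^2]$ (the commutator having vanishing trace). Singular $\sigma_B$ is handled either by pseudoinverses on $\operatorname{supp}(\sigma_B)$ or by continuity; the case $\operatorname{supp}(\rho_{XB}) \not\subseteq \mathds{1}_X \otimes \operatorname{supp}(\sigma_B)$ forces $k_1 = -\infty$ and the bound is trivial.
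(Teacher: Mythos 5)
Your proof is correct, and the overall architecture matches the paper's: rewrite the deviation from uniform as a Fourier/XOR sum $\hat\rho^y_B = \sum_x (-1)^{x\cdot y}\rho^x_B$, introduce the dual optimizers $M^y_T$ for the 1-norm, pass to adjoints, apply the $\sigma_B$-weighted Hilbert--Schmidt Cauchy--Schwarz, and use Walsh--Hadamard orthogonality $\sum_x (-1)^{x\cdot(y+y')} = 2^n\delta_{y,y'}$ to collapse the double sum.

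The one place where you genuinely diverge is the final bound $\sum_y\tr[(C^y)^2]\le\sum_y\tr[(D^y)^2]$, with $C^y = \sigma_B^{1/4}(\cN^y)^*[M^y]\sigma_B^{1/4}$ and $D^y = \sigma_B^{1/4}(\cN^y)^*[\mathds{1}_T]\sigma_B^{1/4}$. The paper decomposes $M^y = M^{y,+}-M^{y,-}$ into positive and negative parts, invokes an auxiliary lemma ($\tr[(K S K^*)^2]\le\tr[(K(S^++S^-)K^*)^2]$) to replace $M^y$ by $|M^y| = M^{y,+}+M^{y,-}$, then uses $|M^y|\le\mathds{1}_T$ and the fact that $0\le A\le B$ implies $\tr[A^2]\le\tr[B^2]$. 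You instead observe directly that positivity of $(\cN^y)^*$ and conjugation by $\sigma_B^{1/4}$ give the order relation $-D^y\le C^y\le D^y$, and then exploit the exact identity $\tr[D^2]-\tr[C^2]=\tr[(D-C)(D+C)]=\tr\bigl[(D-C)^{1/2}(D+C)(D-C)^{1/2}\bigr]\ge 0$, where the mixed terms cancel under the trace. Your route is shorter, avoids the auxiliary lemma entirely, and makes clear exactly why only a trace-level (not operator-level) inequality holds; it is a modest but genuine simplification. Your handling of singular $\sigma_B$ (support condition or continuity, with $k_1 = -\infty$ making the bound vacuous when $\operatorname{supp}(\rho_B)\not\subseteq\operatorname{supp}(\sigma_B)$) also matches the paper.
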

\begin{proof}
    Let us write
    \begin{equation}
        \rho_{XB} = \sum_x \ketbra{x}_X \otimes \rho_{B \land X=x}
    \end{equation}
    and denote by $\rho_{ZYT}^\mathrm{IP} \coloneqq \mathrm{IP}^n_{ZY|XY}[\cN_{YT|B}[\rho_{XB}]]$.
    Then
    \begin{equation}
    \begin{aligned}
        & \frac{1}{2} \norm{\rho_{ZYT}^{\text{IP}} - \omega_{Z} \otimes \rho_{YT}^{\text{IP}}}_{1} \\
        =& \frac{1}{2} \sum_y \norm{\rho_{ZT \land Y=y}^{\text{IP}} - \omega_{Z} \otimes \rho_{T \land Y=y}^{\text{IP}}}_{1} \\
        =& \frac{1}{2}\sum_{y} \begin{aligned}[t]
            &\norm{\rho_{T \land Z=0,Y=y}^{\text{IP}} - \frac{1}{2}\left( \rho_{T \land Z = 0,Y = y}^{\text{IP}} + \rho_{T \land Z=1,Y=y}^{\text{IP}} \right)}_{1} \\
            &+ \norm{\rho_{T \land Z=1,Y=y}^{\text{IP}} - \frac{1}{2}\left( \rho_{T \land Z=0,Y=y}^{\text{IP}} + \rho_{T \land Z=1,Y=y}^{\text{IP}} \right)}_{1}
        \end{aligned} \\
     =& \frac{1}{2}\sum_{y}\norm{\rho_{T \land Z=0,Y=y}^{\text{IP}} - \rho_{T \land Z=1,Y=y}^{\text{IP}}}_{1} \\
     =& \frac{1}{2}\sum_{y}\norm{\sum_{z}\rho_{T \land Z=z,Y=y}^{\text{IP}} (-1)^{z}}_{1} \\
     =& \frac{1}{2}\sum_{y}\norm{\sum_{x}\cN_{T|B}^{y}[\rho_{B \land X=x}] (-1)^{x \cdot y}}_{1} \\
     =& \frac{1}{2} \sum_{y} \max_{-\mathds{1} \leq \Lambda^{y} \leq \mathds{1}} \tr[\Lambda_{T}^{y}\sum_{x}\cN_{T|B}^{y}[\rho_{B \land X=x}]( - 1)^{x \cdot y}] \\
     =& \frac{1}{2}\max_{-\mathds{1} \leq \Lambda^{y} \leq \mathds{1}}\tr[\sum_{x,y}\rho_{B \land X=x}\left( \cN_{T|B}^{y} \right)^{*}[ \Lambda_{T}^{y}](-1)^{x \cdot y}] \\
     =& \frac{1}{2}\max_{-\mathds{1} \leq \Lambda^{y} \leq \mathds{1}}\tr[ \sum_{x}\sigma_{B}^{-1/4}\rho_{B \land X=x}\sigma_{B}^{-1/4}\left( \sum_{y}\sigma_{B}^{1/4}\left( \cN_{T|B}^{y} \right)^{*}\left[ \Lambda_{T}^{y} \right]\sigma_{B}^{1/4} (-1)^{x \cdot y} \right) ]
    \end{aligned}
    \end{equation}
    holds for any $\sigma_{B}$ with $\mathrm{supp}(\rho_{B}) \subseteq \mathrm{supp}(\sigma_{B})$ (and is bounded by $+\infty$ otherwise). Let us define the Hermitian operators
    \begin{equation}
    \begin{aligned}
        P_{XB} \coloneqq &\sigma_{B}^{- 1/4}\rho_{XB}\sigma_{B}^{- 1/4}, \\
        Q_{XB} \coloneqq &\sum_{x}\ketbra{x}_{X} \otimes \left( \sum_{y}\sigma_{B}^{1/4}\left( \cN_{T|B}^{y} \right)^{*}[\Lambda_{T}^{y}]\sigma_{B}^{1/4}( - 1)^{x \cdot y} \right).
    \end{aligned}
    \end{equation}
    The Cauchy-Schwarz inequality for the Hilbert-Schmidt inner product gives 
    \begin{equation}
    \begin{aligned}
        & \abs{\tr[ \sum_{x}\sigma_{B}^{-1/4}\rho_{B \land X=x}\sigma_{B}^{-1/4}\left( \sum_{y}\sigma_{B}^{1/4}\left( \cN_{T|B}^{y} \right)^{*}[\Lambda_{T}^{y}] \sigma_{B}^{1/4}(-1)^{x \cdot y} \right)]} \\
        =& \abs{{\tr[P_{XB}Q_{XB}]}} \\
        \leq& \sqrt{\tr[P_{XB}^{2}]} \sqrt{\tr[Q_{XB}^{2}]}. \\
    \end{aligned}
    \end{equation}
    The term under the first square root equals 
    \begin{equation}
        \tr[P_{XB}^2] = \tr[\left( \sigma_{B}^{-1/4}\rho_{XB}\sigma_{B}^{-1/4} \right)^{2}] = 2^{D_{2}(\rho_{XB},\mathds{1}_{X} \otimes \sigma_{B})} = 2^{-k_1}.
    \end{equation}
    For the second square root, we compute 
    \begin{equation}
    \begin{aligned}
        \tr[Q_{XB}^2]
        =& \sum_{x}\tr[\left( \sum_{y}\sigma_{B}^{1/4}\left( \cN_{T|B}^{y} \right)^{*}[\Lambda_{T}^{y}] \sigma_{B}^{1/4}(-1)^{x \cdot y} \right)^{2}] \\
        =&\sum_{x,y,y'}\tr[ \left( \sigma_{B}^{1/4}\left( \cN_{T|B}^{y} \right)^{*} [\Lambda_{T}^{y}] \sigma_{B}^{1/4} \right) \left( \sigma_{B}^{1/4}\left( \cN_{T|B}^{y'} \right)^{*}[\Lambda_{T}^{y'}]\sigma_{B}^{1/4} \right) (-1)^{x \cdot (y + y')}] \\
        =& \sum_{y,y'}\tr[\left( \sigma_{B}^{1/4}\left( \cN_{T|B}^{y} \right)^{*}[\Lambda_{T}^{y}] \sigma_{B}^{1/4} \right)\left( \sigma_{B}^{1/4}\left( \cN_{T|B}^{y'} \right)^{*}[\Lambda_{T}^{y'}]\sigma_{B}^{1/4} \right) \sum_{x}(-1)^{x \cdot (y + y')}] \\
        =& \sum_{y,y'}\tr[\left( \sigma_{B}^{1/4}\left( \cN_{T|B}^{y} \right)^{*}[\Lambda_{T}^{y}] \sigma_{B}^{1/4} \right)\left( \sigma_{B}^{1/4}\left( \cN_{T|B}^{y'} \right)^{*}[\Lambda_{T}^{y'}] \sigma_{B}^{1/4} \right)2^{n}\delta_{y = y'}] \\
        =& 2^{n}\sum_{y}\tr[\left( \sigma_{B}^{1/4}\left( \cN_{T|B}^{y} \right)^{*}[\Lambda_{T}^{y}] \sigma_{B}^{1/4} \right)^{2}]. \\
    \end{aligned}
    \end{equation}
    Next, we decompose $\Lambda_T^{y}$ into its positive and negative parts as $\Lambda_T^{y} = \Lambda_T^{y,+} - \Lambda_T^{y,-}$. Applying \cref{lem:tr_square_ineq} gives
    \begin{equation} \label{eq:N_adj_op_ineq}
         \tr[\left( \sigma_{B}^{1/4}\left( \cN_{T|B}^{y} \right)^{*}[\Lambda_{T}^{y}] \sigma_{B}^{1/4} \right)^{2}] \leq \tr[\left( \sigma_{B}^{1/4} \left(\cN_{T|B}^y\right)^*[\Lambda_T^{y,+} + \Lambda_{T}^{y,-}] \sigma_{B}^{1/4} \right)^2].
    \end{equation}
    By the complete positivity of $\left(\cN_{T|B}^y\right)^*$, we have
    \begin{equation}
        \left(\cN_{T|B}^y\right)^*[\Lambda_T^{y,+} + \Lambda_{T}^{y,-}] \leq \left(\cN_{T|B}^y\right)^*[\mathds{1}_T],
    \end{equation}
    where we used that $\Lambda_{T}^{y,+} + \Lambda_{T}^{y,-} = \abs{\Lambda_{T}^y} \leq \mathds{1}_T$. Inserting this into \cref{eq:N_adj_op_ineq} gives
    \begin{equation}
         \tr[\left( \sigma_{B}^{1/4}\left( \cN_{T|B}^{y} \right)^{*}[\Lambda_{T}^{y}] \sigma_{B}^{1/4} \right)^{2}] \leq \tr[\left( \sigma_{B}^{1/4} \left(\cN_{T|B}^y\right)^*[\mathds{1}_T] \sigma_{B}^{1/4} \right)^2].
    \end{equation}
    Putting everything together, we find for the second square root that
    \begin{equation}
        \tr[Q_{XB}^2] \leq 2^n \sum_y \tr[\left( \sigma_{B}^{1/4} \left(\cN_{T|B}^y\right)^*[\mathds{1}_T] \sigma_{B}^{1/4} \right)^2] = 2^{n-k_2}.
    \end{equation}
    Hence, in total
    \begin{equation}
    \begin{aligned}
        \abs{\tr[P_{XB}Q_{XB}]} \leq \sqrt{2^{n - k_{1} - k_{2}}}
    \end{aligned}
    \end{equation}
    and the lemma follows.
\end{proof}

Informally, \cref{lem:ip_ext_strong} above states that if $Y$ is produced by a sufficiently random process (quantified by $k_2$), then $X$ and $Y$ can be used to extract randomness using the inner product construction.

The expression for $k_2$ in \cref{lem:ip_ext_strong} is a bit unwieldy to work with. Fortunately, we can relate it to the Rényi entropy of order two of an appropriately chosen state, as the following lemma shows.

\begin{lemma} \label{lem:channel_min_entropy}
    Let $\cN_{YT|B}$ be an instrument and $\rho_{B}$ be a quantum state with purification $\rho_{B R}$. Then
  \begin{equation} \label{eq:k2_entropy_equality}
    -\log(\sum_y \tr[\left( \rho_B^{1/4} \left(\cN_{T|B}^y\right)^*[\mathds{1}_T] \rho_B^{1/4} \right)^2 ]) \geq H_2^\downarrow(Y|R)_{\cN[\rho]},
  \end{equation}
  and equality holds if $\cN_{YT|B}$ is trace-preserving.
\end{lemma}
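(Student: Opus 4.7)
The plan routes through an auxiliary quantity: defining $\tau_R \coloneqq \cN[\rho]_R$, I note that $H_2^\downarrow(Y|R)_{\cN[\rho]} = -D_2(\cN[\rho]_{YR}, \mathds{1}_Y\otimes\tau_R)$ by definition. I will first compute $-D_2(\cN[\rho]_{YR}, \mathds{1}_Y\otimes\rho_R)$ and show it equals the LHS of \cref{eq:k2_entropy_equality}, and then relate the two divergences via antitonicity of $D_2$ in its second argument, using that $\tau_R \leq \rho_R$.

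For the first computation, I invoke the invariance of $H_2^\downarrow(Y|R)$ under isometries on $R$ (data processing for the sandwiched Rényi divergence) to work with the canonical purification $\rho_{BR} = \rho_B^{1/2}\ket{\Omega}\!\bra{\Omega}_{BR}\rho_B^{1/2}$, $R$ being a copy of $B$. The key ingredient is the ``transpose trick'' $K_B\ket{\Omega}_{BR} = K_R^T\ket{\Omega}_{BR}$ (remark after \Cref{tab:notation}) which, together with the definition of the adjoint, yields
\begin{equation*}
  \tr_T[\cN^y_{T|B}[\rho_{BR}]] = \rho_R^{1/2}\bigl((\cN^y_{T|B})^*[\mathds{1}_T]\bigr)^T_R\rho_R^{1/2},
\end{equation*}
with the transpose in the eigenbasis of $\rho_B$ (in which $\rho_R$ is also diagonal with the same spectrum). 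Substituting this into $D_2(\cN[\rho]_{YR}, \mathds{1}_Y\otimes\rho_R) = \log\tr[((\mathds{1}_Y\otimes\rho_R)^{-1/4}\cN[\rho]_{YR}(\mathds{1}_Y\otimes\rho_R)^{-1/4})^2]$, the $\rho_R^{-1/4}$ factors meet the $\rho_R^{1/2}$'s to give $\rho_R^{1/4}$, and transpose-invariance of the trace converts the result to the claimed LHS after negating.

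For the second step, summing the displayed formula over $y$ gives $\tau_R = \rho_R^{1/2}\bigl((\cN_{T|B})^*[\mathds{1}_T]\bigr)^T_R\rho_R^{1/2}$. Sub-unitality of $(\cN_{T|B})^*$ forces $(\cN_{T|B})^*[\mathds{1}_T] \leq \mathds{1}_B$, so $\tau_R\leq\rho_R$, and in fact $\tau_R = \rho_R$ precisely when $\cN_{T|B}$ is trace-preserving. Antitonicity of $D_2$ in its second argument, which I verify via the rewriting $\tr[(\sigma^{-1/4}\rho\sigma^{-1/4})^2] = \tr[(\rho^{1/2}\sigma^{-1/2}\rho^{1/2})^2]$ combined with Weyl's inequalities ($\lambda_i(A)\geq\lambda_i(C)$ for PSD $A\geq C$ implies $\tr[A^2]\geq\tr[C^2]$), then gives $H_2^\downarrow(Y|R)_{\cN[\rho]} \leq -D_2(\cN[\rho]_{YR}, \mathds{1}_Y\otimes\rho_R)$, with equality when $\tau_R=\rho_R$.

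The main obstacle I anticipate is the careful bookkeeping of transposes in the first step. Support subtleties for $\rho_R^{-1/2}$ and $\tau_R^{-1/2}$ are handled automatically by the conventions of \Cref{def:renyi_divergence}, since $\mathrm{supp}(\cN[\rho]_{YR}) \subseteq \mathds{1}_Y\otimes\mathrm{supp}(\tau_R) \subseteq \mathds{1}_Y\otimes\mathrm{supp}(\rho_R)$.
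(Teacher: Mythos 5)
Your proof is correct and is essentially the same argument as the paper's: it reduces to the canonical purification, uses the transpose/adjoint identity to identify $\sigma_{R\wedge Y=y}$ with $\rho_R^{1/2}((\cN^y_{T|B})^*[\mathds{1}_T])^T\rho_R^{1/2}$, and then converts $\tau_R \leq \rho_R$ into the inequality via operator anti-monotonicity of $x\mapsto x^{-1/2}$ and monotonicity of $\tr[X^2]$. The only difference is presentational---you package the final step as antitonicity of $D_2$ in its second argument, whereas the paper carries out the same conjugation-and-anti-monotonicity bound inline on the trace expression.
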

\begin{proof}
    By the isometric invariance of $H_{2}^{\downarrow}$, it suffices to consider the following purification of $\rho_B$ (with $R=B'$)
    \begin{equation}
        {\hat{\rho}}_{BB'} \coloneqq \rho_{B}^{1/2}\Omega_{BB'}\rho_{B}^{1/2} = \left( \rho_{B'}^{1/2} \right)^{T}\Omega_{BB'}\left( \rho_{B'}^{1/2} \right)^{T}.
    \end{equation}
    Let us introduce
    \begin{equation}
        \sigma_{YTB'} \coloneqq \cN_{YT|B}[\hat{\rho}_{BB'}] = \sum_{y} \ketbra{y}_{Y} \otimes \sigma_{TB' \land Y=y}.
    \end{equation}
    By the CPTNI property of $\cN$, we have that
    \begin{equation}
        \sigma_{B'} = \tr_{YT}\left[ \cN_{YT|B}\left[ {\hat{\rho}}_{BB'} \right] \right] \leq \tr_{B}[\hat{\rho}_{BB'}] = \rho_{B'}^{T}.
    \end{equation}
    We compute
    \begin{equation}
    \begin{aligned}
        \sigma_{B' \land Y=y} = & \tr_{T}\left[\cN_{T|B}^{y}[{\hat{\rho}}_{BB'}] \right] \\
         = & \tr_{B}\left[\left( \cN_{T|B}^{y} \right)^{*}\left[ \mathds{1}_{T} \right]{\hat{\rho}}_{BB'} \right] \\
         = & \left( \rho_{B'}^{1/2} \right)^{T}\tr_{B}\left[ \left( \cN_{T|B}^{y} \right)^{*}\left[ \mathds{1}_{T} \right]\Omega_{BB'} \right]\left( \rho_{B'}^{1/2} \right)^{T} \\
         = & \left( \rho_{B'}^{1/2} \right)^{T}\left( \left( \cN_{T|B'}^{y} \right)^{*}[\mathds{1}_{T}] \right)^{T}\left( \rho_{B'}^{1/2} \right)^{T},
    \end{aligned}
    \end{equation}
    and hence
    \begin{equation}
        \left( \cN_{T|B'}^{y} \right)^{*}[\mathds{1}_{T}] = \rho_{B'}^{-1/2}\sigma_{B' \land Y=y}^{T}\rho_{B'}^{-1/2}.
    \end{equation}
    Inserting this expression into the LHS of \cref{eq:k2_entropy_equality} gives
    \begin{align*}
         & \sum_{y}\tr[\left( \rho_{B}^{1/4}\left( \cN_{T|B}^{y} \right)^{*}[\mathds{1}_{T}] \rho_{B}^{1/4} \right)^{2}] \\
         =& \sum_{y}\tr[\left( \rho_{B}^{- 1/4}\sigma_{B \land Y=y}^{T}\rho_{B}^{- 1/4} \right)^{2}] \\
         =& \sum_{y}\tr[\left( \left( \sigma_{B \land Y=y}^{T} \right)^{1/2}\rho_{B}^{- 1/2}\left( \sigma_{B \land Y=y}^{T} \right)^{1/2} \right)^{2}] \\
         \leq& \sum_{y}\tr[\left( \left( \sigma_{B' \land Y=y}^{T} \right)^{1/2}\left( \sigma_{B'}^{T} \right)^{- 1/2}\left( \sigma_{B' \land Y=y}^{T} \right)^{1/2} \right)^{2}] \\
         =& \sum_{y}\tr[\left( \sigma_{B'}^{- 1/4}\sigma_{B' \land Y=y}\sigma_{B'}^{- 1/4} \right)^{2}] \\
         =& 2^{- H_{2}^{\downarrow}\left( Y|B' \right)_{\cN\left[ \hat{\rho} \right]}},
    \end{align*}
    where the inequality follows from $\sigma_{B}^T \leq \rho_B$ and the operator anti-monotonicity of $x \mapsto x^{- 1/2}$ (see, for instance, \cite[Table 2.2]{Tomamichel_2016}). For trace-preserving channels, we have that $\sigma_{B}^T = \rho_B$ and the inequality above becomes an equality.
\end{proof}
Combining \cref{lem:ip_ext_strong,lem:channel_min_entropy} gives us the main result of this section.

\begin{theorem} \label{thm:channel_ext}
    The function $\mathrm{IP}^n$ is a $(k_1, k_2, \varepsilon)$ two-process extractor, strong in $Y$, with
    \begin{equation}
        \varepsilon = \frac{1}{2} \sqrt{2^{n - k_1 - k_2}}.
    \end{equation}
\end{theorem}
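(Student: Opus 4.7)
The strategy is to reduce \cref{thm:channel_ext} to \cref{lem:ip_ext_strong} by absorbing the first instrument into the input state. Set $\tilde{\rho}_{XSB} \coloneqq \cM_{XS|A}[\rho_{AB}]$, which is classical-quantum in $X$, and apply \cref{lem:ip_ext_strong} to $\tilde{\rho}_{XSB}$ with $SB$ playing the role of the ``$B$'' system and the extended instrument $\cN_{YT|B} \otimes \cI_S$ (whose quantum output register is $TS$) playing the role of $\cN$. Since $(\cN_{YT|B} \otimes \cI_S)[\tilde{\rho}_{XSB}] = \rho_{XYST}^{\mathrm{out}}$, this already yields a bound of the desired shape $\tfrac{1}{2}\sqrt{2^{n-k_1'-k_2'}}$ relative to $YST$. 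The remaining task is to choose a single $\sigma_{SB}$ in \cref{lem:ip_ext_strong} so that $k_1' \geq k_1$ and $k_2' \geq k_2$. The natural choice, and the one I would take, is $\sigma_{SB} = \tilde{\rho}_{SB}$.

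Bounding $k_1'$ reduces to the cq-Rényi inequality $H_{\min}(X|SB)_{\tilde\rho} \leq H_2^{\downarrow}(X|SB)_{\tilde\rho}$, because with $\sigma_{SB} = \tilde\rho_{SB}$ one has $k_1' = H_2^{\downarrow}(X|SB)_{\tilde\rho}$ by definition. I would prove this auxiliary inequality via the operational characterisation $H_{\min}(X|SB) = -\log p_{\mathrm{guess}}$ together with the ``Bayesian'' POVM $M^x_{SB} = \tilde\rho_{SB}^{-1/2}\tilde\rho^x_{SB}\tilde\rho_{SB}^{-1/2}$; a short cyclic-trace identity shows that its guessing probability equals $\sum_x \tr[(\tilde\rho_{SB}^{-1/4}\tilde\rho^x_{SB}\tilde\rho_{SB}^{-1/4})^2] = 2^{-H_2^{\downarrow}(X|SB)}$, giving $p_{\mathrm{guess}} \geq 2^{-H_2^{\downarrow}}$ and hence $H_{\min} \leq H_2^{\downarrow}$, so $k_1' \geq k_1$.

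For the second parameter, observe that $k_2'$ is exactly the left-hand side of \cref{lem:channel_min_entropy} applied to the instrument $\cN \otimes \cI_S$ and the state $\tilde{\rho}_{SB}$, so $k_2' \geq H_2^{\downarrow}(Y|R)_{(\cN \otimes \cI_S)[\pi]}$ for any purification $\pi_{SBR}$ of $\tilde{\rho}_{SB}$. I would take $\pi = W\rho_{AB}W^*$, where $W:A\to SR$ is a Stinespring dilation of the marginalised channel $\cM' \coloneqq \tr_X \circ \cM_{XS|A}$. Because $\cN_{YT|B}$ and $W$ act on disjoint tensor factors, $(\cN \otimes \cI_S)[\pi] = W\cN[\rho_{AB}]W^*$. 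Data processing for $H_2^{\downarrow}$ (tracing out $S$ from the conditioning) then gives $H_2^{\downarrow}(Y|R) \geq H_2^{\downarrow}(Y|SR)$, and isometric invariance of sandwiched Rényi divergences in the conditioning system yields $H_2^{\downarrow}(Y|SR)_{\cN[\pi]} = H_2^{\downarrow}(Y|A)_{\cN[\rho_{AB}]}$. Since $Y$ is classical, the same Bayesian-POVM argument as above gives $H_2^{\downarrow}(Y|A) \geq H_{\min}(Y|A) \geq k_2$, completing the bound $k_2' \geq k_2$.

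The main obstacle I expect is the cq inequality $H_{\min} \leq H_2^{\downarrow}$: it runs against the usual monotonicity of Rényi entropies in the order parameter, and its proof crucially exploits classicality of the unconditioned register through the Bayesian POVM. Once that point is secured, the remainder of the argument amounts to routine bookkeeping with data processing and isometric invariance, together with the observation that the ``other'' channel $\cM$ can be encapsulated into a Stinespring isometry that commutes with $\cN$.
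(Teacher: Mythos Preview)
Your approach is essentially the paper's: absorb $\cM$ into the state, apply \cref{lem:ip_ext_strong} with $\sigma_{SB}=\tilde\rho_{SB}$, and handle $k_2'$ via \cref{lem:channel_min_entropy} using a Stinespring dilation of $\tr_X\circ\cM$. Two small points are worth noting. First, the inequality $H_{\min}\le H_2^{\downarrow}$ is not the obstacle you think: it holds for \emph{arbitrary} states (the paper records this as \cref{lem:H2_Hmin_bound}, proved by a one-line operator inequality from $\rho_{AB}\le 2^{-k}\mathds{1}_A\otimes\sigma_B$), so your pretty-good-measurement argument, while correct in the cq case at hand, is more restrictive than necessary. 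Second, in this paper channels are CPTNI, so the Stinespring operator $W$ satisfies only $W^*W\le\mathds{1}$ and need not be an isometry; your claimed equality $H_2^{\downarrow}(Y|SR)_{\cN[\pi]}=H_2^{\downarrow}(Y|A)_{\cN[\rho]}$ should therefore be the inequality $\ge$ (by data processing for the CPTNI map $\rho\mapsto W\rho W^*$ on the conditioning system), which is the direction you need. The paper sidesteps this by passing to $H_{\min}$ \emph{before} undoing $W$, using that $H_{\min}$ can only increase under such a contraction; either route works.
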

\begin{proof}
    Let $\rho^\mathrm{out}_{XYST}$ be as in \cref{def:two_process_extractor}. Define ${\hat{\rho}}_{XSB} \coloneqq \cM_{XS|A}\left[ \rho_{AB} \right]$. Applying \cref{lem:ip_ext_strong} (with
    $\sigma_{SB} = {\hat{\rho}}_{SB}$) to ${\hat{\rho}}_{XSB}$ and
    $\mathcal{I}_{S} \otimes \cN_{YT|B}$ gives
    \begin{equation}
        \frac{1}{2} \norm{\mathrm{IP}^n_{ZY|XY}[\rho_{XYST}^{\text{out}}] - \omega_{Z} \otimes \rho_{YST}^{\text{out}}}_{1} \leq \frac{1}{2} \sqrt{2^{n - k'_{1} - k'_{2}}},
    \end{equation}
    with
    \begin{equation}
        k'_{1} = H_{2}^{\downarrow}(X|SB)_{\hat{\rho}} = H_{2}^{\downarrow}(X|SB)_{\cM[\rho]}
    \end{equation}
    and
    \begin{equation}
        k'_{2} = - \log(\sum_{y}\tr[\left( {\hat{\rho}}_{SB}^{1/4}\left( \mathcal{I}_{S} \otimes \cN_{T|B}^{y} \right)^{*}[\mathds{1}_{ST}] {\hat{\rho}}_{SB}^{1/4} \right)^{2}]).
    \end{equation}
    For $k_1'$, we immediately have
    \begin{equation}
        H_{2}^{\downarrow}(X|SB)_{\cM[\rho]} \geq H_{\min}(X|SB)_{\cM[\rho]} \geq k_1,
    \end{equation}
    where the first inequality follows from \cref{lem:H2_Hmin_bound}.
    For $k_2'$, consider the Stinespring dilation (see \cref{lem:stinespring}) $K_{SR|A}$ of $\tr_{X} \circ \cM_{XS|A}$. This means that $\sigma_{SRB} \coloneqq K_{SR|A} \rho_{AB} K_{SR|A}^*$ is a purification of $\hat{\rho}_{SB}$. Hence, by \cref{lem:channel_min_entropy}
    \begin{equation}
        - \log(\sum_{y}\tr[\left( {\hat{\rho}}_{SB}^{1/4}\left( \mathcal{I}_{S} \otimes \cN_{T|B}^{y} \right)^{*}[\mathds{1}_{ST}] {\hat{\rho}}_{SB}^{1/4} \right)^{2}]) \geq H_{2}^{\downarrow}\left( Y|R \right)_{\cN[\sigma]}.
    \end{equation}
    We can bound
    \begin{equation}
        H_{2}^{\downarrow}(Y|R)_{\cN[\sigma]} \geq H_{2}^{\downarrow}(Y|SR)_{\cN[\sigma]} \geq H_{\min}(Y|SR)_{\cN[\sigma]} \geq H_{\min}(Y|A)_{\cN[\rho]} \geq k_{2},
    \end{equation}
    where we used the data-processing inequality for $H_2^\downarrow$, \cref{lem:H2_Hmin_bound}, and that the min-entropy can only increase when applying $K_{SR|A}$.
\end{proof}
We conclude this section with two remarks regarding \cref{thm:channel_ext}.
\begin{remark}[Tightness of \cref{thm:channel_ext}] \label{rem:ip_tightness}
    The bound in \cref{thm:channel_ext} matches the classical bound shown in \cite{Chor_1988, Dodis_2004}. Furthermore, one can easily see that it is tight. For this, consider two bitstrings $X$ and $Y$ of length~$n$, such that $X$ is uniform on the first $n/2$ bits but fixed to zero on the second $n/2$ bits, whereas $Y$ is fixed to zero on the first $n/2$ bits but uniform on the second $n/2$ bits. Then clearly $X \cdot Y = 0$ and, hence, the inner-product construction fails.
\end{remark}

\begin{remark}[Relation to \cref{lem:ip_ext_strong}] \label{rem:channel_ext}
    \Cref{thm:channel_ext} and \cref{lem:ip_ext_strong} allow for randomness extraction in slightly different setups. However, as shown in \cref{sec:alternative_model}, the two setups are equivalent.
\end{remark}

\section{Extracting multiple bits}
\label{sec:deor_security}
The results from the previous section can be extended to multiple output bits using a construction proposed by Dodis et al.~\cite{Dodis_2004}. For this, define the following family of functions.

\begin{definition}[Dodis et al.'s construction \cite{Dodis_2004}]
    Let $\mathcal{K} = \{K_i\}_{i=1}^m$ be a set of $n \times n$ matrices with entries in $\left\{ 0,1 \right\}$ such that for any $0 \neq s \in \left\{ 0,1 \right\}^{m}$ it holds that
    \begin{equation} \label{eq:deor_matrix_condition}
        \rank\left( \sum_{i = 1}^m s_{i}K_{i} \right) \geq n - r
    \end{equation}
    for some $r \in \mathds{N}$. The function $\text{DEOR}^{\mathcal{K}}:\left\{ 0,1 \right\}^{n} \times \left\{ 0,1 \right\}^{n} \rightarrow \left\{ 0,1 \right\}^{m}$ is defined as
    \begin{equation} \label{eq:deor_extractor}
        \mathrm{DEOR}^{\mathcal{K}}(x,y) \coloneqq (x^{T}K_{1}y, \ldots, x^{T}K_{m}y).
    \end{equation}
    In \cref{eq:deor_matrix_condition,eq:deor_extractor}, addition is taken modulo $2$.
\end{definition}

\begin{remark}[Practicality of $\mathrm{DEOR}^\mathcal{K}$]
    As shown in \cite{Dodis_2004}, there exist collections of matrices with $r = 0$ (for any $m \leq n$). Furthermore, for $r=1$, there are efficient implementations running in time $\mathcal{O}(n \log n)$ \cite{Foreman_2025} (whenever $m \leq n$ and $n$ is a prime with $2$ as a primitive root).
\end{remark}

The idea behind the proof is to reduce the analysis of the $\mathrm{DEOR}^\mathcal{K}$ construction to the inner product construction $\mathrm{IP}^n$. The main tool for this is the classical-quantum XOR Lemma shown in \cite[Lemma 3]{Kasher_2010}.

\begin{lemma}[{Classical-quantum XOR Lemma, \cite[Lemma 3]{Kasher_2010}}]
    \label{lem:xor_lemma}
    Let $\rho_{ZE}$ be a cq state where $Z$ is a bitstring of length $m$. Then
    \begin{equation}
        \norm{\rho_{ZE} - \omega_{Z} \otimes \rho_{E}}_{1}^{2} \leq 2^{m}\sum_{s \neq 0} \norm{\rho_{(s \cdot Z)E} - \omega_{Z'} \otimes \rho_{E}}_{1}^{2},
    \end{equation}
    where the summation runs over all $0 \neq s \in \{0,1\}^m$ and $Z'$ is a one bit system.
\end{lemma}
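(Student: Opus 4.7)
The plan is to use a Fourier/Hadamard decomposition of the classical register $Z$ over $\mathbb{F}_2^m$ and then reduce the $1$-norm bound to a sum of $1$-norms of ``XOR marginals'' via the triangle inequality followed by Cauchy--Schwarz.

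First, I would write $\rho_{ZE} = \sum_{z \in \{0,1\}^m} \ketbra{z}_Z \otimes \rho_{E \land Z=z}$ and, for each $s \in \{0,1\}^m$, define the Hermitian operator
\begin{equation}
    \tau_s \coloneqq \sum_{z} (-1)^{s \cdot z}\, \rho_{E \land Z=z}.
\end{equation}
Two facts about $\tau_s$ are essential. Fourier inversion over $\mathbb{F}_2^m$ gives $\rho_{E \land Z=z} = 2^{-m} \sum_{s} (-1)^{s \cdot z} \tau_s$, and $\tau_0 = \rho_E$, so
\begin{equation}
    \rho_{E \land Z=z} - \tfrac{1}{2^m} \rho_E \;=\; \tfrac{1}{2^m} \sum_{s \neq 0} (-1)^{s \cdot z} \tau_s.
\end{equation}
Moreover, for any fixed $s \neq 0$, writing the cq state $\rho_{(s \cdot Z)E}$ in the $s \cdot z \in \{0,1\}$ basis gives two blocks, $\tfrac{1}{2}(\rho_E \pm \tau_s)$, so that $\rho_{(s\cdot Z)E} - \omega_{Z'} \otimes \rho_E$ is block-diagonal with blocks $\pm \tfrac{1}{2}\tau_s$, and hence $\|\rho_{(s\cdot Z)E} - \omega_{Z'} \otimes \rho_E\|_1 = \|\tau_s\|_1$.

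Next, since $\rho_{ZE} - \omega_Z \otimes \rho_E$ is block-diagonal in $Z$,
\begin{equation}
    \|\rho_{ZE} - \omega_Z \otimes \rho_E\|_1 \;=\; \sum_{z} \bigl\|\rho_{E \land Z=z} - \tfrac{1}{2^m}\rho_E \bigr\|_1 \;=\; \frac{1}{2^m}\sum_{z} \Bigl\|\sum_{s \neq 0}(-1)^{s\cdot z}\tau_s\Bigr\|_1.
\end{equation}
Applying the triangle inequality to the inner sum and using that the sign $(-1)^{s \cdot z}$ drops out of the norm yields
\begin{equation}
    \|\rho_{ZE} - \omega_Z \otimes \rho_E\|_1 \;\leq\; \sum_{s \neq 0} \|\tau_s\|_1.
\end{equation}

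Finally, I would apply Cauchy--Schwarz to the sum of $2^m - 1$ terms on the right:
\begin{equation}
    \Bigl(\sum_{s \neq 0}\|\tau_s\|_1\Bigr)^2 \;\leq\; (2^m - 1) \sum_{s \neq 0}\|\tau_s\|_1^2 \;\leq\; 2^m \sum_{s \neq 0}\|\rho_{(s\cdot Z)E} - \omega_{Z'} \otimes \rho_E\|_1^2,
\end{equation}
which is the desired bound after squaring the previous display.

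The step I expect to require the most care is the bookkeeping in identifying $\|\rho_{(s\cdot Z)E} - \omega_{Z'} \otimes \rho_E\|_1$ with $\|\tau_s\|_1$: one must verify that the two blocks really are $\pm\tfrac{1}{2}\tau_s$ and therefore contribute additively to the $1$-norm (no cancellation across the classical register). Once this identification is made, the rest is routine Fourier analysis on $\mathbb{F}_2^m$ combined with the standard triangle inequality/Cauchy--Schwarz trick used in the classical XOR lemma; the quantum system $E$ comes along for the ride because everything is block-diagonal in the classical $Z$ register.
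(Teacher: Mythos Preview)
Your proof is correct. The paper does not give its own proof of this lemma; it is stated with a citation to \cite[Lemma 3]{Kasher_2010} and used as a black box. Your argument --- Fourier inversion on $\mathbb{F}_2^m$ to express $\rho_{E \land Z=z} - 2^{-m}\rho_E$ in terms of the characters $\tau_s$, the identification $\|\tau_s\|_1 = \|\rho_{(s\cdot Z)E} - \omega_{Z'}\otimes\rho_E\|_1$, then triangle inequality and Cauchy--Schwarz --- is exactly the standard route and matches the proof in Kasher--Kempe. The bookkeeping you flagged (that the two blocks of $\rho_{(s\cdot Z)E} - \omega_{Z'}\otimes\rho_E$ are $\pm\tfrac{1}{2}\tau_s$) is fine: block-diagonality in the classical $Z'$ register means the $1$-norm is just the sum of the block norms.
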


Our proof will rely on the fact that applying a high rank matrix to a bitstring does not decrease its entropy too much. This is the content of the following lemma.
\begin{lemma}[{\cite[Proposition 2.2.3]{Miller_2025}}] \label{lem:mat_mul_ent_change}
    Let $K$ be a $n \times n$ matrix with entries in $\{0,1\}$. Let $\rho_{XR}$ be a cq state where $X$ is a bitstring of length $n$ and assume that $\rank(K) \geq n - r$. Then
    \begin{equation}
        H_{\min}((KX)|R)_{\rho} \geq H_{\min}(X|R)_{\rho} - r.
    \end{equation}
    Here, $KX$ denotes the random variable which is obtained after applying the matrix $K$ to the bitstring~$X$.
\end{lemma}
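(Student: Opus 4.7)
The plan is to use the operator-inequality characterization of the min-entropy. Recall that, by \cref{def:renyi_divergence} applied at $\alpha = \infty$, the relation $H_{\min}(X|R)_{\rho} \geq k$ is equivalent to the existence of some $\sigma_R \in \Ssub{R}$ satisfying $\rho_{XR} \leq 2^{-k}\,\mathds{1}_X \otimes \sigma_R$. My strategy is to fix such an optimal $\sigma_R$ for $\rho_{XR}$ and then exhibit the same $\sigma_R$ as a witness for $H_{\min}(KX|R)_{\rho}\geq k-r$, losing only a factor of $2^r$ in the inequality.

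Concretely, I would first write the cq state in the computational basis as $\rho_{XR} = \sum_{x} \ketbra{x}_X \otimes \rho_{R\land X = x}$, so that the operator inequality $\rho_{XR} \leq 2^{-k}\,\mathds{1}_X \otimes \sigma_R$ decouples block-wise into $\rho_{R\land X=x} \leq 2^{-k}\,\sigma_R$ for each $x \in \{0,1\}^n$. Next, I would describe the post-processed state by noting that applying the function $x \mapsto Kx$ as a channel yields
\begin{equation}
    \rho_{(KX)R} = \sum_{y \in \{0,1\}^n} \ketbra{y}_{KX} \otimes \Bigl( \sum_{x : Kx = y} \rho_{R \land X = x} \Bigr).
\end{equation}
Here the inner sum runs over the fiber $K^{-1}(y)$, which is either empty or an affine translate of $\ker K$ in $\{0,1\}^n$. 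Since $\rank(K) \geq n - r$, we have $\lvert \ker K \rvert \leq 2^r$, so each nonempty fiber has at most $2^r$ elements.

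Combining these two ingredients, for every $y$ I can bound
\begin{equation}
    \sum_{x : Kx = y} \rho_{R\land X=x} \;\leq\; 2^{r}\cdot 2^{-k}\,\sigma_R \;=\; 2^{-(k-r)} \sigma_R,
\end{equation}
and then summing against $\ketbra{y}_{KX}$ gives the desired operator inequality $\rho_{(KX)R} \leq 2^{-(k-r)}\,\mathds{1}_{KX} \otimes \sigma_R$. Invoking the characterization of $H_{\min}$ again yields $H_{\min}(KX|R)_{\rho} \geq k - r$, and optimizing over $k \leq H_{\min}(X|R)_\rho$ (or simply choosing the optimal $\sigma_R$ from the start) concludes the proof. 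I do not anticipate a real obstacle here: the argument is essentially a block-by-block application of the pigeonhole principle lifted to operators, and the only minor care needed is to keep track that the output system $KX$ still lives in $\{0,1\}^n$, so that $\mathds{1}_{KX}$ is unambiguous and the proof carries through for sub-normalized $\sigma_R$ without modification.
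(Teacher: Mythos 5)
The paper does not prove this lemma; it is imported directly from \cite[Proposition 2.2.3]{Miller_2025} with no argument given in the text, so there is no in-paper proof to compare against. Your proof, however, is correct and is exactly the standard argument one would expect for this statement. The three ingredients you need all check out: the operator characterization $H_{\min}(X|R)_\rho = \max\{k : \exists\,\sigma_R,\ \rho_{XR} \leq 2^{-k}\,\mathds{1}_X\otimes\sigma_R\}$ (which, for a cq state, decouples block-wise exactly as you say, because both sides are block-diagonal in the $X$-basis); the description of the post-processed state as a sum of conditional operators over fibers $K^{-1}(y)$; and the cardinality bound $\lvert K^{-1}(y)\rvert \leq \lvert\ker K\rvert = 2^{\,n-\rank K}\leq 2^{r}$ for nonempty fibers, since over $\mathbb{F}_2$ each nonempty fiber is a coset of $\ker K$. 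Summing at most $2^r$ block inequalities $\rho_{R\land X=x}\leq 2^{-k}\sigma_R$ then gives $\rho_{(KX)R}\leq 2^{-(k-r)}\,\mathds{1}_{KX}\otimes\sigma_R$, and since the same $\sigma_R$ serves as witness, $H_{\min}(KX|R)_\rho\geq k-r$. The only cosmetic remark is that you need not ``optimize over $k$'': in finite dimensions the supremum defining $H_{\min}$ is attained, so one can simply take $k=H_{\min}(X|R)_\rho$ together with its optimizing $\sigma_R$ from the start, which is the cleaner formulation you mention parenthetically. The empty-fiber case is also fine, since the zero operator trivially satisfies the bound.
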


We are now ready to show the main result of this section.
\begin{theorem} \label{thm:dodis_extractor}
    $\mathrm{DEOR}^\mathcal{K}$ is a $(k_1, k_2, \varepsilon)$ two-process extractor, strong in $Y$, with
    \begin{equation}
        \varepsilon = \frac{1}{2} \sqrt{2^{2m + n + r- k_1- k_2}}.
    \end{equation}
\end{theorem}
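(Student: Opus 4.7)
The plan is to reduce the multi-bit extraction to the single-bit case using the XOR lemma (\cref{lem:xor_lemma}) and then apply \cref{thm:channel_ext}. For any nonzero $s \in \{0,1\}^m$, let $K_s \coloneqq \sum_{i=1}^m s_i K_i$, so that
\begin{equation}
    s \cdot \mathrm{DEOR}^\mathcal{K}(x,y) = \sum_i s_i \, x^T K_i y = x^T K_s y = (K_s^T x) \cdot y = \mathrm{IP}^n(K_s^T x, y).
\end{equation}
Hence, on the output state $\rho^\mathrm{out}_{XYST}$, we can view $s \cdot Z$ as the output of $\mathrm{IP}^n$ applied to $(K_s^T X, Y)$, i.e., as the output of the inner product construction applied to the modified instrument $\mathcal{M}'_{XS|A}$ obtained by post-composing $\mathcal{M}_{XS|A}$ with the linear map $X \mapsto K_s^T X$.

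Next, I would bound the min-entropy of $K_s^T X$. Since $\rank(K_s) \geq n-r$, we have $\rank(K_s^T) \geq n-r$, so \cref{lem:mat_mul_ent_change} gives $H_{\min}(K_s^T X | SB)_{\mathcal{M}'[\rho]} \geq k_1 - r$, while the entropy condition on $\mathcal{N}$ is unchanged, $H_{\min}(Y|A)_{\mathcal{N}[\rho]} \geq k_2$. Applying \cref{thm:channel_ext} to the pair $(\mathcal{M}', \mathcal{N})$ yields that $s \cdot Z = \mathrm{IP}^n(K_s^T X, Y)$ is $\varepsilon_s$-random relative to $YST$ with $\varepsilon_s = \frac{1}{2}\sqrt{2^{n + r - k_1 - k_2}}$, i.e.,
\begin{equation}
    \norm{\rho_{(s\cdot Z) YST}^\mathrm{out} - \omega_{Z'} \otimes \rho_{YST}^\mathrm{out}}_1^2 \leq 2^{n+r-k_1-k_2}.
\end{equation}

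Finally, I would invoke \cref{lem:xor_lemma} with $E = YST$ to obtain
\begin{equation}
    \norm{\rho_{ZYST}^\mathrm{out} - \omega_Z \otimes \rho_{YST}^\mathrm{out}}_1^2 \leq 2^m \sum_{s \neq 0} \norm{\rho_{(s\cdot Z)YST}^\mathrm{out} - \omega_{Z'} \otimes \rho_{YST}^\mathrm{out}}_1^2 \leq 2^{2m} \cdot 2^{n+r-k_1-k_2},
\end{equation}
where the last step uses that there are fewer than $2^m$ nonzero values of $s$. Taking square roots and dividing by two gives the claimed bound $\varepsilon = \frac{1}{2}\sqrt{2^{2m+n+r-k_1-k_2}}$.

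The only non-routine ingredient is the rewriting $s \cdot Z = \mathrm{IP}^n(K_s^T X, Y)$ together with the observation that deterministic postprocessing of $X$ by $K_s^T$ fits naturally into the two-process extractor framework (it is just another instrument); everything else is a clean assembly of \cref{lem:xor_lemma}, \cref{lem:mat_mul_ent_change} and \cref{thm:channel_ext}. The main obstacle, such as it is, is purely bookkeeping: ensuring that the side information $SB$ (and hence $YST$ after applying $\mathcal{N}$) is carried through consistently, and that the rank condition is transferred from $K_s$ to $K_s^T$ before invoking the entropy lemma.
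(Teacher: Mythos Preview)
Your proposal is correct and follows essentially the same route as the paper: apply the XOR lemma (\cref{lem:xor_lemma}) with $E=YST$, rewrite $s\cdot Z = \mathrm{IP}^n(K_s^T X, Y)$, use \cref{lem:mat_mul_ent_change} to retain min-entropy $k_1-r$ for $K_s^T X$ (since $\rank(K_s^T)=\rank(K_s)\ge n-r$), and then invoke \cref{thm:channel_ext} term by term before summing over the at most $2^m$ nonzero $s$. Your framing of the post-composition $X\mapsto K_s^T X$ as a modified instrument $\mathcal{M}'$ is a clean way to see that \cref{thm:channel_ext} applies directly, and your bookkeeping of the conditioning system $SB$ is exactly what is needed.
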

\begin{proof}
    Let $\rho^\mathrm{out}_{XYST}$ be as in \cref{def:two_process_extractor} and let us denote $\rho^\mathrm{DEOR}_{ZYST} \coloneqq \mathrm{DEOR}^\mathcal{K}_{ZY|XY}[\rho^{\mathrm{out}}_{XYST}]$.
    Applying the XOR-Lemma (\cref{lem:xor_lemma}), we have that 
    \begin{equation} \label{eq:deor_xor_bound}
        \begin{aligned}
        \norm{\rho_{ZYST}^{\mathrm{DEOR}} - \omega_{Z} \otimes \rho_{YST}^{\mathrm{out}}}_{1}^{2}
        =& \norm{\rho_{ZYST}^{\mathrm{DEOR}} - \omega_{Z} \otimes \rho_{YST}^{\mathrm{DEOR}}}_{1}^{2} \\
        \leq & 2^{m}\sum_{s \neq 0}\norm{\rho_{(s \cdot Z)YST}^{\mathrm{DEOR}} - \omega_{Z'} \otimes \rho_{YST}^{\mathrm{DEOR}}}_{1}^{2} \\
         = & 2^{m}\sum_{s \neq 0}\norm{\mathrm{IP}^n_{Z'Y|XY}[\rho_{(K_{s}^T X)YST}^{\mathrm{out}}] - \omega_{Z'} \otimes \rho_{YST}^{\mathrm{out}}}_{1}^{2},
        \end{aligned}
    \end{equation}
    where we introduced $K_{s} = \sum_{i} s_{i}K_{i}$. We now note that by assumption $\rank(K_{s}^T) = \rank(K_{s}) \geq n - r$, and therefore by \cref{lem:mat_mul_ent_change}, $H_{\min}((K_{s}^TX)|B)_{\cM[\rho]} \geq H_{\min}(X|B)_{\cM[\rho]} - r \geq k_{1} - r$.
    Hence, we can apply \cref{thm:channel_ext} to bound 
    \begin{equation}
        \norm{\mathrm{IP}^n_{Z'Y|XY}[\rho_{(K_{s}^T X)YST}^{\mathrm{out}}] - \omega_{Z'} \otimes \rho_{YST}^{\mathrm{out}}}_{1}^{2} \leq 2^{n + r - k_{1} - k_{2}}
    \end{equation}
    for all $s \neq 0$. Inserting this into \cref{eq:deor_xor_bound} gives
    \begin{equation}
    \begin{aligned}
        \frac{1}{2}\norm{\rho_{ZYST}^{\mathrm{DEOR}} - \omega_{Z} \otimes \rho_{YST}^{\mathrm{out}}}_{1}
        \leq& \frac{1}{2}\sqrt{2^{m} \cdot 2^{m} \cdot 2^{n + r - k_{1} - k_{2}}} \\
        =& \frac{1}{2}\sqrt{2^{2m + n + r - k_{1} - k_{2}}},
    \end{aligned}
    \end{equation}
    which is the claimed bound.
\end{proof}

\begin{remark}[Tightness of \cref{thm:dodis_extractor}]
    Classically, the $\mathrm{DEOR}^\mathcal{K}$ extractor is known to be secure with $\varepsilon = \frac{1}{2} \sqrt{2^{m + n + r - k_1 - k_2}}$ \cite{Dodis_2004}. Compared to the bound in \cref{thm:dodis_extractor}, this allows for the extraction of twice as many random bits (due to the missing factor $2$ in front of $m$). The main technical reason for the difference is that the purely classical XOR Lemma does not have the $2^m$ prefactor from \cref{lem:xor_lemma}. We conjecture that one can achieve the same bound as in the classical case. Note that even for (conditionally) independent quantum states, this was shown only recently in \cite{Miller_2025}.
\end{remark}

\section{Smoothing} \label{sec:smoothing}
In practice, it can be difficult (or even impossible) to find good lower-bounds on $H_{\min}$. To avoid this issue, one often relaxes the min-entropy to its smoothed variant $H_{\min}^\varepsilon$.
The main technical hurdle is that $H_{\min}^\varepsilon(X|SB)_{\cM[\rho]} \geq k_1$ only guarantees that there exists a state of min-entropy $k_1$ which is $\varepsilon$ close to $\cM[\rho]$.
However, to \cref{def:two_process_extractor} requires a channel $\tilde{\cM}$ such that $\tilde{\cM}[\rho]$ has min-entropy $k_1$. Therefore, we wish to move the smoothing from the channel output onto the channel itself. This is done in the following lemma.

\begin{lemma} \label{lem:channel_smoothing}
    Let $\rho_{AR}$ be a pure quantum state and $\mathcal{E}_{BS|A}$ be a channel. Assume that
    $H_{\min}^{\varepsilon}\left( B|SR \right)_{\mathcal{E}[\rho]} \geq k$.
    Then, there exists a sub-normalized channel
    $\tilde{\mathcal{E}}_{BS|A}$ such that
    
    \begin{enumerate}
    \item
      $P\left( \mathcal{E}_{BS|A}\left[ \rho_{AR} \right], \tilde{\mathcal{E}}_{BS|A}\left[ \rho_{AR} \right] \right) \leq 4\varepsilon$ and
    \item
      $H_{\min}\left( B|SR \right)_{\tilde{\mathcal{E}}[\rho]} \geq k - \log \left( \frac{2}{\varepsilon^2} + \frac{1}{\tr[\rho] - \varepsilon} \right)$.
    \end{enumerate}
    Furthermore, the channel $\tilde{\mathcal{E}}$ is classical on the same systems as $\mathcal{E}$.
\end{lemma}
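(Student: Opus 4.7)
My plan relies on the following observation: if $\rho_{AR}$ is pure with $\rho_A$ of full rank (WLOG by restricting to its support), then a state $\bar{\sigma}_{BSR} \in \Ssub{BSR}$ is of the form $\tilde{\mathcal{E}}_{BS|A}[\rho_{AR}]$ for some CPTNI channel $\tilde{\mathcal{E}}$ if and only if $\bar{\sigma}_R \leq \rho_R$. The forward direction follows from $\tilde{\mathcal{E}}^*[\mathds{1}_{BS}] \leq \mathds{1}_A$. The reverse is a standard application of the operator-state correspondence: take any purification $\ket{\bar{\sigma}}_{BSE'R}$ of $\bar{\sigma}_{BSR}$, write it as $(L_A \otimes \mathds{1}_R)\ket{\Omega}_{AR}$ for some $L \in \mathrm{Lin}(A,BSE')$, and use $\ket{\rho}_{AR} = (\rho_A^{1/2} \otimes \mathds{1}_R)\ket{\Omega}_{AR}$ to verify that $\tilde{K}_{BSE'|A} \coloneqq L \rho_A^{-1/2}$ satisfies $\tilde{K}^* \tilde{K} \leq \mathds{1}_A$ precisely when $\bar{\sigma}_R \leq \rho_R$; the Stinespring channel $\tilde{\mathcal{E}}[\cdot] = \tr_{E'}[\tilde{K} \cdot \tilde{K}^*]$ then reproduces $\bar{\sigma}_{BSR}$ on $\rho_{AR}$.

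With this, the proof reduces to producing $\bar{\sigma}_{BSR}$ with three properties: (i) $\bar{\sigma}_R \leq \rho_R$; (ii) $P(\bar{\sigma}, \mathcal{E}[\rho]) \leq 4\varepsilon$; and (iii) $H_{\min}(B|SR)_{\bar{\sigma}} \geq k - \log(2/\varepsilon^2 + 1/(\tr[\rho]-\varepsilon))$. Unpacking the definition of $H_{\min}^\varepsilon$ yields $\tilde{\sigma}_{BSR}$ with $P(\tilde{\sigma}, \mathcal{E}[\rho]) \leq \varepsilon$ and $\tilde{\sigma}_{BSR} \leq 2^{-k} \mathds{1}_B \otimes \tau_{SR}$ for some $\tau_{SR} \in \Ssub{SR}$. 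Data-processing under $\tr_{BS}$ ensures $\tilde{\sigma}_R$ is close to $\mathcal{E}[\rho]_R \leq \rho_R$, but $\tilde{\sigma}_R$ itself may slightly exceed $\rho_R$. I therefore apply a ``filter'' acting only on $R$, namely $\bar{\sigma}_{BSR} \coloneqq (\mathds{1}_{BS} \otimes M_R) \tilde{\sigma}_{BSR} (\mathds{1}_{BS} \otimes M_R^*)$, with $M_R$ chosen to force $\bar{\sigma}_R \leq \rho_R$ while remaining close to the identity on the bulk of $\tilde{\sigma}_R$. A natural candidate is a regularized square-root filter such as $M_R = \rho_R^{1/2}(\tilde{\sigma}_R + \delta \rho_R)^{-1/2}$, or equivalently a spectral cutoff of $\rho_R^{-1/2}\tilde{\sigma}_R\rho_R^{-1/2}$ at a threshold $\lambda \sim 1/\varepsilon^2$.

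Property (i) then holds by construction. For (iii), the inequality $\bar{\sigma} \leq 2^{-k} \mathds{1}_B \otimes M_R \tau_{SR} M_R^*$ gives the required min-entropy bound once one absorbs both the scaling of $M_R$ (contributing $1/\delta \sim 1/\varepsilon^2$) and the renormalization needed to keep the conditioning state in $\Ssub{SR}$ (which in the sub-normalized regime degenerates to $1/(\tr[\rho]-\varepsilon)$); the two contributions combine into the two summands inside the final $\log$. Property (ii) follows from the triangle inequality $P(\bar{\sigma}, \mathcal{E}[\rho]) \leq P(\bar{\sigma}, \tilde{\sigma}) + \varepsilon$; the first term is controlled via a gentle-operator argument, as the mass that $M_R$ effectively removes from $\tilde{\sigma}_R$ is $O(\varepsilon^2)$ by a Markov-type estimate using $P(\tilde{\sigma}_R, \mathcal{E}[\rho]_R) \leq \varepsilon$ and $\mathcal{E}[\rho]_R \leq \rho_R$. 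The classicality claim is handled by post-composing $\tilde{\mathcal{E}}$ with the appropriate dephasing channel on systems that are classical for $\mathcal{E}$, which preserves CPTNI, distance, and min-entropy bounds. The main obstacle is the simultaneous bookkeeping needed to hit the distance constant $4\varepsilon$ and the entropy expression $\log(2/\varepsilon^2 + 1/(\tr[\rho]-\varepsilon))$ exactly, and in particular tracking the sub-normalization carefully to produce the $1/(\tr[\rho] - \varepsilon)$ term.
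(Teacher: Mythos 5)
Your high-level reduction is the same as the paper's: both arguments characterize which states $\bar{\sigma}_{BSR}$ are realizable as $\tilde{\mathcal{E}}_{BS|A}[\rho_{AR}]$ for a CPTNI channel via the marginal constraint on $R$ (equivalently, on the Choi system $A'$), and then seek a smoothed state satisfying that constraint together with the min-entropy bound. Where you diverge is in how the marginal constraint is enforced, and that is precisely where a gap opens up.

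The paper does two things you do not. First, it converts $H_{\min}^{\varepsilon}$ to $H_{\min}^{\downarrow,2\varepsilon}$ at the outset, using \cite[Lemma 18]{Tomamichel_2011} (Lemma A.2 here); that single step is the sole origin of the penalty $\log\bigl(\tfrac{2}{\varepsilon^2}+\tfrac{1}{\tr[\rho]-\varepsilon}\bigr)$, and it is essential because it makes the conditioning operator equal to the marginal $\tilde{\gamma}_{SA}$ of the smoothed state rather than an arbitrary $\tau_{SA}$. Second, it enforces the marginal constraint via the change-of-marginal lemma \cite[Lemma B.3]{Dupuis_2014} (Lemma A.1), which produces a sandwich $L_A\tilde{\gamma}L_A^*$ whose $A$-marginal is exactly $\gamma_A\leq\rho_A^T$, at purified-distance cost exactly $P(\tilde{\gamma}_A,\gamma_A)\leq 2\varepsilon$ and with the operator inequality $\tilde{\gamma}_{BSA}\leq 2^{-k'}\mathds{1}_B\otimes\tilde{\gamma}_{SA}$ propagating to $\xi_{BSA}\leq 2^{-k'}\mathds{1}_B\otimes\xi_{SA}$ with no further entropy loss, because the conditioning operator is the marginal. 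These two tools together give the stated constants with no slack.

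Your soft filter $M_R=\rho_R^{1/2}(\tilde{\sigma}_R+\delta\rho_R)^{-1/2}$ does guarantee $\bar{\sigma}_R\leq\rho_R$, so achievability holds. But two concrete problems remain. (a) Because you keep $H_{\min}$ rather than $H_{\min}^{\downarrow}$, the conditioning operator $\tau_{SR}$ is arbitrary, and after sandwiching you must bound $\tr[M_R\tau_{SR}M_R^*]$; the only a priori control is $M_R^*M_R\leq\delta^{-1}\mathds{1}_R$, giving $\log(1/\delta)$. This happens to be $\log(2/\varepsilon^2)$ for $\delta=\varepsilon^2/2$, but you cannot then explain the presence of the $1/(\tr[\rho]-\varepsilon)$ summand—that term lives in \cite[Lemma 18]{Tomamichel_2011} and does not arise from your filter. (b) More seriously, $M_R$ is not a contraction ($\|M_R\|$ can be $\delta^{-1/2}$), so the gentle-operator lemma does not apply as stated; and even if one could apply a gentle-type bound, the ``mass removed'' is controlled only by $\|\tilde{\sigma}_R-\mathcal{E}[\rho]_R\|_+ = O(\varepsilon)$, not $O(\varepsilon^2)$ as you claim, so a gentle argument would yield $P(\bar{\sigma},\tilde{\sigma})=O(\sqrt{\varepsilon})$, which does not close the required $4\varepsilon$ bound. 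The paper's change-of-marginal lemma is exactly the device that sidesteps this square-root loss, giving a \emph{linear} purified-distance penalty $P(\xi,\tilde{\gamma})=P(\tilde{\gamma}_A,\gamma_A)$. Until you either (i) invoke that lemma or (ii) prove a linear rather than square-root distance bound for your specific $M_R$, and separately account for the $1/(\tr[\rho]-\varepsilon)$ term via the $H_{\min}^{\downarrow}$ conversion, the proof is incomplete. Your treatment of the classicality claim (post-composing with a dephasing map) is fine and achieves the same end as the paper's appeal to \cite[Lemma 6.13]{Tomamichel_2016}.
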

\begin{proof}
    The main idea is to use a weighted version of the Choi-Jamiołkowsi isomorphism \cite{Choi_1975,Jamiolkowski_1972}. More precisely, first we define a Choi state, then we use the guarantee on $H_{\min}^\varepsilon$ to find a smoothed Choi state, and finally we use the inverse isomorphism to define our smoothed channel $\tilde{\mathcal{E}}$.
    Therefore, let us define
    \begin{equation}
        \gamma_{BSA} \coloneq \mathcal{E}_{BS|A'}\left[ \rho_{A'}^{1/2}\Omega_{A'A}\rho_{A'}^{1/2} \right].
    \end{equation}
    Note that by the trace non-increasing property of $\mathcal{E}$, we have that $\gamma_{A} \leq \rho_{A}^{T}$.
    By \cref{lem:unopt_smooth_min_ent_bound}, we have that
    \begin{equation}
    \begin{aligned}
        k' \coloneqq H_{\min}^{\downarrow,2\varepsilon}(B|SR)_{\mathcal{E}[\rho]} \geq& H_{\min}^{\varepsilon}(B|SR)_{\mathcal{E}[\rho]} - \log \left( \frac{2}{\varepsilon^2} + \frac{1}{\tr[\rho] - \varepsilon} \right) \\
        \geq& k - \log \left( \frac{2}{\varepsilon^2} + \frac{1}{\tr[\rho] - \varepsilon} \right).
    \end{aligned}
    \end{equation}
    Hence, we can find a state ${\tilde{\gamma}}_{BSA}$ such that\footnote{Technically, we only assume that such a state ${\tilde{\rho}}_{BSR}$ exists for the input $\rho_{AR}$. However, we have that $\rho_{AR} = V_{R|A'}\rho_{A}^{1/2}\Omega_{AA'}\rho_{A}^{1/2}V_{R|A'}^{*}$ which means that we can pick ${\tilde{\gamma}}_{BSA} = V_{R|A}^{*}{\tilde{\rho}}_{BSR}V_{R|A}$.}
    \begin{equation} \label{eq:gamma_tilde_properties}
        P(\gamma_{BSA}, \tilde{\gamma}_{BSA}) \leq 2\varepsilon \quad \mathrm{and} \quad \tilde{\gamma}_{BSA} \leq 2^{-k'} \mathds{1}_B \otimes \tilde{\gamma}_{SA}.
    \end{equation}
    We can apply \cref{lem:change_of_marginal} to $\tilde{\gamma}_{BSA}$ and $\gamma_{A}$ to find an operator $L_A \in \mathrm{Lin}(A)$ such that the state
    \begin{equation}
        \xi_{BSA} \coloneqq L_{A} \tilde{\gamma}_{BSA} L_{A}^*
    \end{equation}
    is an extension of $\gamma_{A}$ which satisfies $P(\tilde{\gamma}_{BSA}, \xi_{BSA}) = P(\tilde{\gamma}_A, \gamma_{A}) \leq P(\tilde{\gamma}_{BSA}, \gamma_{BSA}) \leq 2\varepsilon$. Note that by the second part of \cref{eq:gamma_tilde_properties}
    \begin{equation} \label{eq:xi_op_ineq}
        \xi_{BSA} \leq 2^{-k'} \mathds{1}_B \otimes L_A \tilde{\gamma}_{SA} L_A^* = 2^{-k'} \mathds{1}_B \otimes \xi_{SA}.
    \end{equation}
    Let us define the map
    \begin{equation}
    \begin{aligned}
        \tilde{\mathcal{E}}_{BS|A}\left[ S_{A} \right] \coloneq & \tr_{A}\left[ \rho_{A}^{- 1/2}\xi_{BSA}^{T_{A}}\rho_{A}^{- 1/2}S_{A} \right]
        = \tr_{A}\left[ \left( \rho_{A}^{- 1/2} \right)^{T}\xi_{BSA}\left( \rho_{A}^{- 1/2} \right)^{T}S_{A}^{T_{A}} \right].
    \end{aligned}
    \end{equation}
    Clearly, $\tilde{\mathcal{E}}$ is completely positive. We verify that it is also trace non-increasing:
    \begin{equation}
    \begin{aligned}
        \tr_{BS}\left[ \tilde{\mathcal{E}}_{BS|A}\left[ S_{A} \right] \right] = & \tr[\left( \rho_{A}^{- 1/2} \right)^{T}\xi_{BSA}\left( \rho_{A}^{- 1/2} \right)^{T}S_{A}^{T}] \\
        = & \tr[\left( \rho_{A}^{- 1/2} \right)^{T}\xi_{A}\left( \rho_{A}^{- 1/2} \right)^{T}S_{A}^{T}] \\
        \leq & \tr[ S_{A}^{T}] \\
        = & \tr[S_{A}],
    \end{aligned}
    \end{equation}
    where the inequality follows by $\xi_A = \gamma_A \leq \rho_{A}^T$. Let us compute
    \begin{equation}
    \begin{aligned}
        \tilde{\mathcal{E}}_{BS|A}\left[ \rho_{A}^{1/2}\Omega_{AA'}\rho_{A}^{1/2} \right] = & \tr_{A}\left[ \rho_{A}^{- 1/2}\xi_{BSA}^{T_{A}}\rho_{A}^{- 1/2}\rho_{A}^{1/2}\Omega_{AA'}\rho_{A}^{1/2} \right] \\
        = & \tr_{A}\left[ \xi_{BSA}^{T_{A}}\Omega_{AA'} \right] \\
        = & \xi_{BSA'}.
    \end{aligned}
    \end{equation}
    Now note that since $\rho_{AR}$ and $\rho_{A}^{1/2}\Omega_{AA'}\rho_{A}^{1/2}$ both purify $\rho_{A}$, we can write
    \begin{equation}
        \rho_{AR} = V_{R|A'}\rho_{A}^{1/2}\Omega_{AA'}\rho_{A}^{1/2}V_{R|A'}^{*}
    \end{equation}
    for some isometry $V_{R|A'}$. Hence
    \begin{equation}
        \mathcal{E}_{BS|A}\left[ \rho_{AR} \right] = V_{R|A'}\gamma_{BSA'}V_{R|A'}^{*}\quad\text{ and }\quad\tilde{\mathcal{E}}_{BS|A}\left[ \rho_{AR} \right] = V_{R|A'}\xi_{BSA'}V_{R|A'}^{*}.
    \end{equation}
    We now verify the two properties:
    \begin{enumerate}
    \item We have that 
        \begin{equation}
        \begin{aligned}
            P\left( \mathcal{E}_{BS|A}\left[ \rho_{AR} \right], \tilde{\mathcal{E}}_{BS|A}\left[ \rho_{AR} \right] \right) = & P\left( \gamma_{BSA},\xi_{BSA} \right) \\
            \leq & P\left( \gamma_{BSA},{\tilde{\gamma}}_{BSA} \right) + P\left( {\tilde{\gamma}}_{BSA},\xi_{BSA} \right) \\
            \leq & 4\varepsilon,
        \end{aligned}
        \end{equation}
        where we used isometric invariance and the triangle inequality.
    \item We have
      \begin{equation}
        \tilde{\mathcal{E}}_{BS|A}\left[ \rho_{AR} \right] = V_{R|A'}\xi_{BSA'}V_{R|A'}^{*} \leq 2^{- k'}\mathds{1}_{B} \otimes \underbrace{\left( V_{R|A'} \xi_{SA'} V_{R|A'}^{*} \right)}_{\in \Ssub{SR}},
    \end{equation}
    where the inequality follows from \cref{eq:xi_op_ineq}. Hence $H_{\min}(B|SR)_{\tilde{\mathcal{E}}[\rho]} \geq k'$.
    \end{enumerate}
    It is well-known that the optimizer for $H_{\min}^\varepsilon(B|SR)_{\mathcal{E}[\rho]}$ is classical on the same systems as $\mathcal{E}[\rho]$ \cite[Lemma 6.13]{Tomamichel_2016}. Hence, by the definition of $\tilde{\mathcal{E}}$, it inherits this structure. This concludes the proof.
\end{proof}
The following lemma is a slight variation of \cref{lem:channel_smoothing}.
\begin{lemma} \label{lem:channel_smoothing_alt}
    Let $\rho_{AR}$ be a pure quantum state and $\mathcal{E}_{BS|A}$ be a channel. Assume that
    $H_{\min}^{\varepsilon}\left( B|R \right)_{\mathcal{E}[\rho]} \geq k$.
    Then, there exists a sub-normalized channel
    $\tilde{\mathcal{E}}_{BS|A}$ such that
    \begin{enumerate}
    \item $P\left( \mathcal{E}_{BS|A}\left[ \rho_{AR} \right], \tilde{\mathcal{E}}_{BS|A}\left[ \rho_{AR} \right] \right) \leq 4\varepsilon$ and
    \item $H_{\min}\left( B|R \right)_{\tilde{\mathcal{E}}[\rho]} \geq k - \log \left( \frac{2}{\varepsilon^2} + \frac{1}{\tr[\rho] - \varepsilon} \right)$.
    \end{enumerate}
    Furthermore, the channel $\tilde{\mathcal{E}}$ is classical on the same systems as $\mathcal{E}$.
\end{lemma}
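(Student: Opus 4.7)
The plan is to follow the Choi-based argument of \cref{lem:channel_smoothing} almost verbatim, inserting one extra extension step to compensate for the weaker hypothesis (only the marginal $BR$ is controlled rather than $BSR$). First, I would set up the same weighted Choi state $\gamma_{BSA} \coloneqq \mathcal{E}_{BS|A'}[\rho_{A'}^{1/2}\Omega_{A'A}\rho_{A'}^{1/2}]$ and record $\gamma_A \leq \rho_A^T$. Applying \cref{lem:unopt_smooth_min_ent_bound} to the hypothesis yields $H_{\min}^{\downarrow,2\varepsilon}(B|R)_{\mathcal{E}[\rho]} \geq k' \coloneqq k - \log(2/\varepsilon^2 + 1/(\tr[\rho]-\varepsilon))$. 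Transferring to the Choi picture via the isometry $V_{R|A'}$ relating $\rho_{AR}$ to $\rho_{A'}^{1/2}\Omega_{A'A}\rho_{A'}^{1/2}$, I obtain a sub-normalized state $\tilde{\gamma}_{BA}$ at purified distance at most $2\varepsilon$ from $\gamma_{BA}$ and satisfying the operator inequality $\tilde{\gamma}_{BA} \leq 2^{-k'}\, \mathds{1}_B \otimes \tilde{\gamma}_A$.

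The new ingredient is to promote $\tilde{\gamma}_{BA}$ to a full tripartite state $\tilde{\gamma}_{BSA}$ via \cref{lem:purified_extension} applied to the pair $(\gamma_{BSA}, \tilde{\gamma}_{BA})$, which yields $P(\tilde{\gamma}_{BSA}, \gamma_{BSA}) = P(\tilde{\gamma}_{BA}, \gamma_{BA}) \leq 2\varepsilon$. From this point on, the construction is imported from the proof of \cref{lem:channel_smoothing} without modification: apply \cref{lem:change_of_marginal} to $\tilde{\gamma}_{BSA}$ and $\gamma_A$ to obtain an operator $L_A \in \mathrm{Lin}(A)$ and the state $\xi_{BSA} \coloneqq L_A \tilde{\gamma}_{BSA} L_A^*$ which extends $\gamma_A$ and is $2\varepsilon$-close to $\tilde{\gamma}_{BSA}$; then define the channel $\tilde{\mathcal{E}}_{BS|A}[S_A] \coloneqq \tr_A[\rho_A^{-1/2} \xi_{BSA}^{T_A} \rho_A^{-1/2} S_A]$ using the inverse Choi map.

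The only delicate point, and the one that needs to be checked carefully, is that the min-entropy conclusion survives even though the operator bound holds on the marginal $\tilde{\gamma}_{BA}$ rather than on the full $\tilde{\gamma}_{BSA}$. Fortunately, partial trace over $S$ commutes with conjugation by $L_A$, so $\xi_{BA} = L_A \tilde{\gamma}_{BA} L_A^*$ and the bound propagates: $\xi_{BA} \leq 2^{-k'} L_A (\mathds{1}_B \otimes \tilde{\gamma}_A) L_A^* = 2^{-k'} \mathds{1}_B \otimes \xi_A$. Pulling this back through the isometry $V_{R|A'}$ and tracing out $S$ gives $\tr_S[\tilde{\mathcal{E}}_{BS|A}[\rho_{AR}]] \leq 2^{-k'} \mathds{1}_B \otimes V_{R|A'} \xi_{A'} V_{R|A'}^*$, establishing $H_{\min}(B|R)_{\tilde{\mathcal{E}}[\rho]} \geq k'$. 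The purified-distance bound in item (1), trace non-increasingness of $\tilde{\mathcal{E}}$, and the classicality statement all carry over verbatim from \cref{lem:channel_smoothing} (using the triangle inequality with the two $2\varepsilon$-close steps to obtain $4\varepsilon$), so no new calculations are needed.
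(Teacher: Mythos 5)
Your proposal is correct and follows the paper's proof essentially verbatim: use \cref{lem:purified_extension} to extend $\tilde{\gamma}_{BA}$ to a tripartite $\tilde{\gamma}_{BSA}$ at the same purified distance from $\gamma_{BSA}$, and then run the construction of \cref{lem:channel_smoothing}. Your explicit check that the operator bound on the $BA$-marginal propagates under conjugation by $L_A$ (and hence under the inverse Choi map) is exactly the point the paper leaves implicit with ``applying the arguments from \cref{lem:channel_smoothing} \ldots yields the desired statement,'' so nothing is missing.
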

\begin{proof}
  The proof proceeds analogously to the proof of \cref{lem:channel_smoothing}. The only difference is that we now get a state $\tilde{\gamma}_{BA}$ such that
  \begin{equation}
    P(\gamma_{B A}, \tilde{\gamma}_{BA}) \leq 2\varepsilon \quad\text{and}\quad \tilde{\gamma}_{BA} \leq 2^{-k'} \mathds{1}_B \otimes \tilde{\gamma}_A.
  \end{equation}
  By \cref{lem:purified_extension}, we can find an extension $\tilde{\gamma}_{BSA}$ of $\tilde{\gamma}_{BA}$ such that
  \begin{equation}
    P(\gamma_{BSA}, \tilde{\gamma}_{BSA}) = P(\gamma_{BA}, \tilde{\gamma}_{BA}) \leq 2\varepsilon.
    \end{equation}
    Applying the arguments from \cref{lem:channel_smoothing} to $\tilde{\gamma}_{BSA}$ yields the desired statement.
\end{proof}
We now state and show the main result of this section. We treat the strong extractor case here, but analogous statements can also be made about weak extractors.
\begin{theorem} \label{thm:smoothing}
    Let $\rho_{AB}$ be a pure quantum state and $\varepsilon_1, \varepsilon_2, k_1, k_2 \geq 0$. Define $k'_i \coloneqq k_i - \log \left( \frac{2}{\varepsilon_i^2} + \frac{1}{\tr[\rho_{AB}] - \varepsilon_i} \right)$ for $i=1,2$. Let $\mathrm{Ext}: \{0,1\}^{n_1} \times \{0,1\}^{n_2} \rightarrow \{0,1\}^m$ be a $(k'_1, k'_2, \varepsilon)$ two-process extractor, strong in $Y$. Assume that $\cM_{XS|A}, \cN_{YT|B}$ are instruments such that 
    \begin{equation*}
        H_{\min}^{\varepsilon_1}(X|SB)_{\cM[\rho]} \geq k_1 \quad \text{and} \quad H_{\min}^{\varepsilon_2}(Y|A)_{\cN[\rho]} \geq k_2
    \end{equation*}
    hold.
    Define $\rho_{XYST}^{\mathrm{out}} = \left( \cM_{XS|A} \otimes \cN_{YT|B} \right)[\rho_{AB}]$. Then $Z=\mathrm{Ext}(X,Y)$ is $\tilde{\varepsilon}$-random relative to $YST$ for
    \begin{equation}
        \tilde{\varepsilon} = 8(\varepsilon_1 + \varepsilon_2) + \varepsilon.
    \end{equation}
\end{theorem}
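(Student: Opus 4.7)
The strategy is to use \cref{lem:channel_smoothing,lem:channel_smoothing_alt} to replace $\cM$ and $\cN$ with sub-normalized ``smoothed'' instruments $\tilde{\cM}$ and $\tilde{\cN}$ that satisfy the exact (unsmoothed) min-entropy conditions with thresholds $k_1'$ and $k_2'$, invoke the extractor property of $\mathrm{Ext}$ on these smoothed instruments, and then transport the resulting bound back to the original output state $\rho^{\mathrm{out}}$ via the triangle inequality and data-processing of the trace distance.

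Concretely, I would first apply \cref{lem:channel_smoothing} to $\cM_{XS|A}$ with the pure state $\rho_{AB}$: since the hypothesis conditions $X$ on the joint system $SB$ (which includes the channel's own side information $S$), this is the right variant. It yields a sub-normalized instrument $\tilde{\cM}$ with $P(\cM[\rho_{AB}], \tilde{\cM}[\rho_{AB}]) \leq 4\varepsilon_1$ and $H_{\min}(X|SB)_{\tilde{\cM}[\rho]} \geq k_1'$. Next I would apply \cref{lem:channel_smoothing_alt} to $\cN_{YT|B}$; this is the correct variant here, because the strong-in-$Y$ definition conditions $Y$ only on $A$ (with no $T$ in the marginal). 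It produces $\tilde{\cN}$ with $P(\cN[\rho_{AB}], \tilde{\cN}[\rho_{AB}]) \leq 4\varepsilon_2$ and $H_{\min}(Y|A)_{\tilde{\cN}[\rho]} \geq k_2'$. Both smoothed maps remain CPTNI and classical on the same systems as the originals, so they still qualify as instruments in the sense of \cref{def:channel,def:two_process_extractor}.

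Because $\tilde{\cM}$ and $\tilde{\cN}$ satisfy the unsmoothed entropy requirements at thresholds $k_1'$ and $k_2'$, the hypothesis that $\mathrm{Ext}$ is a $(k_1',k_2',\varepsilon)$ two-process extractor strong in $Y$ applies directly to them, giving for $\tilde{\rho}^{\mathrm{out}} \coloneqq (\tilde{\cM}\otimes\tilde{\cN})[\rho_{AB}]$ that
\begin{equation*}
    \tfrac{1}{2}\big\|\mathrm{Ext}_{ZY|XY}[\tilde{\rho}^{\mathrm{out}}] - \omega_Z \otimes \tilde{\rho}^{\mathrm{out}}_{YST}\big\|_1 \leq \varepsilon.
\end{equation*}

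To finish, I would use a three-term triangle inequality to compare $\mathrm{Ext}_{ZY|XY}[\rho^{\mathrm{out}}]$ to $\omega_Z\otimes\rho^{\mathrm{out}}_{YST}$ via the smoothed pair. By data-processing of the trace distance under $\tilde{\cM}\otimes\mathcal{I}_B$ and then $\mathcal{I}_{XS}\otimes\cN$ (or the other order), together with the bound $\|\cdot\|_1\leq 2\|\cdot\|_+\leq 2P(\cdot,\cdot)$ from the remark relating the trace norm to the purified distance, one obtains $\|\rho^{\mathrm{out}}-\tilde{\rho}^{\mathrm{out}}\|_1\leq 8\varepsilon_1+8\varepsilon_2$, which survives the further channels $\mathrm{Ext}_{ZY|XY}$ and $\tr_{XS}\otimes(\,\cdot\,)$. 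Summing the three trace distances and halving yields $\tilde{\varepsilon}=8(\varepsilon_1+\varepsilon_2)+\varepsilon$ as claimed. No step is particularly hard given the smoothing lemmas; the only point to verify carefully is that these sub-normalized smoothed instruments genuinely satisfy the hypotheses of \cref{def:two_process_extractor}, which is guaranteed by the ``classical on the same systems'' clauses in \cref{lem:channel_smoothing,lem:channel_smoothing_alt} and by the fact that the extractor definition is formulated for CPTNI, not necessarily trace-preserving, instruments.
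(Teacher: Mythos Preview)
Your proposal is correct and follows essentially the same approach as the paper's proof: apply \cref{lem:channel_smoothing} to $\cM$ and \cref{lem:channel_smoothing_alt} to $\cN$, invoke the extractor property on the smoothed instruments, and then use a three-term triangle inequality together with data processing to transport the bound back to $\rho^{\mathrm{out}}$. The only cosmetic difference is that the paper carries out the distance estimates in the $\|\cdot\|_+$ norm (using $\frac{1}{2}\|\cdot\|_1 \leq \|\cdot\|_+ \leq P(\cdot,\cdot)$) rather than converting to $\|\cdot\|_1$ first, but the arithmetic and final constant $8(\varepsilon_1+\varepsilon_2)+\varepsilon$ are identical.
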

\begin{proof}
    Applying \cref{lem:channel_smoothing} to $\cM_{XS|A}$ and \cref{lem:channel_smoothing_alt} to $\cN_{YT|B}$ gives us instruments $\tilde{\cM}_{XS|A}$ and $\tilde{\cN}_{YT|B}$ such that
    \begin{equation} \label{eq:channel_dist}
        \norm{\left( \cM_{XS|A} - \tilde{\cM}_{XS|A} \right)\left[ \rho_{AB} \right]}_{+} \leq 4\varepsilon_{1}\quad\text{ and }\quad\norm{\left( \cN_{YT|B} - \tilde{\cN}_{YT|B} \right)\left[ \rho_{AB} \right]}_{+} \leq 4\varepsilon_{2}.
    \end{equation}
    Furthermore, we have that
    \begin{equation}
        H_{\min}\left( X|SB \right)_{\tilde{\cM}[\rho]} \geq k'_{1}\quad\text{ and }\quad H_{\min}(Y|A)_{\tilde{\cN}[\rho]} \geq k'_{2}.
    \end{equation}

    Let us denote 
    \begin{equation}
    \begin{aligned}
        \rho_{ZYST}^{\mathrm{Ext}} \coloneq& \mathrm{Ext}_{ZY|XY} \circ \left(\cM_{XS|A} \otimes \cN_{YT|B} \right)[\rho_{AB}], \\
        \tilde{\rho}_{ZYST}^{\mathrm{Ext}} \coloneq& \mathrm{Ext}_{ZY|XY} \circ \left( \tilde{\cM}_{XS|A} \otimes \tilde{\cN}_{YT|B} \right)[\rho_{AB}].
    \end{aligned}    
    \end{equation}
    Note that \cref{eq:channel_dist} implies
    \begin{equation} \label{eq:opt_state_to_approx_state}
    \begin{aligned}
    \norm{\rho_{ZYST}^{\mathrm{Ext}} - {\tilde{\rho}}_{ZYST}^{\mathrm{Ext}}}_{+} \leq & \norm{\left( \cM_{XS|A} \otimes \cN_{YT|B} - \tilde{\cM}_{XS|A} \otimes \tilde{\cN}_{YT|B} \right)\left[ \rho_{AB} \right]}_{+} \\
     \leq & \norm{\left( \left( \cM_{XS|A} - \tilde{\cM}_{XS|A} \right) \otimes \cN_{YT|B} \right)\left[ \rho_{AB} \right]}_{+} \\
     & + \norm{\left( \tilde{\cM}_{XS|A} \otimes \left( \cN_{YT|B} - \tilde{\cN}_{YT|B} \right) \right)\left[ \rho_{AB} \right]}_{+} \\
     \leq & 4(\varepsilon_{1} + \varepsilon_{2}),
    \end{aligned}
    \end{equation}
    where we used the data-processing inequality and the triangle inequality. Since $\mathrm{Ext}$ is a $(k'_1, k'_2, \varepsilon)$ two-process extractor, we have that
    \begin{equation} \label{eq:smooth_deor}
        \frac{1}{2}\norm{{\tilde{\rho}}_{ZYST}^{\mathrm{Ext}} - \omega_{Z} \otimes {\tilde{\rho}}_{YST}^{\mathrm{Ext}}}_{1}^{2} \leq \varepsilon.
    \end{equation}
    Combining the bounds then yields 
    \begin{equation}
    \begin{aligned}
        \frac{1}{2} \norm{\rho_{ZYST}^{\mathrm{Ext}} - \omega_{Z} \otimes \rho_{YST}^{\mathrm{Ext}}}_{1}
        \leq& \norm{\rho_{ZYST}^{\mathrm{Ext}} - {\tilde{\rho}}_{ZYST}^{\mathrm{Ext}}}_{+} + \norm{{\tilde{\rho}}_{ZYST}^{\mathrm{Ext}} - \omega_{Z} \otimes {\tilde{\rho}}_{YST}^{\mathrm{Ext}}}_{+} \\
        &+ \norm{\omega_{Z} \otimes \left( {\tilde{\rho}}_{YST}^{\mathrm{Ext}} - \rho_{YST}^{\mathrm{Ext}} \right)}_{+} \\
        \leq& 8\left( \varepsilon_{1} + \varepsilon_{2} \right) + \varepsilon,
    \end{aligned}
    \end{equation}
    where we used the triangle inequality, \cref{eq:opt_state_to_approx_state} twice, and \cref{eq:smooth_deor}.
\end{proof}

Applying \cref{thm:smoothing} to the $\mathrm{DEOR}^\mathcal{K}$ extractor gives the following corollary.
\begin{corollary}
    Let $\rho_{AB}$ be a pure quantum state and $\cM_{XS|A}$ and $\cN_{YT|B}$ be instruments such that
    \begin{equation}
    H_{\min}^{\varepsilon_{1}}\left( X|SB \right)_{\cM[\rho]} \geq k_{1} \quad \text{and} \quad H_{\min}^{\varepsilon_{2}}\left( Y|A \right)_{\cN[\rho]} \geq k_{2}
    \end{equation}
    hold. Define $\rho_{XYST}^{\mathrm{out}} = \left( \cM_{XS|A} \otimes \cN_{YT|B} \right)[\rho_{AB}]$.
    Then, $Z = \mathrm{DEOR}_{Z|XY}^{\mathcal{K}}(X, Y)$ is $\tilde{\varepsilon}$-random relative to $YST$ for
    \begin{equation}
        \tilde{\varepsilon} = 8(\varepsilon_1 + \varepsilon_2) + \frac{1}{2}\sqrt{2^{2m + n + r - k'_{1} - k'_{2}}},
    \end{equation}
    where $k_i' \coloneqq k_i - \log \left( \frac{2}{\varepsilon_i^2} + \frac{1}{\tr[\rho_{AB}] - \varepsilon_i} \right)$ for $i = 1,2$.
\end{corollary}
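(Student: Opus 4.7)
The plan is to obtain the corollary as an immediate composition of \cref{thm:dodis_extractor} (security of $\mathrm{DEOR}^{\mathcal{K}}$ in the exact-entropy setting) with \cref{thm:smoothing} (the generic smoothing reduction). Since \cref{thm:smoothing} is already stated in a form that takes an arbitrary $(k_1', k_2', \varepsilon)$ two-process extractor strong in $Y$ and promotes it to work under smooth min-entropy hypotheses, there is nothing new to prove beyond identifying the right parameters and invoking the two results in sequence.

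Concretely, I would first apply \cref{thm:dodis_extractor} with the shifted entropy thresholds $k_1'$ and $k_2'$. That theorem tells us that $\mathrm{DEOR}^{\mathcal{K}}$ is a $(k_1', k_2', \varepsilon_{\mathrm{DEOR}})$ two-process extractor strong in $Y$, where
\begin{equation}
    \varepsilon_{\mathrm{DEOR}} = \frac{1}{2}\sqrt{2^{2m+n+r-k_1'-k_2'}}.
\end{equation}
Next, I would feed this extractor, together with the assumed smooth min-entropy bounds $H_{\min}^{\varepsilon_1}(X|SB)_{\cM[\rho]} \geq k_1$ and $H_{\min}^{\varepsilon_2}(Y|A)_{\cN[\rho]} \geq k_2$, into \cref{thm:smoothing}. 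The definition of $k_1'$ and $k_2'$ in the corollary matches exactly the threshold shift required by \cref{thm:smoothing}, so the theorem applies without modification and gives that $Z = \mathrm{DEOR}^{\mathcal{K}}(X,Y)$ is $\tilde{\varepsilon}$-random relative to $YST$ with $\tilde{\varepsilon} = 8(\varepsilon_1 + \varepsilon_2) + \varepsilon_{\mathrm{DEOR}}$, which is the claimed bound.

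There is essentially no obstacle in the proof: all of the real work (the smoothing of channels via the Choi--Jamiołkowski trick in \cref{lem:channel_smoothing} and \cref{lem:channel_smoothing_alt}, the triangle-inequality argument in \cref{thm:smoothing}, and the reduction of $\mathrm{DEOR}^\mathcal{K}$ to $\mathrm{IP}^n$ via the XOR lemma in \cref{thm:dodis_extractor}) has already been done. The only care needed is in bookkeeping: checking that the ``strong in $Y$'' property is preserved through the composition (it is, since both prior results are stated in that setting) and that the pure-state hypothesis on $\rho_{AB}$ matches. Both are satisfied by hypothesis, so the corollary follows directly.
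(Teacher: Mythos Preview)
Your proposal is correct and matches the paper's approach exactly: the paper simply states that the corollary follows by applying \cref{thm:smoothing} to the $\mathrm{DEOR}^{\mathcal{K}}$ extractor from \cref{thm:dodis_extractor}, which is precisely the composition you describe.
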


\section{Relation to prior work} \label{sec:prior_work}
In this section we discuss the relation of our results to prior work on two-source extractors. In particular, we will consider classical two-source extractors \cite{Chor_1988}, the Markov model from \cite{Friedman_2016}, and the general entangled adversary model from \cite{Chung_2014}. For simplicity, we will only consider the weak extractor case, but all statements also remain valid for strong extractors.

\subsection{Classical two-source extractors} \label{subsec:cl_extractors}
As mentioned in the introduction, there is a rich history of literature on classical two-source extractors (see \cite{Chattopadhyay_2022} for a review). We begin by reproducing the definition of classical two-source extractors.
\begin{definition}[Two-source extractor \cite{Raz_05}] \label{def:two_source_extractor}
    A function $\mathrm{Ext}: \{0,1\}^{n_1} \times \{0,1\}^{n_2} \rightarrow \{0,1\}^m$ is called a \emph{$(k_1, k_2, \varepsilon)$ two-source extractor} if for all classical states $\rho_{XY} = \rho_{X} \otimes \rho_{Y}$ with $H_{\min}(X)_\rho \geq k_1$ and $H_{\min}(Y)_\rho \geq k_2$ it holds that
    $Z = \mathrm{Ext}(X,Y)$ is $\varepsilon$-random.
\end{definition}

One can easily see that applying \cref{def:two_process_extractor} to the instruments $\cM_{X|A}[\rho_A] \coloneqq \tr[\rho_A] \rho_X$ and $\cN_{Y|B}[\rho_B] \coloneqq \tr[\rho_B] \rho_Y$, gives the condition in \cref{def:two_source_extractor}. Hence, any $(k_1, k_2, \varepsilon)$ two-process extractor is a $(k_1, k_2, \varepsilon)$ two-source extractor. More interestingly, one can use two-process extractors to extract from non-independent sources, as the following lemma shows.
\begin{lemmaBreakable} \label{lem:ext_no_indep}
    Let $p(x,y)$ be an arbitrary probability distribution and $\mathrm{Ext}$ be a $(k_1, k_2, \varepsilon)$ two-process extractor. Define the states
    \begin{equation}
    \begin{aligned}
        \eta_{XB} \coloneqq& \sum_{x} p(x) \ketbra{x}_X \otimes \ketbra{\eta_x}_B &\text{with} \quad \ket{\eta_x}_B \coloneqq& \sum_{y} \sqrt{p(y|x)} \ket{y}_B, \\
        \nu_{YA} \coloneqq& \sum_{y} p(y) \ketbra{y}_Y \otimes \ketbra{\nu_y}_A &\text{with} \quad \ket{\nu_y}_A \coloneqq& \sum_{x} \sqrt{p(x|y)} \ket{x}_A.
    \end{aligned}
    \end{equation}
    If $H_{\min}(X|B)_\eta \geq k_1$ and $H_{\min}(Y|A)_\nu \geq k_2$, then $\rho_{XY} = \sum_{x,y} p(x,y) \ketbra{x,y}_{XY}$ is such that $Z = \mathrm{Ext}(X,Y)$ is $\varepsilon$-random.
\end{lemmaBreakable}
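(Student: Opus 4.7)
The plan is to realize $\rho_{XY}$ as the output of two causally independent instruments applied to a suitable pure bipartite state, and then invoke \cref{def:two_process_extractor} directly. The natural candidate for the input state is the canonical purification
\begin{equation}
    \ket{\psi}_{AB} \coloneqq \sum_{x,y} \sqrt{p(x,y)} \, \ket{x}_A \otimes \ket{y}_B,
\end{equation}
with $\cM_{X|A}$ and $\cN_{Y|B}$ being the projective measurements in the computational bases of $A$ and $B$ respectively (no side information $S$ or $T$ is produced).

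The first step is to compute the marginals produced by each measurement. Using $\bra{x}_A \ket{\psi}_{AB} = \sqrt{p(x)} \ket{\eta_x}_B$, a short calculation yields
\begin{equation}
    \cM_{X|A}[\ketbra{\psi}_{AB}] = \sum_x p(x) \ketbra{x}_X \otimes \ketbra{\eta_x}_B = \eta_{XB},
\end{equation}
and an entirely symmetric computation gives $\cN_{Y|B}[\ketbra{\psi}_{AB}] = \nu_{YA}$. The hypotheses of the lemma therefore translate exactly into the entropy conditions $H_{\min}(X|B)_{\cM[\psi]} \geq k_1$ and $H_{\min}(Y|A)_{\cN[\psi]} \geq k_2$ required by \cref{def:two_process_extractor} (with the side-information systems $S$ and $T$ trivial).

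The second step is to observe that applying both measurements jointly gives $(\cM_{X|A} \otimes \cN_{Y|B})[\ketbra{\psi}_{AB}] = \sum_{x,y} p(x,y) \ketbra{x,y}_{XY} = \rho_{XY}$, so that the output state of the two-process setup coincides with $\rho_{XY}$. Invoking \cref{def:two_process_extractor} for the extractor $\mathrm{Ext}$ on these instruments then yields the claim that $Z = \mathrm{Ext}(X,Y)$ is $\varepsilon$-random.

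I do not expect any serious obstacle here: the entire content of the lemma is the observation that the coherent purification $\ket{\psi}_{AB}$ reproduces the marginals $\eta_{XB}$ and $\nu_{YA}$ under local computational-basis measurements, which reduces the statement to a direct application of \cref{def:two_process_extractor}. The only mild care needed is to verify that the measurement channels are genuine instruments in the sense of \cref{def:channel} (which they are, being projective measurements) and that $\ket{\psi}_{AB}$ is a legitimate pure input state for the definition.
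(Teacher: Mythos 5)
Your proof is correct and takes essentially the same route as the paper's: you use the canonical purification $\ket{\psi}_{AB}=\sum_{x,y}\sqrt{p(x,y)}\,\ket{x}_A\ket{y}_B$ with computational-basis measurements for $\cM$ and $\cN$, verify that the single-side outputs are $\eta_{XB}$ and $\nu_{YA}$ while the joint output is $\rho_{XY}$, and then invoke \cref{def:two_process_extractor} with trivial $S$ and $T$. This is precisely the argument in the paper.
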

\begin{proof}
    Consider the pure state
    \begin{equation}
        \ket{\sigma}_{AB} \coloneqq \sum_{x,y} \sqrt{p(x,y)} \ket{x,y}_{AB}
    \end{equation}
    and take $\cM,\cN$ as measurements in the computational basis. Then
    \begin{equation}
        (\cM_{X|A} \otimes \cN_{Y|B})[\sigma_{AB}] = \sum_{x,y} p(x,y) \ketbra{x,y}_{XY} = \rho_{XY}.
    \end{equation}
    We compute
    \begin{equation}
    \begin{aligned}
        \cM_{X|A}[\sigma_{AB}] =& \sum_{x,y,y'} \sqrt{p(x,y)} \sqrt{p(x,y')} \ketbra{x}_X \otimes \ketbra{y}{y'}_B \\
        =& \sum_{x} p(x) \ketbra{x}_X \otimes \sum_{y,y'} \sqrt{p(y|x)} \ketbra{y}{y'}_B \sqrt{p(y'|x)} \\
        =& \eta_{XB},
    \end{aligned}
    \end{equation}
    and a similar calculation shows
    \begin{equation}
        \cN_{Y|B}[\sigma_{AB}] = \nu_{YA}.
    \end{equation}
    Since, by assumption, $H_{\min}(X|B)_{\eta} \geq k_1$ and $H_{\min}(Y|A)_{\nu} \geq k_2$ and because $\mathrm{Ext}$ is a $(k_1, k_2, \varepsilon)$ two-process extractor, we have that
    \begin{equation}
        \frac{1}{2} \norm{\mathrm{Ext}_{Z|XY}[\rho_{XY}] - \omega_Z}_1 \leq \varepsilon,
    \end{equation}
    which is the claimed statement.
\end{proof}

\begin{remark}
    In \cref{lem:ext_no_indep}, we do not place any independence assumption on $p(x,y)$, i.e, \cref{lem:ext_no_indep} allows for randomness extraction with correlated sources. The price for this are the more stringent entropy conditions $H_{\min}(X|B)_\eta \geq k_1$ instead of $H_{\min}(X|Y)_p \geq k_1$ and $H_{\min}(Y|A)_\nu \geq k_2$ instead of $H_{\min}(Y|X)_p \geq k_2$. Note that for independent $p(x,y) = p(x)p(y)$, one recovers the conditions $H_{\min}(X) \geq k_1$ and $H_{\min}(Y) \geq k_2$ as in \cref{def:two_source_extractor}.
\end{remark}

To illustrate the entropy conditions in \cref{lem:ext_no_indep}, consider the $\mathrm{IP}^n$ construction and define the following set
\begin{equation}
    S^n \coloneqq (\mathrm{IP}^n)^{-1}\{0\} = \{ (x, y) \in \{0,1\}^{n} \times \{0,1\}^{n} : x \cdot y = 0 \}.
\end{equation}
Now, define the distribution
\begin{equation}
    p(x,y) = \begin{cases}
        \frac{1}{\abs{S^n}} & \text{if } (x, y) \in S^n \\
        0 & \text{else}
    \end{cases},
\end{equation}
that is, $p(x, y)$ is uniform on $S^n$. Clearly, $\mathrm{IP}^{n}$ produces $Z=0$ with probability $1$. Hence, $\mathrm{IP}^n$ fails for the distribution $p(x, y)$. We now show that the entropies in \cref{lem:ext_no_indep} are small (which needs to be true as otherwise there would be a contradiction to \cref{thm:channel_ext}).

For this, we consider the measurement of $\eta_{X B}$ in the Hadamard basis. Let us denote by $H$ the Hadamard transform. We compute
\begin{equation}
\begin{aligned}
    H^{\otimes n} \ket{\eta_{x}}_{B} =& \sum_{y} \sqrt{p(y|x)} H^{\otimes n} \ket{y}_{B} \\
    =& \sum_{y} \sqrt{p(y|x)} \sqrt{2^{-n}} \sum_{y'} (-1)^{y \cdot y'} \ket{y'}_{B}.
\end{aligned}
\end{equation}
For $x \neq 0$, we have that $p(y|x) = 2^{-(n-1)} \delta_{x \cdot y = 0}$ and therefore
\begin{equation}
    H^{\otimes n} \ket{\eta_{x}}_{B} = 2^{-n}\sqrt{2} \sum_{y: x \cdot y = 0} \sum_{y'} (-1)^{y \cdot y'} \ket{y'}_{B}.
\end{equation}
The probability to correctly guess $x \neq 0$ given $\ket{\eta_x}$ is
\begin{equation}
    \abs{\mel{x}{H^{\otimes n}}{\eta_{x}}}^2 = \abs{2^{-n} \sqrt{2} \sum_{y: x \cdot y = 0} (-1)^{y \cdot x}}^2
    = \abs{2^{-n}\sqrt{2} 2^{n-1}}^2
    = \frac{1}{2}.
\end{equation}
For $x=0$, we have $p(y|x=0) = 2^{-n}$ and therefore
\begin{equation}
    H^{\otimes n} \ket{\eta_{x=0}}_B = 2^{-n} \sum_{y,y'} (-1)^{y \cdot y'} \ket{y'}_B.
\end{equation}
The probability to correctly guess $x = 0$ given $\ket{\eta_{x=0}}$ is
\begin{equation}
    \abs{\mel{x=0}{H^{\otimes n}}{\eta_{x=0}}}^2 = \abs{2^{-n} \sum_{y} 1}^2 = 1.
\end{equation}
Hence, given access to $B$, one can guess $x$ with probability at least $\frac{1}{2}$ and therefore \cite{Konig_2009}
\begin{equation}
    H_{\min}(X|B)_\eta \leq 1.
\end{equation}
Since the same argument also applies to $Y$ and $\nu_{YA}$, we can conclude that \cref{lem:ext_no_indep} does not allow for the extraction of randomness from $p(x,y)$ (which we already knew since $p(x,y)$ was constructed to break $\mathrm{IP}^n$).

Note that one can apply the same reasoning to other extractors $\mathrm{Ext}$. For instance, if we know that some distribution $p(x,y)$ breaks $\mathrm{Ext}$ and the entropies in \cref{lem:ext_no_indep} are $H_{\min}(X|B)_\eta = k_1$ and $H_{\min}(Y|A)_\nu = k_2$, we can conclude that $\mathrm{Ext}$ cannot be a $(k_1, k_2, \varepsilon)$ two-process extractor (although it might still be a $(k_1, k_2, \varepsilon)$ two-source extractor). 

\subsection{Markov model} \label{subsec:markov_model}

In \cite{Friedman_2016}, the authors introduce the \emph{Markov model}. As the name suggests, the Markov model considers ccq states $\rho_{XYC}$ such that the Markov chain condition $X \leftrightarrow C \leftrightarrow Y$ is satisfied, i.e., $I(X:Y|C)_\rho = 0$. Intuitively, this condition can be understood as requiring that $X$ and $Y$ are independent when conditioned on $C$ \cite{Hayden_2004}. In \cite{Friedman_2016} they introduce the following definition.
\begin{definition}[Markov model] \label{def:markov_model}
    A function $\mathrm{Ext}: \{0,1\}^{n_1} \times \{0,1\}^{n_2} \rightarrow \{0,1\}^m$ is said to be a \emph{$(k_1, k_2, \varepsilon)$ two-source extractor in the Markov model} if, for any state $\rho_{XYC}$ satisfying the Markov chain condition $X \leftrightarrow C \leftrightarrow Y$ with $H_{\min}(X|C)_\rho \geq k_1$ and $H_{\min}(Y|C)_\rho \geq k_2$, we have
    that $Z = \mathrm{Ext}(X, Y)$ is $\varepsilon$-random relative to $C$.
\end{definition}

Next, we show how the Markov model in \cref{def:markov_model} can be seen as a special case of our model.
\begin{proposition} \label{prop:markov_is_special_case}
    Any $(k_1, k_2, \varepsilon)$ two-process extractor is also a $(k_1, k_2, \varepsilon)$ extractor in the Markov model.
\end{proposition}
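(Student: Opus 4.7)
The plan is to exhibit, for any Markov state $\rho_{XYC}$ satisfying the hypotheses of \cref{def:markov_model}, a pure state $\rho_{AB}$ together with instruments $\cM_{XS|A}, \cN_{YT|B}$ fulfilling the hypotheses of \cref{def:two_process_extractor} and whose joint output reproduces $\rho_{XYC}$ on a subsystem of $XYST$. Once such a simulation is in place, the two-process extractor guarantee together with data processing for the trace norm immediately yields the Markov-model guarantee.

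To construct the simulation, I would invoke the Hayden--Jozsa--Petz--Winter structure theorem for states with vanishing conditional mutual information. It gives a decomposition $C = \bigoplus_j (C_L^j \otimes C_R^j)$ with a classical block label $j$ satisfying
\begin{equation*}
    \rho_{XYC} = \sum_j q_j \ketbra{j}_J \otimes \rho_{X C_L^j}^j \otimes \rho_{Y C_R^j}^j.
\end{equation*}
Choosing purifications $\ket{\phi_j}_{X_A C_{L,A} R_A}$ of $\rho_{X C_L^j}^j$ and $\ket{\psi_j}_{Y_B C_{R,B} R_B}$ of $\rho_{Y C_R^j}^j$, I would set $A = J X_A C_{L,A} R_A$, $B = J' Y_B C_{R,B} R_B$, and define
\begin{equation*}
    \ket{\Omega}_{AB} \coloneqq \sum_j \sqrt{q_j}\, \ket{j}_J \ket{\phi_j}_{X_A C_{L,A} R_A} \ket{j}_{J'} \ket{\psi_j}_{Y_B C_{R,B} R_B},
\end{equation*}
where the duplicated copy $J'$ of the block label ensures that the overall state is pure. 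The instrument $\cM_{XS|A}$ measures $J$ and $X_A$ in the computational basis, emits the outcome as classical $X$, sets $S = (J, C_{L,A})$, and discards $R_A$; the instrument $\cN_{YT|B}$ is defined symmetrically.

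A direct calculation shows that
\begin{equation*}
    (\cM \otimes \cN)[\rho_\Omega]_{XYST} = \sum_j q_j \ketbra{j,j}_{J J'} \otimes \rho_{X C_L^j}^j \otimes \rho_{Y C_R^j}^j,
\end{equation*}
so tracing out the auxiliary copy $J'$ and identifying $(J, C_{L,A}, C_{R,B})$ with $C$ recovers $\rho_{XYC}$ exactly. For the entropy conditions, after applying only $\cM$, the state on $XSB$ factorizes block-by-block as $\rho_{X C_L^j}^j \otimes \ketbra{\psi_j}_{Y_B C_{R,B} R_B}$ conditioned on $J = J' = j$; since the $B$-side factor depends only on $j$, one obtains $p_{\mathrm{guess}}(X|SB) = \sum_j q_j\, p_{\mathrm{guess}}(X|C_L^j)_{\rho^j}$. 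By the Markov structure $C_R$ is redundant once $J$ and $C_L$ are known, so the same sum equals $p_{\mathrm{guess}}(X|C)_{\rho_{XYC}}$. Therefore $H_{\min}(X|SB)_{\cM[\rho_\Omega]} = H_{\min}(X|C)_{\rho_{XYC}} \geq k_1$, and the symmetric argument yields $H_{\min}(Y|TA)_{\cN[\rho_\Omega]} \geq k_2$.

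With both entropy hypotheses verified, the assumed two-process extractor property ensures that $Z = \mathrm{Ext}(X,Y)$ is $\varepsilon$-random relative to $ST$ for the simulated output. Since tracing out the redundant register $J'$ from $ST$ is a CPTP map acting only on the side information, the data-processing inequality for $\|\cdot\|_1$ transports $\varepsilon$-randomness relative to $ST$ to $\varepsilon$-randomness relative to $C$, closing the proof. The one step that requires careful bookkeeping is the entropy identity $H_{\min}(X|SB) = H_{\min}(X|C)$: a priori, exposing the purifying reference $R_B$ and the $Y$-side registers through $B$ could drop the min-entropy, but the Hayden et al.\ block structure forces those extra systems into a state that is entirely determined by the classical label $j$, which is already part of $S$.
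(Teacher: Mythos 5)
Your proof is correct and takes essentially the same approach as the paper: both invoke the Hayden--Jozsa--Petz--Winter decomposition of the Markov state and simulate $\rho_{XYC}$ by a pure input state together with a pair of instruments acting on disjoint registers, then invoke the two-process-extractor guarantee and data processing. The only (inessential) difference is where the purifications live: the paper places only the classical block label $w$ in the shared pure state $\ket{\sigma}_{W_1 W_2 W}$ and lets $\cM, \cN$ be measure-and-prepare channels that synthesize $\rho^w_{XC_L}$ and $\rho^w_{YC_R}$ from scratch, whereas you embed purifications of those conditional states directly into $\ket{\Omega}_{AB}$ and have the instruments merely measure and discard the purifying registers --- both constructions verify the entropy hypotheses by the same block-orthogonality and redundancy-of-$C_R$ argument.
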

\begin{proof}
    Let $\mathrm{Ext}: \{0,1\}^{n_1} \times \{0,1\}^{n_2} \rightarrow \{0,1\}^m$ be a $(k_1, k_2, \varepsilon)$ two-process extractor. Consider a state $\rho_{XYC}$ such that $X \leftrightarrow C \leftrightarrow Y$ and $H_{\min}(X|C)_\rho \geq k_1$ and $H_{\min}(Y|C)_\rho \geq k_2$.
    Such a state can be decomposed as~\cite[Theorem 6]{Hayden_2004}
    \begin{equation}
        \rho_{XYC} \cong \sum_w p(w) \rho_{X C_L}^w \otimes \rho_{Y C_R}^w \otimes \ketbra{w}_W =: \rho_{X Y C_L C_R W},
    \end{equation}
    where $\cong$ means that there exists an isometry $V_{C_LC_RW|C}$ mapping the LHS to the RHS. Define the measure and prepare channels
    \begin{equation}
        \cM_{X C_L|W}[\rho_W] \coloneqq \sum_w \rho_{X C_L}^w \mel{w}{\rho_W}{w} \quad \text{and} \quad
        \cN_{Y C_R|W}[\rho_W] \coloneqq \sum_w \rho_{Y C_R}^w \mel{w}{\rho_W}{w}
    \end{equation}
    and the pure state
    \begin{equation}
        \ket{\sigma}_{W_1 W_2 W} = \sum_w \sqrt{p(w)} \ket{w}_{W_1} \otimes \ket{w}_{W_2} \otimes \ket{w}_W.
    \end{equation}
    Then $\rho_{X Y C_L C_R W} = \left(\cM_{X C_L|W_1} \otimes \cN_{Y C_R|W_2}\right)[\sigma_{W_1 W_2 W}]$. We compute
    \begin{equation}
        H_{\min} (X|C_L W_2 W)_{\cM[\sigma]} = H_{\min} (X|C)_\rho \geq k_1
    \end{equation}
    and similarly
    \begin{equation}
        H_{\min} (Y|C_R W_1 W)_{\cN[\sigma]} = H_{\min} (Y|C)_\rho \geq k_2.
    \end{equation}
    Since $\mathrm{Ext}$ is a $(k_1, k_2, \varepsilon)$ two-process extractor, we know that the state 
    \begin{equation}
        \rho_{Z C_L C_R W}^{\mathrm{Ext}} \coloneqq \left(\mathrm{Ext}_{Z|XY} \circ \cM_{XC_L|W_1} \otimes \cN_{Y C_R|W_2} \right)[\sigma_{W_1 W_2 W}] = \mathrm{Ext}_{Z|XY}[\rho_{XY C_L C_R W}]
    \end{equation}
    satisfies
    \begin{equation}
        \frac{1}{2} \norm{\rho_{ZC_L C_R W}^{\mathrm{Ext}} - \omega_Z \otimes \rho_{C_L C_R W}^{\mathrm{Ext}}}_1 \leq \varepsilon,
    \end{equation}
    which, by the isometric invariance of the trace distance, is exactly the condition of \cref{def:markov_model} and hence $\mathrm{Ext}$ is also a $(k_1, k_2, \varepsilon)$ extractor in the Markov model.
\end{proof}

Next, we show that, for separable inputs $\rho_{AB}$, an extractor in the Markov model can be used to extract randomness from $(\cM_{XS|A} \otimes \cN_{YT|B})[\rho_{AB}]$.
\begin{lemma} \label{lem:sep_inputs}
    Let $\mathrm{Ext}$ be a $(k_1, k_2, \varepsilon)$ extractor in the Markov model, $\rho_{AB} = \sum_w p(w) \rho_A^w \otimes \rho_B^w$ be a separable state, and $\cM_{XS|A}, \cN_{YT|B}$ be instruments. Define $\rho_{ABW} \coloneqq \sum_w p(w) \rho_A^w \otimes \rho_B^w \ketbra{w}_W$ and assume that that $H_{\min}(X|SW)_{\cM[\rho]} \geq k_1$ and $H_{\min}(Y|TW)_{\cN[\rho]} \geq k_2$ hold. Then, the state $\rho^{\mathrm{out}}_{XYST} \coloneqq \left(\cM_{XS|A} \otimes \cN_{YT|B} \right)[\rho_{AB}]$ is such that $Z = \mathrm{Ext}(X, Y)$ is $\varepsilon$-random relative to $ST$.
\end{lemma}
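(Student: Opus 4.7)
The plan is to reduce the claim to the Markov-model extractor property by using the classical label $W$ from the separable decomposition as the Markov core. I would first compute the full output state on the extended system,
\[
\rho^{\mathrm{out}}_{XYSTW} = (\cM_{XS|A} \otimes \cN_{YT|B})[\rho_{ABW}] = \sum_w p(w)\, \cM[\rho_A^w]_{XS} \otimes \cN[\rho_B^w]_{YT} \otimes \ketbra{w}_W,
\]
and argue that it satisfies the Markov chain condition $X \leftrightarrow STW \leftrightarrow Y$. The cleanest way to see this is to apply the classical copy isometry $\ket{w}_W \mapsto \ket{w}_{W_L}\ket{w}_{W_R}$: the resulting state is block-diagonal in $w$ with left block on $XSW_L$ and right block on $YTW_R$, which is precisely the form appearing in the structural characterization of quantum Markov chains of Hayden et al.~\cite{Hayden_2004}, so $I(X{:}Y|STW)=0$.

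Next, I would lift each given min-entropy bound from its original conditioning system to the full $STW$. From $H_{\min}(X|SW)_{\cM[\rho]} \geq k_1$ there exists a sub-normalized $\sigma_{SW}$ with $\cM[\rho]_{XSW} \leq 2^{-k_1}\mathds{1}_X \otimes \sigma_{SW}$. Since $W$ is classical on both sides, I may take $\sigma_{SW} = \sum_w q(w)\sigma_S^w \otimes \ketbra{w}_W$ with each $\sigma_S^w$ normalized; projecting onto $\ketbra{w}_W$ then gives a per-$w$ operator bound on $\cM[\rho_A^w]_{XS}$. Tensoring this bound with $\tau_T^w \coloneqq \cN[\rho_B^w]_T$ and re-summing against $p(w)\ketbra{w}_W$ yields
\[
\rho^{\mathrm{out}}_{XSTW} \leq 2^{-k_1}\mathds{1}_X \otimes \tilde\sigma_{STW}, \qquad \tilde\sigma_{STW} \coloneqq \sum_w q(w)\sigma_S^w \otimes \tau_T^w \otimes \ketbra{w}_W,
\]
which is sub-normalized because $\cN$ is trace non-increasing. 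Hence $H_{\min}(X|STW)_{\rho^{\mathrm{out}}} \geq k_1$, and the same argument with the roles of $\cM$ and $\cN$ swapped gives $H_{\min}(Y|STW)_{\rho^{\mathrm{out}}} \geq k_2$.

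Having verified both the Markov structure and the two min-entropy conditions with $C = STW$, I would invoke \cref{def:markov_model} to conclude that $Z = \mathrm{Ext}(X,Y)$ is $\varepsilon$-random relative to $STW$, and then apply the data-processing inequality for the trace distance (tracing out $W$) to obtain the claimed bound relative to $ST$.

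I expect the entropy-lifting step to be the only non-routine part of the argument: the key is to exploit the block structure induced by the classical $W$ so as to fold the $T$-marginals $\tau_T^w$ into a valid sub-normalized reference state. Identifying the Markov chain from the separable decomposition and finishing with data processing are essentially bookkeeping.
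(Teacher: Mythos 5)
Your proof is correct and follows essentially the same route as the paper's: extend the output state to include the classical label $W$, observe that it forms the Markov chain $X \leftrightarrow STW \leftrightarrow Y$, lift the min-entropy bounds to conditioning on the full $STW$, invoke \cref{def:markov_model}, and finish by tracing out $W$ via data processing. Your explicit operator-inequality argument for the entropy-lifting step (projecting onto each $\ketbra{w}_W$, tensoring with $\tau_T^w$, and re-assembling the sub-normalized reference $\tilde{\sigma}_{STW}$) is if anything a bit more careful than the paper's one-line claim that $H_{\min}(X|STW)_{\rho^{\mathrm{out}}} = H_{\min}(X|SW)_{\cM[\rho]}$, which in general only holds as an inequality $\geq$ when $\cN$ is strictly trace-decreasing---and that direction is all that is needed.
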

\begin{proof}
    Define the extension
    \begin{equation} \label{eq:two_process_markov}
        \rho_{XYSTW}^\mathrm{out} \coloneqq \sum_w p(w) \cM_{XS|A}[\rho_A^w] \otimes \cN_{YT|B}[\rho_B^w] \otimes \ketbra{w}_W
    \end{equation}
    which satisfies the Markov chain conditions $XS \leftrightarrow W \leftrightarrow YT$ and $X \leftrightarrow STW \leftrightarrow Y$. By assumption, we have
    \begin{equation}
        H_{\min}(X|SW)_{\cM[\sigma]} \geq k_1 \quad \text{and} \quad H_{\min}(Y|TW) \geq k_2.
    \end{equation}
    Since $T$ is independent from $XS$ when conditioned on $W$, we have that
    \begin{equation}
        H_{\min}(X|STW)_{\rho^\mathrm{out}} = H_{\min}(X|SW)_{\cM[\rho]} \geq k_1.
    \end{equation}
    Similarly, we get that
    \begin{equation}
        H_{\min}(Y|STW)_{\rho^\mathrm{out}} = H_{\min}(Y|TW)_{\cN[\rho]} \geq k_2.
    \end{equation}
    Let us define the state
    \begin{equation}
        \rho_{ZSTW}^{\mathrm{Ext}} \coloneqq \mathrm{Ext}_{Z|XY}[\rho^\mathrm{out}_{XYSTW}].
    \end{equation}
    Since $\mathrm{Ext}$ is a $(k_1, k_2, \varepsilon)$ two-source extractor in the Markov model, we can conclude that
    \begin{equation}
        \frac{1}{2} \norm{\rho^{\mathrm{Ext}}_{ZST} - \omega_Z \otimes \rho^{\mathrm{Ext}}_{ST}}_1 \leq
        \frac{1}{2} \norm{\rho^{\mathrm{Ext}}_{ZSTW} - \omega_Z \otimes \rho^{\mathrm{Ext}}_{STW}}_1 \leq \varepsilon,
    \end{equation}
    where the first inequality follows by data-processing.
\end{proof}

\begin{remark}[Strong extractors]
    \Cref{lem:sep_inputs} treats the weak extractor case. For strong extractors, we have by the data-processing inequality
    \begin{equation}
        \frac{1}{2} \norm{\rho^\mathrm{Ext}_{ZYSTW} - \omega_Z \otimes \rho^{\mathrm{Ext}}_{YSTW}}_1 \leq \frac{1}{2} \norm{\rho^\mathrm{Ext}_{ZYSW} - \omega_Z \otimes \rho^{\mathrm{Ext}}_{YSW}}_1,
    \end{equation}
    where we used that for the Markov chain $\rho^\mathrm{out}$ in \cref{eq:two_process_markov}, $T$ can be reconstructed from $W$ and $Y$. Note that $\rho^\mathrm{out}_{XYSW}$ is still a Markov chain $X \leftrightarrow SW \leftrightarrow Y$. Hence, for strong extractors, we only need the requirement $H_{\min}(Y|W)_{\cN[\rho]} \geq k_2$.
\end{remark}

Let us summarize the results of this section so far. \Cref{prop:markov_is_special_case} shows that any two-process extractor is also a two-source extractor in the Markov model (with identical parameters). Conversely, \cref{lem:sep_inputs} states that, for separable inputs, a two-source extractor in the Markov model can be used for randomness extraction in the two-process model (although the entropy conditions are slightly different). Hence, we conclude that for classically correlated (that is separable) states, the Markov model can converted into the two-process model and vice versa. The following theorem shows that for entangled inputs, this is no longer true.

\begin{theorem} \label{thm:non_markovianity}
    There exists a pure state $\rho_{AB}$ and measurements $\cM_{X|A}$, $\cN_{Y|B}$ such that any Markov state $\sigma_{XYC}$ with $\sigma_{XY} = \left(\cM_{X|A} \otimes \cN_{Y|B}\right)[\rho_{AB}]$ satisfies $H_{\min}(X|C)_{\sigma} < H_{\min}(X|B)_{\cM[\rho]}$ or $H_{\min}(Y|C)_{\sigma} < H_{\min}(Y|A)_{\cN[\rho]}$.
\end{theorem}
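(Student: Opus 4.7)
The claim is an existence statement, so the plan is to exhibit an explicit triple $(\rho_{AB}, \cM_{X|A}, \cN_{Y|B})$ and prove that no Markov extension of the induced marginal $\sigma_{XY}$ simultaneously matches both quantum conditional min-entropies.

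The overall strategy reduces the analysis in two steps. First, for any candidate Markov state $\sigma_{XYC}$ with the correct marginal, I would invoke the Hayden et al.\ decomposition of quantum Markov chains already used in the proof of \cref{prop:markov_is_special_case} to write $\sigma_{XYC}\cong \sum_w p(w)\,\sigma^w_{XC_L}\otimes \sigma^w_{YC_R}\otimes \ketbra{w}_W$. Since appending the internal quantum registers $C_L, C_R$ can only decrease $H_{\min}(X|C)$ and $H_{\min}(Y|C)$, taking them to be trivial is optimal. The problem then reduces to a purely classical one: find a decomposition $p(x,y)=\sum_w p(w)\,p(x|w)\,p(y|w)$ of the given marginal satisfying both $\sum_w p(w)\,\max_x p(x|w)\le p_{\text{guess}}(X|B)_{\cM[\rho]}$ and the analogous bound for $Y$. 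This reduction is the conceptually easy half of the proof and parallels the manipulations in \cref{subsec:markov_model}.

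Second, I would construct the witness $(\rho_{AB}, \cM, \cN)$ and then show that no such classical decomposition exists. Several natural symmetric candidates turn out \emph{not} to work: partially entangled two-qubit states $\cos\theta\ket{00}+\sin\theta\ket{11}$ with matched Hadamard measurements, Bell pairs with trine POVMs on both sides, or Bell pairs with SIC POVMs, all admit explicit ``deterministic-on-the-peaks plus uniform-product background'' Markov decompositions that \emph{saturate} both quantum bounds simultaneously. The witness must therefore exploit a more subtle structural feature — most plausibly an asymmetry between the two sides, or a joint distribution whose support pattern does not admit a ``permutation-plus-uniform'' factorization — so as to force a genuine monogamy-type tradeoff between matching the $X$-entropy and the $Y$-entropy on any classical $W$.

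The main obstacle is precisely this construction. Once a candidate is fixed, the impossibility of simultaneous matching becomes a finite-dimensional optimization over the polytope of classical factorizations of $\sigma_{XY}$, which can be closed by LP duality or by an explicit analysis of extremal product decompositions of $p(x,y)$. Without the right choice of $(\rho_{AB}, \cM, \cN)$, however, the third step collapses, as the symmetric examples above demonstrate that matching Markov decompositions often do exist — so the proof ultimately hinges on identifying and then rigorously analyzing a carefully engineered asymmetric example.
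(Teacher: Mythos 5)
Your first step --- reducing, via the Hayden et al.\ decomposition and data processing, to classical Markov chains $p(x,y,w)=p(w)p(x|w)p(y|w)$ over the fixed marginal $p(x,y)$ --- is correct and coincides with the paper's reduction. The issue is that the proof stops there: the theorem is an existence statement, and you explicitly concede that you have not produced the witness $(\rho_{AB},\cM,\cN)$. That is not a technical detail to be filled in; it is the substance of the theorem.

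Moreover, the intuition you offer for what the witness should look like points in the wrong direction. You conjecture that the example ``must'' be asymmetric between the two sides. The paper's witness is fully symmetric: it is the pure state $\ket{\rho}_{AB}=\sum_{x,y}\sqrt{p(x,y)}\ket{x,y}$ for a distribution $p(x,y)$ on pairs of $2$-bit strings that is uniform on a cyclic support set of size $8$, with computational-basis measurements on both sides, and it satisfies $H_{\min}(X|B)_{\cM[\rho]}=H_{\min}(Y|A)_{\cN[\rho]}$. What drives the separation is not an $X$-vs.-$Y$ asymmetry but the combinatorics of the support of $p(x,y)$: for every $w$, either $p(\cdot|w)$ on $x$ or on $y$ must be deterministic (because the support contains no $2\times 2$ combinatorial rectangle), and each conditional is supported on at most two symbols. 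These two facts immediately give $P_{\mathrm{guess}}(X|W)\ge 3/4$ for one of the two variables, i.e.\ $H_{\min}\le -\log(3/4)\approx 0.415$, while a direct (numerical) evaluation gives $H_{\min}(X|B)_{\cM[\rho]}\approx 0.457$. No LP duality or polytope analysis is needed; the closing argument is elementary once the support pattern is chosen. So the gap in your proposal is real and load-bearing: without the concrete example and the combinatorial observation about its support, the argument does not close, and the heuristic you propose for finding the example would have led you away from the construction that actually works.
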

Informally, the lemma states that, for entangled inputs, converting from our model to the Markov model cannot be done for free. That is, in general, at least one of the two entropies will decrease.
\begin{proof}
    The proof is based on observations made in \cref{lem:ext_no_indep}. For this, let us consider the following probability distribution $p(x,y)$ where $x$ and $y$ each are bitstrings of length $2$
    \begin{equation}
    p(x,y) \coloneqq \begin{tabular}{c|cccc}
         & $00$ & $01$ & $10$ & $11$  \\ \hline
         $00$ & $1/8$ & $1/8$ & $0$ & $0$ \\
         $01$ & $0$ & $1/8$ & $1/8$ & $0$ \\
         $10$ & $0$ & $0$ & $1/8$ & $1/8$ \\
         $11$ & $1/8$ & $0$ & $0$ & $1/8$ \\
    \end{tabular}
    \end{equation}
    Take the pure state
    \begin{equation}
        \ket{\rho}_{AB} = \sum_{x,y} \sqrt{p(x,y)} \ket{x,y}_{AB}
    \end{equation}
    and $\cM_{X|A}, \cN_{Y|B}$ as measurements in the computational basis. Then, $\sigma_{XY} \coloneqq (\cM_{X|A} \otimes \cN_{Y|B})[\rho_{AB}]$ is given by $\sigma_{XY} = \sum_{x,y} p(x,y) \ketbra{x,y}_{XY}$.

    Now, we want to show that any Markov chain extension $\sigma_{XYC}$ of $\sigma_{XY}$ must have small min-entropy for either $X$ or $Y$. From \cite[Theorem 6]{Hayden_2004}, we know that $\sigma_{XYC}$ is of the form
    \begin{equation}
        \sigma_{XYC} = \bigoplus_{w} p(w) \sigma_{X C_L^w}^w \otimes \sigma_{Y C_R^w}^w.
    \end{equation}
    Let us introduce the state
    \begin{equation}
        \sigma_{XYW} = \sum_w p(w) \sigma_{X}^w \otimes \sigma_{Y}^w \otimes \ketbra{w}_W,
    \end{equation}
    which satisfies the Markov chain property $X \leftrightarrow W \leftrightarrow Y$. Furthermore, we have that
    \begin{equation}
        H_{\min}(X|C)_\sigma \leq H_{\min}(X|W)_\sigma \quad \text{and} \quad H_{\min}(Y|C)_\sigma \leq H_{\min}(Y|W)_\sigma
    \end{equation}
    by the data-processing inequality. Since $\sigma_{X}^w$ and $\sigma_{Y}^w$ are classical, we can write
    \begin{equation}
        \sigma_{X}^w = \sum_x p(x|w) \ketbra{x}_{X} \quad \text{and} \quad \sigma_{Y}^w = \sum_y p(y|w) \ketbra{y}_Y,
    \end{equation}
    for some conditional probability distributions $p(x|w)$ and $p(y|w)$. Hence, it suffices to consider classical Markov chains $X \leftrightarrow W \leftrightarrow Y$, i.e., distributions $p(x,y,w)$ with
    \begin{equation}
        p(x,y|w) = p(x|w)p(y|w) \quad \forall w.
    \end{equation}
    Due to the form of $p(x,y)$, the following properties must hold for each $w$.
    \begin{enumerate}
        \item If $p(x|w)$ is non-deterministic, then $p(y|w)$ must be deterministic and vice versa. That is, at most one of $p(x|w)$ or $p(y|w)$ can be non-deterministic.
        \item The probability $p(x|w)$ can be non-zero for at most two $x$. Similarly, the probability $p(y|w)$ can be non-zero for at most two $y$.
    \end{enumerate}
    From the first property, we know that either $X$ or $Y$ must be deterministic with probability at least $1/2$ (over $w$). Assume, without loss of generality, that $X$ is deterministic with probability $q \geq 1/2$. From the second property, we then know that for the $w$ where $X$ is not deterministic, only two values for $x$ are possible. Hence, we can guess $X$ from $W$ with probability at least
    \begin{equation}
        P_\mathrm{guess}(X|W) \geq q + (1-q)\frac{1}{2} \geq \frac{3}{4},
    \end{equation}
    where the second inequality uses that $q \geq 1/2$. Equivalently, this can be written as
    \begin{equation}
        H_{\min}(X|W)_p \leq -\log \frac{3}{4} \approx 0.41504.
    \end{equation}

    Now, one can calculate numerically\footnote{The code is available at \url{https://gitlab.phys.ethz.ch/martisan/two-process-entropies}.}
    \begin{equation}
        H_{\min}(X|B)_{\cM[\rho]} = H_{\min}(Y|A)_{\cN[\rho]} \approx 0.45689 > H_{\min}(X|W)_\sigma \geq H_{\min}(X|C)_\sigma.
    \end{equation}
\end{proof}
Interestingly, the above example is purely classical. Hence, even when there are no quantum systems at play, our model still does not reduce to the Markov model (similar observations were already made in \cref{lem:ext_no_indep}).

\subsection{General entangled adversary model}
In \cite[Section 3]{Chung_2014}, the authors introduce the general entangled adversary model (also called the GE model). We briefly reproduce their definition here.
\begin{definition}[{General entangled (GE) adversary model \cite[Definition 3.4]{Chung_2014}}] \label{def:ge_model}
    Let $\rho_{X_1 X_2 A_1 A_2} = \rho_{X_1} \otimes \rho_{X_2} \otimes \rho_{A_1 A_2}$ where $X_1$ and $X_2$ are classical systems holding $n_1$ and $n_2$ bits respectively. Consider $X_1$ and $X_2$ controlled\footnotemark{} channels $\mathcal{L}^1_{X_1E_1|X_1A_1}$, $\mathcal{L}^2_{X_2E_2|X_2A_2}$
    and a function $\mathrm{Ext}: \{0,1\}^{n_1} \times \{0,1\}^{n_2} \rightarrow \{0,1\}^m$. Define the state 
    \begin{equation}
        \rho^\mathrm{out}_{XY E_1 E_2} \coloneqq (\mathcal{L}^1_{X_1 E_1|X_1 A_1} \otimes \mathcal{L}^2_{X_2 E_2|X_2 A_2}) [\rho_{X_1} \otimes \rho_{X_2} \otimes \rho_{A_1 A_2}].
    \end{equation}
    We call $\mathrm{Ext}$ a \emph{$(k_1, k_2, \varepsilon)$ extractor in the GE model} if $\rho^\mathrm{out}_{XY E_1 E_2}$ is such that
    $Z = \mathrm{Ext}(X, Y)$ is $\varepsilon$-random relative to $E_1 E_2$ whenever
    \begin{equation} \label{eq:ge_entropy_conditions}
        H_{\min}(X_1|E_1 A_2)_{\mathcal{L}^1[\rho]} \geq k_1 \quad \text{and} \quad H_{\min}(X_2|E_2 A_1)_{\mathcal{L}^2[\rho]} \geq k_2.
    \end{equation}
\end{definition}
\footnotetext{\label{ft:one}
    This means that $\mathcal{L}^i$ acts as $\mathcal{L}^i_{X_iE_i|X_iA_i}[\rho_{X_i A_i}] = \sum_{x} \ketbra{x}_{X_i} \otimes \mathcal{L}^{i,x}_{E_i|A_i}[\mel{x}{\rho_{X_i A_i}}{x}_{X_i}]$ for some channels $\mathcal{L}^{i,x}_{E_i|A_i}$.
}

\begin{remark}
    In \cite[Section 5.2]{Friedman_2016} it was already shown that the GE model is a special case of the Markov model whenever the extractor is strong in one of the two sources. Hence, by \cref{prop:markov_is_special_case}, we can conclude that any strong two-process extractor is also a strong extractor in the GE model. Note that all results in \cite{Chung_2014} are shown for strong extractors and it is unknown whether there are any non-strong\footnotemark{} extractors which remain secure in their model.
\end{remark}
\footnotetext{ \label{ft:two}
    Any strong extractor is of course also a weak extractor. Here we explicitly mean extractors which are only weak extractors.
}

\begin{proposition} \label{prop:ge_equivalence}
    For pure input states $\rho_{A_1 A_2}$, any $(k_1, k_2, \varepsilon)$ two-process extractor is also a $(k_1, k_2, \varepsilon)$ extractor in the GE model.
\end{proposition}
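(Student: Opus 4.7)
The plan is to realize the GE setup as a two-process setup with pure input, exploiting the hypothesis that $\rho_{A_1 A_2}$ is already pure so that no joint quantum purifier is needed. First I would purify each classical marginal by a canonical copy, $\ket{\rho_{X_i}}_{X_i X_i'} \coloneqq \sum_{x_i} \sqrt{p_{X_i}(x_i)}\, \ket{x_i}_{X_i} \ket{x_i}_{X_i'}$, and define the pure bipartite input
\begin{equation}
\ket{\phi}_{AB} \coloneqq \ket{\rho_{X_1}}_{X_1 X_1'} \otimes \ket{\rho_{X_2}}_{X_2 X_2'} \otimes \ket{\rho_{A_1 A_2}}_{A_1 A_2}
\end{equation}
on the bipartition $A \coloneqq X_1 X_1' A_1$ and $B \coloneqq X_2 X_2' A_2$. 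The two instruments are then obtained by composing each GE-channel with a partial trace over its own copy: $\cM_{XS|A} \coloneqq \tr_{X_1'} \circ \mathcal{L}^1_{X_1 E_1|X_1 A_1}$ with $X = X_1$, $S = E_1$, and analogously $\cN_{YT|B}$ with $Y = X_2$, $T = E_2$.

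The core of the proof is to match the entropy hypotheses. On the state $\cM[\phi]$, the register $X_2 X_2'$ sits in a product with everything else and carries no information about $X_1$, so additivity of $H_{\min}^\uparrow$ over product systems yields $H_{\min}(X|SB)_{\cM[\phi]} = H_{\min}(X_1 | E_1 A_2)_{\mathcal{L}^1[\rho]} \geq k_1$; by symmetry, $H_{\min}(Y|TA)_{\cN[\phi]} = H_{\min}(X_2 | E_2 A_1)_{\mathcal{L}^2[\rho]} \geq k_2$. The hypothesis that $\mathrm{Ext}$ is a $(k_1, k_2, \varepsilon)$ two-process extractor then gives that $\mathrm{Ext}(X_1, X_2)$ is $\varepsilon$-random relative to $ST = E_1 E_2$, which is exactly the GE-model conclusion of \cref{def:ge_model}.

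The main (and essentially only) obstacle is the choice of purification. If one tried to apply \cref{rem:non_pure_inputs} with a single purifier $R$ pushed entirely onto one side, $R$ would necessarily contain a copy of one of the classical inputs, and that copy would then appear in the conditioning register of the \emph{opposite} party's min-entropy, collapsing it to zero. Splitting the purifier so that the copy $X_i'$ stays with party $i$ is what keeps each conditioning register independent of the variable it conditions, which is precisely what allows additivity of $H_{\min}$ to close the argument. The purity of $\rho_{A_1 A_2}$ is what permits this independent split of the purifier in the first place, explaining why the proposition is stated under that assumption.
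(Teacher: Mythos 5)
Your proof is correct, but it takes a more roundabout route than the paper's. The paper dispenses with purifiers entirely: it defines the channels to prepare the classical marginals internally, $\cM_{X_1 E_1 | A_1}[\sigma_{A_1}] \coloneqq \mathcal{L}^1_{X_1 E_1 | X_1 A_1}[\rho_{X_1} \otimes \sigma_{A_1}]$ (and similarly for $\cN$), and takes the two-process input to be the pure state $\rho_{A_1 A_2}$ itself, with $A = A_1$ and $B = A_2$. The entropy condition $H_{\min}(X|SB)_{\cM[\rho]} = H_{\min}(X_1|E_1 A_2)_{\mathcal{L}^1[\rho]}$ then matches the GE model verbatim, without any factorization or invariance-of-conditioning argument. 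Your purify-and-trace construction arrives at the same channels after tracing out $X_i'$, but it carries two extra registers in the input and needs the additional step of showing that the uncorrelated factor $X_{3-i}X_{3-i}'$ can be dropped from the conditioning register. Your closing remark is also slightly off: purity of $\rho_{A_1 A_2}$ is not what lets you split the $X_i'$ purifiers (the classical marginals are a product from the outset, so their purifiers always split independently); it is what lets you avoid a third purifier $R$ for $\rho_{A_1 A_2}$, which per \cref{rem:non_pure_inputs} would have to land entirely on one side. In the paper's route the $X_i'$ registers never appear, so the purity requirement is simply what \cref{def:two_process_extractor} demands of its input, and the remark immediately following the proposition explains why this is harmless: $\rho_{A_1 A_2}$ is adversarially prepared, so the adversary can always choose it pure.
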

\begin{proof}
    To see the equivalence, define the channels
    \begin{equation}
        \cM_{X_1 E_1|A_1}[\rho_{A_1}] \coloneqq \mathcal{L}^1_{X_1 E_1|X_1 A_1}[\rho_{X_1} \otimes \rho_{A_1}]
    \end{equation}
    and
    \begin{equation}
        \cN_{X_2 E_2|A_2}[\rho_{A_2}] \coloneqq \mathcal{L}^2_{X_2 E_2|X_2 A_2}[\rho_{X_2} \otimes \rho_{A_2}].
    \end{equation}
    That is, $\cM$ and $\cN$ prepare independent random variables $X_1$ and $X_2$ and then perform the leaking operations $\mathcal{L}^1$ and $\mathcal{L}^2$ respectively.
    The entropy conditions in \cref{eq:ge_entropy_conditions} then correspond to exactly the ones in \cref{def:two_process_extractor}. 
\end{proof}

Note, however, that our model is more general since \cref{def:ge_model} requires $\rho_{X_1 X_2} = \rho_{X_1} \otimes \rho_{X_2}$ (even after applying the leakage operations $\mathcal{L}^i$), which is not necessarily true in our model.

\begin{remark}
    In \cite{Chung_2014}, the state $\rho_{A_1 A_2}$ is assumed to be prepared by an adversary. Hence taking $\rho_{A_1 A_2}$ to be pure in \cref{prop:ge_equivalence} is not a strong restriction. 
\end{remark}

\section{Application: Device-independent randomness amplification with quantum sources} \label{sec:quantum_DIRA}
In device independent randomness amplification (DIRA), the goal is to produce (almost) uniform randomness using only a single source of imperfect randomness and two or more non-signalling devices \cite{Colbeck_2012, Kessler_2020}. The main observation behind DIRA is that there are Bell inequalities that allow for the certification of non-locality even without assuming uniform input randomness \cite{Colbeck_2012, Putz_2014}. The idea then is to use the imperfect source of randomness as the input to such a Bell test and use the observation of a Bell violation to certify the randomness of the measurement results.

In order to amplify an imperfect source of randomness, one naturally requires some measure for the quality of the input randomness. One such measure, which is frequently encountered in the literature, are probability bounded sources, also called SV sources \cite{Santha_1984}. A SV source with bias $\mu$ is defined as a sequence of random bits $X_1 \ldots X_n$ such that
\begin{equation}
    \quad \quad \quad \quad \frac{1}{2} - \mu \leq P(x_i|x^{i-1}\lambda) \leq \frac{1}{2} + \mu \quad \quad \forall i,x_i,x^{i-1},\lambda,
\end{equation}
where $\lambda$ denotes any classical information the adversary may have about the source and $x^i = x_1 \ldots x_i$. Here, we show how this can be generalized to the setting where the adversary's side information about the source may be quantum. For this, we first introduce the notion of a quantum SV source, which generalizes the classical SV source given above.

\begin{definition}[Quantum SV source] \label{def:quantum_SV_source}
    A \emph{quantum SV source with bias $\mu$} is a sequence of instruments $\{\mathcal{S}^{i}_{X_i R_i|R_{i-1}}\}_i$, where $X_i$ is a single bit, such that
    \begin{equation}
        \quad \quad \quad \quad H_{\min}(X_i|E)_{\mathcal{S}^{i}[\rho]} \geq -\log \left( \frac{1}{2} + \mu \right) \quad \quad \forall i, \rho_{R_{i - 1}E}.
    \end{equation}
\end{definition}

\begin{remark}[Relation to classical SV source]
    \Cref{def:quantum_SV_source} generalizes the classical notion of a SV-source. To see this, choose $R_i = X^i$ and
    \begin{equation}
        \mathcal{S}^{i}_{X_i X^i|X^{i-1}}[\rho_{X^{i-1}}] \coloneqq \sum_{x_i, x^{i-1}} \ketbra{x_i}_{X_i} \otimes \ketbra{x^i}_{X^i} P(x_i|x^{i-1}) \mel{x^{i-1}}{\rho_{X^{i-1}}}{x^{i-1}},
    \end{equation}
    that is, $\mathcal{S}^{i}$ receives a copy of the previous bits $X^{i-1}$, produces the next bit $X_i$ according to $P_{X_i|X^{i-1}}$, and passes along a copy of the bits $X^i$.
\end{remark}

\begin{remark}[Characterization using non-optimized min-entropy] \label{rem:non_opt_SV_condition}
  By \cite[Proposition 19]{Gour_2021}, \cref{def:quantum_SV_source} is equivalent to
  \begin{equation}
    H_\infty^\downarrow (X_i|E)_{\mathcal{S}^{i}[\rho]} \geq -\log \left(\frac{1}{2} + \mu\right) \quad\quad\forall i, \rho_{R_{i-1}E}.
  \end{equation}
\end{remark}

\begin{lemma}[Chaining of entropy] \label{lem:SV_entropy_chaining}
    Let $\{\mathcal{S}^i_{X_iR_i|R_{i-1}}\}_{i=1}^n$ be a quantum SV source with bias $\mu$. Then, for any state $\rho_{R_0 E}$, the state $\rho^\mathrm{out}_{X^n R_n E} \coloneqq (\mathcal{S}^n \circ \ldots \circ \mathcal{S}^1)[\rho_{R_0 E}]$ satisfies
    \begin{equation}
        H_{\min}(X^n|E)_{\rho^\mathrm{out}} \geq -n \log\left(\frac{1}{2} + \mu\right).
    \end{equation}
\end{lemma}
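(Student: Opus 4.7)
The plan is to proceed by induction on $n$, working throughout with the non-optimized conditional min-entropy $H_\infty^\downarrow$ and converting to $H_{\min}=H_\infty^\uparrow$ only at the very end. The base case $n=1$ is immediate: by \cref{rem:non_opt_SV_condition}, applying $\mathcal{S}^1$ with side information $E$ gives $H_\infty^\downarrow(X_1|E)_{\mathcal{S}^1[\rho]} \geq -\log(1/2+\mu)$.

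For the inductive step, assume the bound for $n-1$, and denote $\rho^{(n-1)}_{X^{n-1} R_{n-1} E} \coloneqq (\mathcal{S}^{n-1} \circ \cdots \circ \mathcal{S}^1)[\rho_{R_0 E}]$, so that $\rho^{\mathrm{out}} = \mathcal{S}^n[\rho^{(n-1)}]$. Since $\mathcal{S}^n$ only acts on $R_{n-1}$, the reduced state on $X^{n-1}E$ is unchanged, and hence
\begin{equation}
    H_\infty^\downarrow(X^{n-1}|E)_{\rho^{\mathrm{out}}} = H_\infty^\downarrow(X^{n-1}|E)_{\rho^{(n-1)}} \geq -(n-1)\log\bigl(\tfrac{1}{2}+\mu\bigr),
\end{equation}
where the inequality follows from the induction hypothesis together with $H_{\min} \geq H_\infty^\downarrow$ (or, better, by running the induction directly on $H_\infty^\downarrow$). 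Next, applying the defining property of a quantum SV source (in the form of \cref{rem:non_opt_SV_condition}) to the instrument $\mathcal{S}^n$ with input system $R_{n-1}$ and side information $E' = X^{n-1}E$ yields
\begin{equation}
    H_\infty^\downarrow(X_n|X^{n-1}E)_{\rho^{\mathrm{out}}} \geq -\log\bigl(\tfrac{1}{2}+\mu\bigr).
\end{equation}

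The final ingredient is the chain rule for $H_\infty^\downarrow$, which follows immediately from the definition: if $\rho_{X^{n-1}E} \leq 2^{-h_1}\, \mathds{1}_{X^{n-1}} \otimes \rho_E$ and $\rho_{X^n E} \leq 2^{-h_2}\, \mathds{1}_{X_n} \otimes \rho_{X^{n-1}E}$, then $\rho_{X^n E} \leq 2^{-h_1-h_2}\, \mathds{1}_{X^n} \otimes \rho_E$. Combining the two bounds above gives
\begin{equation}
    H_\infty^\downarrow(X^n|E)_{\rho^{\mathrm{out}}} \geq H_\infty^\downarrow(X^{n-1}|E)_{\rho^{\mathrm{out}}} + H_\infty^\downarrow(X_n|X^{n-1}E)_{\rho^{\mathrm{out}}} \geq -n\log\bigl(\tfrac{1}{2}+\mu\bigr),
\end{equation}
closing the induction. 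Finally, $H_{\min}(X^n|E)_{\rho^{\mathrm{out}}} \geq H_\infty^\downarrow(X^n|E)_{\rho^{\mathrm{out}}}$ (since $H_\infty^\uparrow$ optimizes over the conditioning state while $H_\infty^\downarrow$ uses the marginal), which yields the claim. The only potential subtlety is bookkeeping the side information when invoking the SV property — the key point is that $\mathcal{S}^n$ only acts on $R_{n-1}$, so $X^{n-1}E$ may safely play the role of the arbitrary side information $E'$ in \cref{def:quantum_SV_source}.
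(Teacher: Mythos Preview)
Your proof is correct and follows essentially the same strategy as the paper's: both reduce to the chain rule $H_{\min}(X^n|E) \geq \sum_i H_\infty^\downarrow(X_i|X^{i-1}E)$ together with \cref{rem:non_opt_SV_condition}. The paper simply cites this chain rule from \cite[Proposition~5.12]{Tomamichel_2016} and applies it $n$ times, whereas you unroll it as an induction and prove the operator-inequality step $\rho_{X^nE}\leq 2^{-h_2}\mathds{1}_{X_n}\otimes\rho_{X^{n-1}E}\leq 2^{-h_1-h_2}\mathds{1}_{X^n}\otimes\rho_E$ by hand; your parenthetical remark that the induction should be run on $H_\infty^\downarrow$ (not on $H_{\min}$) is exactly right and is needed for the argument to close.
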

\begin{proof}
    One can directly bound
    \begin{equation}
    \begin{aligned}
        H_{\min}(X^n|E)_{\rho^\mathrm{out}}
        \geq \sum_i H_{\infty}^\downarrow(X_i|X^{i-1}E)_{\rho^{\mathrm{out}}},
    \end{aligned}
    \end{equation}
    where we used the chain rule from \cite[Proposition 5.12]{Tomamichel_2016} $n$ times. Next, using \cref{rem:non_opt_SV_condition}, we know that
    \begin{equation}
        H_{\infty}^\downarrow(X_i|X^{i-1}E)_{\rho^{\mathrm{out}}} \geq -\log\left(\frac{1}{2} + \mu\right)
    \end{equation}
    holds for all $i$. This concludes the proof.
\end{proof}

Having introduced the notion of a quantum SV source, we are now ready to illustrate how our results can be used to show the security of DIRA when the adversary holds quantum information about the source. Giving a complete security proof for a DIRA protocol is beyond the scope of this work. Instead, we will introduce the main components of DIRA security proofs and sketch how our results can be applied to prove the security of DIRA using a quantum SV source.

We will consider the following setup. Alice and Bob each use a source of imperfect randomness to choose the measurement settings in a Bell test. We model this potentially correlated sequence of measurement choices as a single SV source.\footnote{Given the spacelike separation between Alice and Bob, the order in which the measurement settings are produced is arbitrary. Nevertheless, we can, somewhat conservatively, model the whole process as two uses of a single SV source.} The measurement results of the Bell test are denoted as $X^n$. Finally, we combine $X^n$ with another $n$ pairs of bits (denoted as $Y^n$) taken from the same SV source to produce the bitstring $Z^m$.\footnote{In \cite{Kessler_2020}, the classicality of Eve's side information about the SV source is used to argue that one has a Markov chain $X^n \leftrightarrow \tilde{E} \leftrightarrow Y^n$, where $\tilde{E}$ represents all side information available to Eve.} The setup is sketched in \cref{fig:DIRA_setup}.

\begin{figure}[ht]
    \centering
    \begin{tikzpicture}[
        trace/.pic={
            \draw [thick](-0.4, 0) --   (+0.4, 0);
            \draw [thick](-0.3, 0.1) -- (+0.3, 0.1);
            \draw [thick](-0.2, 0.2) -- (+0.2, 0.2);
            \draw [thick](-0.1, 0.3)--  (0.1, 0.3);
        },
    ]
        \node[blue_box] (S1) at (2, 0) {$\mathcal{S}_1$};
        \node[blue_box] (Sn) at (5.5, 0) {$\mathcal{S}_n$};
        \node[blue_box] (Sn1) at (8.5, 0) {$\mathcal{S}_{n+1}$};
        \node[blue_box] (S2n) at (12, 0) {$\mathcal{S}_{2n}$};
        
        \node at (3.75, 0) {$\cdots$};
        \draw[thick,->,>=stealth] ([xshift=-0.7cm] S1.west) -- (S1.west);
        \draw[thick,->,>=stealth] (S1.east) -- ([xshift=0.7cm] S1.east);
        \draw[thick,->,>=stealth] ([xshift=-0.7cm] Sn.west) -- (Sn.west);
        \draw[thick,->,>=stealth] (Sn.east) -- node[pos=0.4,above]{$R_n$} (Sn1.west);
        \node at (10.25, 0) {$\cdots$};
        \draw[thick,->,>=stealth] (Sn1.east) -- ([xshift=0.7cm] Sn1.east);
        \draw[thick,->,>=stealth] ([xshift=-0.7cm] S2n.west) -- (S2n.west);

        \draw[thick,dashed] ([xshift=1.2cm,yshift=-1.3cm] Sn.east) node[below]{$\rho_{X^n R_n E}$} -- ([xshift=1.2cm,yshift=+4.3cm] Sn.east);

        \node[green_box] (E1) at (2, 1.7) {$\mathcal{E}_1$};
        \node[green_box] (En) at (5.5, 1.7) {$\mathcal{E}_n$};

        \begin{scope}[on background layer={}]
            \draw[very thick,draw=black!30!white,fill=black!5!white,rounded corners=0.1cm]
                ([xshift=-0.25cm,yshift=+0.25cm] E1.north west) --
                ([xshift=+0.25cm,yshift=+0.25cm] E1.north east) --
                ([xshift=+0.25cm,yshift=-1.2cm] S1.south east) -- node[above,text=black!85!white]{$\cM_1$}
                ([xshift=-0.25cm,yshift=-1.2cm] S1.south west) --
                cycle;

            \draw[very thick,draw=black!30!white,fill=black!5!white,rounded corners=0.1cm]
                ([xshift=-0.25cm,yshift=+0.25cm] En.north west) --
                ([xshift=+0.25cm,yshift=+0.25cm] En.north east) --
                ([xshift=+0.25cm,yshift=-1.2cm] Sn.south east) -- node[above,text=black!85!white]{$\cM_n$}
                ([xshift=-0.25cm,yshift=-1.2cm] Sn.south west) --
                cycle;
        \end{scope}

        \draw[thick,->,>=stealth] (S1.north) -- (E1.south);
        \draw[thick,->,>=stealth] (Sn.north) -- (En.south);

        \node  (dots1)at (3.75, 1.7) {$\cdots$};
        \draw[thick,->,>=stealth] ([xshift=-0.7cm] E1.west) -- (E1.west);
        \draw[thick,->,>=stealth] (E1.east) -- ([xshift=0.7cm] E1.east);
        \draw[thick,->,>=stealth] ([xshift=-0.7cm] En.west) -- (En.west);
        \draw[thick,->,>=stealth] (En.east) -- ([xshift=+0.7cm] En.east);
        \draw[thick] ([xshift=+0.7cm] En.east) pic[rotate=-90,scale=0.7] {trace};

        \draw[thick,->,>=stealth] (E1.north) -- ([yshift=0.7cm] E1.north);
        \draw[thick,->,>=stealth] (En.north) -- ([yshift=0.7cm] En.north);
        \draw[very thick,decorate,decoration={calligraphic brace,raise=0.1cm,amplitude=0.15cm}] ([xshift=-0.1cm,yshift=0.7cm] E1.north) -- node[above=0.2cm]{$X^n$} ([xshift=+0.1cm,yshift=0.7cm] En.north);

        \draw[thick,->,>=stealth] (Sn1.north) -- ([yshift=0.7cm] Sn1.north);
        \draw[thick,->,>=stealth] (S2n.north) -- ([yshift=0.7cm] S2n.north);
        \draw[very thick,decorate,decoration={calligraphic brace,raise=0.1cm,amplitude=0.15cm}] ([xshift=-0.1cm,yshift=0.7cm] Sn1.north) -- node[above=0.2cm]{$Y^n$} ([xshift=+0.1cm,yshift=0.7cm] S2n.north);

        \node[blue_box,minimum width=1.3cm] (Ext) at (10.25, 4) {$\mathrm{Ext}$};
        \draw[thick,->,>=stealth] (10.25, 2.0) -- (Ext.south);
        \draw[thick,->,>=stealth,rounded corners=0.1cm] (3.75, 3.7) -- (dots1|-Ext.west) -- (Ext.west);

        \node (Z) at ([xshift=2.8cm] Ext.east) {$Z^m$};
        \draw[thick,->,>=stealth] (Ext.east) -- (Z.west);
        \node (E) at ([yshift=-1cm] S1.west-|Z) {$E$};
        \node (Rout) at ([xshift=0.2cm] S2n.east-|Z) {$R^{\mathrm{out}}$};
        \draw[thick,->,>=stealth] (S2n.east) -- (Rout.west);
        \draw[thick,->,>=stealth] ([xshift=-0.8cm,yshift=-1cm] S1.west) -- (E.west);
        \draw[very thick,decorate,decoration={calligraphic brace,raise=0.3cm,amplitude=0.15cm}] ([yshift=+0.2cm] Z.east) -- node[right=0.5cm]{$\rho^\mathrm{out}$} ([yshift=-0.2cm] Z.east|-E.east);

        \draw[very thick,decorate,decoration={calligraphic brace,raise=0.1cm,amplitude=0.15cm}] ([xshift=-0.8cm,yshift=-1.1cm] S1.west) -- node[left=0.25cm]{$\rho^{\mathrm{in}}$} ([xshift=-0.8cm,yshift=+0.1cm] E1.west);
    \end{tikzpicture}
    \caption{
        \textbf{Diagram of a DIRA setup with a quantum source.} We model the quantum SV source as a sequence of channels $\mathcal{S}_1, \ldots, \mathcal{S}_{2n}$, producing classical random variables. The first $n$ pairs of bits are used as the input to a Bell test (green boxes) which produces the measurement results $X^n$. An additional $n$ pairs of bits $Y^n$ are produced using the same quantum SV source which, together with $X^n$, are used to extract the final random bitstring $Z^m$.}
    \label{fig:DIRA_setup}
\end{figure}
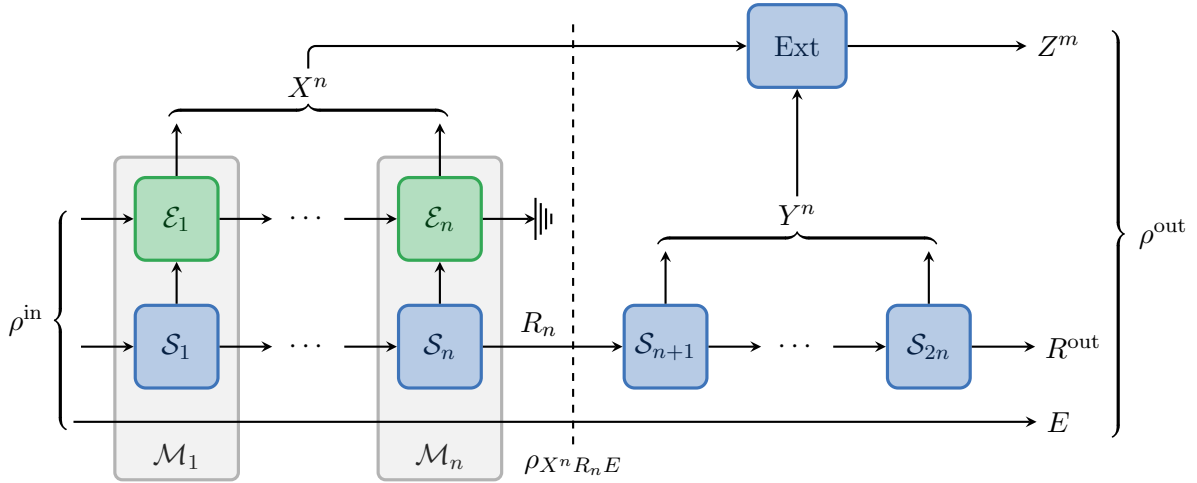

The state of the art technique for analysing DIRA protocols is based on the entropy accumulation theorem (EAT) \cite{Dupuis_2020, Dupuis_2019, Metger_2022, Kessler_2020}. Informally, the EAT is a tool which allows to bound the entropy of a quantum state which was generated by applying a sequence of channels to some initial state. The EAT then states that the overall entropy is approximately equal to the sum of the (von Neumann) entropies produced by each channel. In other words, the entropies accumulate.

It was shown in \cite[Lemma 27]{Kessler_2020} that, conditioned on observing a the violation of an appropriately chosen Bell inequality, one can bound the entropy of each channel $\cM_i$ in \cref{fig:DIRA_setup} by
\begin{equation}
    H(X_i|R_iE)_{\cM_i[\rho]} \geq h
\end{equation}
for some constant $h > 0$ which depends on the magnitude of the observed Bell violation. Let $\rho_{X^{n}R_{n}E}$ be the state after the channels $\cM_1 \ldots \cM_n$ were applied (see \cref{fig:DIRA_setup}). Then, using the generalized EAT \cite{Metger_2022} for the channels $\cM_1 \ldots \cM_n$, we have that \cite{Kessler_2020}\footnote{In \cite{Kessler_2020} the original EAT \cite{Dupuis_2020} was used to bound $H_{\min}^{\varepsilon_s}$. Here we need the generalized EAT (GEAT) \cite{Metger_2022} since we include the memory of the quantum SV source (i.e., $R_n$ in \cref{fig:DIRA_setup}) in Eve's side information, which is updated in every round. The non-signalling condition from the GEAT is clearly satisfied since in \cref{fig:DIRA_setup} there are no wires going from the green boxes to Eve. This corresponds to the assumption in \cite{Kessler_2020} that the SV source and the devices used in the Bell test are isolated.}
\begin{equation}
    H_{\min}^{\varepsilon_s}(X^{n}|R_{n}E) \geq nh - \mathcal{O}(\sqrt{n}).
\end{equation}

Let us denote $\cN = \mathcal{S}_{2n} \circ \ldots \circ \mathcal{S}_{n + 1}$. We know from \cref{lem:SV_entropy_chaining} that 
\begin{equation}
    H_{\min}(Y^n|\tilde{E})_{\cN[\sigma]} \geq -2n\log\left(\frac{1}{2}+\mu\right)
\end{equation}
holds for any $\sigma_{R_n \tilde{E}}$ and in particular for any purification of $\rho_{X^nR^nE}$ (we have a factor of $2n$ above since each $\mathcal{S}_i$ produces a pair of bits).
Hence, we can apply \cref{lem:alt_model_equivalence,thm:dodis_extractor} to obtain
\begin{equation} \label{eq:dira_security_bound}
    \frac{1}{2}\norm{\rho_{Z^{m}Y^nR^{\mathrm{out}}E}^{\text{out}} - \omega_{Z^{m}} \otimes \rho_{Y^{n}R^{\mathrm{out}}E}^{\mathrm{out}}}_{1} \leq \varepsilon_s + \frac{1}{2}\sqrt{2^{2m + 2n - nh + \mathcal{O}(\sqrt{n}) - 2nk_{2}}},
\end{equation}
where $k_{2} = - \log\left( \frac{1}{2} + \mu \right)$. This means that, for some target security parameter $\varepsilon$, one can extract
\begin{equation}
    m = \frac{1}{2} n (2k_2 + h - 2) - \log \frac{1}{2(\varepsilon - \varepsilon_s)} - \mathcal{O}(\sqrt{n})
\end{equation}
bits of uniform randomness. For this to be positive, we require that $2k_2 + h > 2$. Given that for increasing bias $\mu$, both $k_2$ and $h$ decrease, there is a maximum bias which can be tolerated.

\begin{remark}[Privatization]
    In the setup above, since the extractor in \cref{thm:dodis_extractor} is strong, one can include a copy of the output of the sources into the system $R^\mathrm{out}$. Hence, \cref{eq:dira_security_bound} then states that $Z^m$ is random even when Eve learns the output of the sources. This is also referred to as privatization \cite{Kessler_2020, Foreman_2023}.
\end{remark}

\section{Conclusions and outlook} \label{sec:conclusion}
It is essential to understand the minimal assumptions under which one can produce uniform randomness. Towards achieving this goal, researchers have shown that one can extract perfect randomness from two (conditionally) independent sources of randomness. Justifying this independence, however, is difficult as correlations are ubiquitous and there is no physical principle that prevents physical degrees of freedom at different locations from being correlated. Here, to overcome this issue, we introduce a different approach where the independence assumption is not placed on the quantum state itself but rather on the process by which it was generated. Crucially, in contrast to the independence of states, the independence of processes can be justified by physical principles such as non-signalling. We have then shown that two independent processes are sufficient for generating randomness as long as each process produces a sufficient amount of entropy.

To illustrate the versatility of this approach, we considered the example of device-independent randomness amplification (DIRA). A widely used model for the source of low-quality randomness in DIRA are SV sources \cite{Santha_1984, Colbeck_2012, Kessler_2020}. However, due to their origin in classical information theory, SV sources do not allow for quantum side information. To overcome this limitation, we generalize SV sources to a sequence of quantum channels producing only weakly biased bits. Apart from more closely matching how such sources of randomness are physically realized, this definition very naturally allows for quantum side information.

We conclude with some important open questions.
\begin{enumerate}
    \item The extractors in \cref{thm:channel_ext,thm:dodis_extractor} only work for sources such that $k_1 + k_2 > n$. This is a fairly strong (although not necessarily unrealistic) requirement. Even though our bound for the $\mathrm{IP}$ extractor is tight (see \cref{rem:ip_tightness}), better extractors are known in the classical setting (see, e.g., \cite{Chattopadhyay_2022} for an overview).
    The best known extractors for (conditionally) independent states only require sources with (poly) logarithmic min entropy \cite{Chor_1988,Chattopadhyay_2016,Friedman_2016}. It is therefore a natural question to ask what the minimal entropy requirements are to generate randomness in our model. Achieving sublinear entropy requirements would also allow for DIRA with arbitrary bias $\mu < \frac{1}{2}$ \cite{Kessler_2020}.

    \item In \cite{Friedman_2016}, it was shown that any extractor against classical side information remains secure against quantum side information in the Markov model with an exponential penalty term to the error (similar results were shown previously for seeded extractors, i.e., uniform $Y$, in \cite{Berta_2015,Berta_2017}).
    We don't know whether the same is true for our model. However, note that here the challenge seems to be far greater than in \cite{Friedman_2016} since even when there is only classical information (i.e., $S$ and $T$ are trivial), our model does not reduce to the one studied in the literature on classical two-source extractors (see \cref{subsec:cl_extractors}).

    \item Is it possible to generalize our model even further? One possible direction could be to study approximately independent channels (for some suitable approximation). This would be particularly interesting for scenarios where one does not have spacelike separation but only (imperfectly) isolated laboratories. Another direction could be to study more general scenarios where a more complicated structure is imposed on the generating process. For instance, it would be interesting to know if one can still extract randomness when each pair of bits is produced independently but some limited communication is allowed between subsequent pairs. 
    
    \item We may ask whether there is an information-theoretic criterion that can be used to decide whether a given situation fits into our model with independent channels. Note that such a characterization exists for the Markov model, namely the conditional mutual information. If it equals zero then the Markov chain condition holds \cite{Hayden_2004} (however, this criterion is not robust \cite{Ibinson_2007, Fawzi_2015}).

    \item In general, one may wish to extract randomness from more than two sources. Of particular interest in this setting is the scenario when some of the sources are faulty, i.e., they have zero min-entropy. In the classical setting, some constructions for this setup have been given in \cite{Chattopadhyay_2020}. Showing that similar results are possible in the quantum setting could enable the construction of distributed randomness beacons.
\end{enumerate}

\section*{Acknowledgments}
We thank Joseph Renes, Lukas Schmitt, and Marco Tomamichel for useful discussions.
This work is part of the project \emph{Certified Randomness Generation} (No.~20QU-1\_225171) of the Swiss State Secretariat for Education, Research and Innovation (SERI). We also acknowledge support from the Swiss National Science Foundation via project No.~20CH21\_218782, the National Centre of Competence in Research \emph{SwissMAP}, and the ETH Zurich Quantum Center. 

\bibliographystyle{halpha}
\bibliography{sources_nourl.bib}

\appendix
\section{Technical Lemmas}
\begin{lemma}[{\cite[Lemma B.3]{Dupuis_2014}}] \label{lem:change_of_marginal}
    Let $\rho_{AB} \in \Ssub{AB}$ and $\sigma_{A} \in \Ssub{A}$. Then, there exists $T_A \in \mathrm{Lin}(A)$ such that
    \begin{equation}
        \sigma_{AB} \coloneqq T_A \rho_{AB} T_A^*
    \end{equation}
    is an extension of $\sigma_A$ with $P(\rho_{AB}, \sigma_{AB}) = P(\rho_{A}, \sigma_{A})$.
\end{lemma}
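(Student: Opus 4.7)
The plan is to combine the data-processing inequality (DPI) for the purified distance with an explicit polar-type construction of $T_A$. Since $\|\cdot\|_+$ (and hence $P$) is monotone under CPTNI maps, and in particular under $\tr_B$, every $T_A \in \mathrm{Lin}(A)$ satisfies
\begin{equation*}
P\bigl(\rho_{AB},\, T_A \rho_{AB} T_A^*\bigr) \;\geq\; P\bigl(\rho_A,\, T_A \rho_A T_A^*\bigr).
\end{equation*}
Hence the task reduces to constructing a single $T_A$ with $T_A \rho_A T_A^* = \sigma_A$ that also achieves the matching upper bound.

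The construction exploits the canonical purification $\ket{\rho}_{ABR} := (\rho_{AB}^{1/2} \otimes \mathds{1}_R)\ket{\Omega}_{AB,R}$ with $R \cong AB$. Its key feature is that for \emph{any} $T_A$, the pure state $\ket{\phi}_{ABR} := (T_A \otimes \mathds{1}_{BR})\ket{\rho}_{ABR}$ has $AB$-marginal exactly $T_A \rho_{AB} T_A^*$ (the action on $A$ does not disturb the $BR$ correlations), so it already is a purification of the candidate extension, and the overlap reduces to the one-system quantity $\braket{\rho}{\phi} = \tr[T_A \rho_A]$. I would then take $T_A := \sigma_A^{1/2}\, U\, \rho_A^{-1/2}$, where $U$ is the partial isometry from the polar decomposition $\rho_A^{1/2}\sigma_A^{1/2} = U^*\,|\rho_A^{1/2}\sigma_A^{1/2}|$. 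This choice gives $T_A \rho_A T_A^* = \sigma_A$ together with
\begin{equation*}
\braket{\rho}{\phi} \;=\; \tr[\sigma_A^{1/2} U \rho_A^{1/2}] \;=\; \tr[U\rho_A^{1/2}\sigma_A^{1/2}] \;=\; \bigl\|\rho_A^{1/2}\sigma_A^{1/2}\bigr\|_1,
\end{equation*}
which is the Uhlmann fidelity $F(\rho_A, \sigma_A)$. Combined with the traces $\|\ket\rho\|^2 = \tr\rho_A$ and $\|\ket\phi\|^2 = \tr\sigma_A$, the two sub-normalized pure states achieve the generalized fidelity $\bar F(\rho_A,\sigma_A)$, so that $P(\ket\rho\bra\rho, \ket\phi\bra\phi) = \sqrt{1-\bar F(\rho_A,\sigma_A)^2} = P(\rho_A,\sigma_A)$. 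A final application of DPI (tracing out $R$) then yields $P(\rho_{AB}, T_A\rho_{AB}T_A^*) \leq P(\rho_A, \sigma_A)$, closing the loop.

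The main technical hurdle is the support mismatch when $\mathrm{supp}(\sigma_A) \not\subseteq \mathrm{supp}(\rho_A)$: the pseudo-inverse $\rho_A^{-1/2}$ then only captures $\sigma_A$ projected onto $\mathrm{supp}(\rho_A)$, and the identity $T_A \rho_A T_A^* = \sigma_A$ holds only up to that projection. I would handle this by a standard continuity argument---replacing $\rho_A$ by $\rho_A + \varepsilon\,\omega_A$, running the construction above, and taking $\varepsilon \to 0$---using the continuity of $P$ to preserve both the marginal equality and the distance identity in the limit. The sub-normalization correction $\sqrt{(1-\tr\rho)(1-\tr\sigma)}$ baked into $\bar F$ is exactly what makes the argument compatible with $\Ssub{\cdot}$; since the purified distance between the two pure vectors $\ket\rho$ and $\ket\phi$ is defined through $\|\cdot\|_+$, the extra mass $\tfrac12|\tr\rho_A - \tr\sigma_A|$ is automatically absorbed and no separate bookkeeping for the sub-normalization is required.
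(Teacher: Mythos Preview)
The paper does not actually prove this lemma; it is stated with a citation to \cite{Dupuis_2014} and used as a black box. Your proposal reproduces the standard argument from that reference: take $T_A=\sigma_A^{1/2}U\rho_A^{-1/2}$ with $U$ the polar partial isometry of $\rho_A^{1/2}\sigma_A^{1/2}$, verify $T_A\rho_A T_A^*=\sigma_A$, and sandwich the purified distance between the DPI lower bound (from $\tr_B$) and the Uhlmann upper bound obtained via the canonical purifications $\ket{\rho}_{ABR}$ and $(T_A\otimes\mathds{1})\ket{\rho}_{ABR}$. The computation $\braket{\rho}{\phi}=\tr[T_A\rho_A]=\|\rho_A^{1/2}\sigma_A^{1/2}\|_1$ and the handling of the sub-normalization via $\bar F$ are both correct.

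There is, however, a genuine gap in your treatment of the support mismatch. Your proposed fix --- perturb $\rho_A\mapsto\rho_A+\varepsilon\omega_A$ and let $\varepsilon\to 0$ --- does not deliver what the lemma asks for. The operators $T_A^{(\varepsilon)}=\sigma_A^{1/2}U^{(\varepsilon)}(\rho_A+\varepsilon\omega_A)^{-1/2}$ need not converge in $\mathrm{Lin}(A)$, so you obtain a limiting \emph{state} $\sigma_{AB}$ (by compactness) but not a limiting \emph{operator} $T_A$. More fundamentally, when $\mathrm{rank}(\sigma_A)>\mathrm{rank}(\rho_A)$ no $T_A\in\mathrm{Lin}(A)$ with $T_A\rho_A T_A^*=\sigma_A$ can exist at all, since conjugation cannot raise rank. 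The lemma therefore carries an implicit hypothesis (e.g.\ $\mathrm{supp}(\sigma_A)\subseteq\mathrm{supp}(\rho_A)$, or $\rho_A$ of full rank), which is how it is effectively used in \cref{lem:channel_smoothing}. Under that hypothesis your construction is complete as written, with no limiting procedure needed; without it the statement itself requires qualification rather than a more clever proof.
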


\begin{lemma}[{\cite[Lemma 18]{Tomamichel_2011}}] \label{lem:unopt_smooth_min_ent_bound}
    Let $\rho_{AB} \in \Ssub{AB}$ and $0 < \varepsilon \leq \tr[\rho]$. It holds that
    \begin{equation}
        H_{\min}^{\downarrow,2\varepsilon}(A|B)_\rho \geq H_{\min}^{\varepsilon}(A|B)_\rho - \log \left( \frac{2}{\epsilon^2} + \frac{1}{\tr[\rho] - \varepsilon} \right).
    \end{equation}
\end{lemma}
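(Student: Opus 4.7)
The plan is to convert between the two versions of smooth min-entropy via a spectral-projection argument on the $B$-system. Unpacking the hypothesis, $H_{\min}^{\varepsilon}(A|B)_\rho \geq k$ yields a smoothed state $\tilde\rho_{AB} \in \mathcal{B}_\rho^\varepsilon$ together with an optimizer $\sigma_B \in \Ssub{B}$ satisfying $\tilde\rho_{AB} \leq 2^{-k}\,\mathds{1}_A \otimes \sigma_B$. The goal is to replace $\sigma_B$ by the \emph{actual} marginal of a new state $\hat\rho$, placing $\hat\rho$ in the slightly enlarged ball $\mathcal{B}_\rho^{2\varepsilon}$ while paying only a $\log(\cdot)$ penalty on $k$.

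For a threshold $c>0$ to be optimized later, I would take $\Pi$ to be the spectral projector onto the subspace where $c\,\tilde\rho_B - \sigma_B \geq 0$, and set $\hat\rho_{AB} \coloneqq \Pi\,\tilde\rho_{AB}\,\Pi$. By construction $\Pi\,\sigma_B\,\Pi \leq c\,\Pi\,\tilde\rho_B\,\Pi = c\,\hat\rho_B$, so sandwiching the operator inequality for $\tilde\rho_{AB}$ by $\Pi$ gives
\begin{equation*}
\hat\rho_{AB} \;\leq\; 2^{-k}\,\mathds{1}_A \otimes \Pi\,\sigma_B\,\Pi \;\leq\; c\cdot 2^{-k}\,\mathds{1}_A \otimes \hat\rho_B,
\end{equation*}
which yields the desired $H_\infty^\downarrow(A|B)_{\hat\rho} \geq k - \log c$.

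To close the argument I must show $P(\hat\rho, \rho) \leq 2\varepsilon$. On the orthogonal complement of $\Pi$ we have $\sigma_B > c\,\tilde\rho_B$, giving $\tr[(\mathds{1}-\Pi)\,\tilde\rho_B] \leq \tfrac{1}{c}\tr[(\mathds{1}-\Pi)\,\sigma_B] \leq \tfrac{1}{c}$. A gentle-measurement style estimate then bounds the purified distance $P(\tilde\rho, \hat\rho)$ in terms of $1/c$ together with an additional trace-defect contribution; combining with $P(\tilde\rho, \rho) \leq \varepsilon$ and the triangle inequality places $\hat\rho$ inside $\mathcal{B}_\rho^{2\varepsilon}$ once $c$ is chosen large enough.

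The main obstacle I anticipate is the careful bookkeeping of the sub-normalization: the purified distance for sub-normalized states splits into a trace-norm piece and a contribution from $|\tr\tilde\rho - \tr\hat\rho|$, and it is precisely this trace-defect term that produces the extra $\tfrac{1}{\tr[\rho]-\varepsilon}$ summand inside the logarithm. Choosing $c = \tfrac{2}{\varepsilon^2} + \tfrac{1}{\tr[\rho]-\varepsilon}$ should make the two contributions balance and deliver exactly the stated bound $H_{\min}^{\downarrow,2\varepsilon}(A|B)_\rho \geq k - \log c$; for this final optimization I would closely follow Tomamichel's original argument, as the constants are tight enough that every normalization factor must be tracked precisely.
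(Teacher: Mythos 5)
Your approach matches the proof in the cited reference (Tomamichel \emph{et al.}, 2011) — the paper itself does not reproduce the argument but simply cites it. The spectral projector $\Pi$ onto the positive part of $c\,\tilde\rho_B - \sigma_B$, the sandwich giving $\hat\rho_{AB}\leq c\,2^{-k}\mathds{1}_A\otimes\hat\rho_B$, and the bound $\tr[(\mathds{1}-\Pi)\tilde\rho_B]\leq 1/c$ are all correct and are exactly the ingredients used there; the only piece you leave unverified is that the gentle-measurement estimate on $P(\tilde\rho,\hat\rho)$, once the sub-normalization defect $\tr[\tilde\rho]\geq\tr[\rho]-\varepsilon$ is folded into the generalized fidelity, delivers $P(\tilde\rho,\hat\rho)\leq\varepsilon$ for $c=\tfrac{2}{\varepsilon^2}+\tfrac{1}{\tr[\rho]-\varepsilon}$, which you explicitly defer to the reference. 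Since that is precisely the calculation the cited lemma carries out, your reconstruction is faithful to the source.
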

The following inequality was shown in \cite[Corollary 5.10]{Tomamichel_2016} for normalized states. For completeness, we show the statement here for sub-normalized states.
\begin{lemma} \label{lem:H2_Hmin_bound}
    Let $\rho_{AB} \in \Ssub{AB}$. Then
    \begin{equation}
        H_2^\downarrow(A|B)_\rho \geq H_{\min}(A|B)_\rho.
    \end{equation}
\end{lemma}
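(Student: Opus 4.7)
The plan is to reduce the claim to a short operator-inequality manipulation starting from the defining property of $H_{\min}$. First, I would let $\sigma_B^*\in \Ssub{B}$ be an optimizer (or, if one worries about existence, an $\varepsilon$-optimizer) achieving $H_{\min}(A|B)_\rho=H_\infty^\uparrow(A|B)_\rho$. By the definition of $D_\infty$, this is equivalent to the operator inequality
\begin{equation*}
    \rho_{AB} \leq 2^{-H_{\min}(A|B)_\rho}\, \mathds{1}_A\otimes\sigma_B^*,\qquad \tr[\sigma_B^*]\leq 1.
\end{equation*}
We may assume $H_2^\downarrow(A|B)_\rho<+\infty$ (else the statement is trivial), which in particular forces $\mathrm{supp}(\rho_{AB})\subseteq \mathcal{H}_A\otimes\mathrm{supp}(\rho_B)$; this holds automatically because $\tr_A\rho_{AB}=\rho_B$. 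Throughout, $\rho_B^{-1/2}$ is interpreted as the Moore--Penrose pseudoinverse on $\mathrm{supp}(\rho_B)$.

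The key step is to rewrite
\begin{equation*}
    2^{-H_2^\downarrow(A|B)_\rho}=\tr[(\rho_B^{-1/4}\rho_{AB}\rho_B^{-1/4})^2]=\tr[\rho_{AB}(\mathds{1}_A\otimes\rho_B^{-1/2})\rho_{AB}(\mathds{1}_A\otimes\rho_B^{-1/2})]
\end{equation*}
by cyclicity, and then substitute the operator bound above for one of the two copies of $\rho_{AB}$. Since the remaining factor $(\mathds{1}_A\otimes\rho_B^{-1/2})\rho_{AB}(\mathds{1}_A\otimes\rho_B^{-1/2})$ is positive semi-definite, the substitution preserves the inequality and yields
\begin{equation*}
    2^{-H_2^\downarrow(A|B)_\rho}\leq 2^{-H_{\min}(A|B)_\rho}\tr[(\mathds{1}_A\otimes \sigma_B^*)(\mathds{1}_A\otimes\rho_B^{-1/2})\rho_{AB}(\mathds{1}_A\otimes\rho_B^{-1/2})].
\end{equation*}
Tracing out $A$ first replaces $\rho_{AB}$ by $\rho_B$, so the right-hand expectation collapses to $\tr[\sigma_B^*\rho_B^{-1/2}\rho_B\rho_B^{-1/2}]=\tr[\sigma_B^*\,\Pi_{\rho_B}]\leq\tr[\sigma_B^*]\leq 1$, where $\Pi_{\rho_B}$ denotes the projector onto $\mathrm{supp}(\rho_B)$. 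Taking $-\log$ of both sides yields $H_2^\downarrow(A|B)_\rho\geq H_{\min}(A|B)_\rho$.

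The main obstacle I anticipate is purely bookkeeping: making sure all pseudoinverses, supports, and sub-normalization factors line up. Specifically, one needs $\mathrm{supp}(\rho_B)\subseteq \mathrm{supp}(\sigma_B^*)$ for the initial operator inequality to be meaningful (which is automatic after partial trace over $A$), and the final step relies on the fact that $\Pi_{\rho_B}\leq\mathds{1}_B$ combined with $\tr[\sigma_B^*]\leq 1$ absorbs any sub-normalization gracefully. If one prefers to avoid invoking existence of the optimizer, a standard $\varepsilon$-optimizer argument goes through verbatim and one takes $\varepsilon\to 0$ at the very end.
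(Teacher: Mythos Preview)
Your proposal is correct and follows essentially the same approach as the paper's proof: both start from the defining operator inequality $\rho_{AB}\leq 2^{-k}\mathds{1}_A\otimes\sigma_B$ for the $H_{\min}$-optimizer, substitute it for one copy of $\rho_{AB}$ in the expanded expression for $\tr[(\rho_B^{-1/4}\rho_{AB}\rho_B^{-1/4})^2]$, trace out $A$ to collapse the remaining $\rho_{AB}$ to $\rho_B$, and bound $\tr[\sigma_B\,\Pi_{\rho_B}]\leq\tr[\sigma_B]\leq 1$. Your treatment is slightly more explicit about pseudoinverses and supports than the paper, but the argument is the same.
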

\begin{proof}
    Let us denote $k \coloneqq H_{\min}(A|B)_\rho$. By the definition of $H_{\min}$, we know that there exists $\sigma_B \in \Ssub{B}$ such that
    \begin{equation}
        \rho_{AB} \leq 2^{-k} \mathds{1}_A \otimes \sigma_B.
    \end{equation}
    Hence,
    \begin{equation}
    \begin{aligned}
        \tr[\left(\rho_{B}^{-1/4} \rho_{AB} \rho_{B}^{-1/4}\right)^2]  =& \tr[\left(\rho_{B}^{-1/4} \rho_{AB} \rho_{B}^{-1/4}\right)\left(\rho_{B}^{-1/4} \rho_{AB} \rho_{B}^{-1/4}\right)] \\
        \leq& 2^{-k} \tr[\left(\rho_{B}^{-1/4} \sigma_B \rho_{B}^{-1/4}\right)\left(\rho_{B}^{-1/4} \rho_{AB} \rho_{B}^{-1/4}\right)] \\
        =& 2^{-k} \tr[\sigma_{B} \rho_{B}^{-1/2} \rho_{B} \rho_{B}^{-1/2}] \\
        \leq& 2^{-k} \tr[\sigma_{B}] \\
        \leq& 2^{-k}
    \end{aligned}
    \end{equation}
    and therefore,
    \begin{equation}
        H_2^\downarrow(A|B)_\rho \geq k = H_{\min}(A|B)_\rho
    \end{equation}
    as claimed.
\end{proof}

\begin{lemma} \label{lem:tr_square_ineq}
    Let $S_A \in \mathrm{Herm}(A)$ be a Hermitian operator and $S_A^{\pm}$ be positive operators such that $S_{A} = S_{A}^+ - S_{A}^-$. Then
    \begin{equation}
        \tr\bigl[S_A^{2}\bigr] \leq \tr\bigl[(S_A^+ + S_A^-)^2\bigr].
    \end{equation}
    In particular, for any $K_{B|A} \in \mathrm{Lin}(A, B)$,
    \begin{equation}
        \tr[\bigl(K_{B|A} S_{A} K_{B|A}^*\bigr)^2] \leq \tr[\bigl(K_{B|A} (S_A^+ + S_A^-) K_{B|A}^*\bigr)^2].
    \end{equation}
\end{lemma}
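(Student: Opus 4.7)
The plan is to reduce the inequality to a direct trace-positivity computation. I would first expand both $S_A^2$ and $(S_A^+ + S_A^-)^2$, noting that they share the diagonal terms $(S_A^+)^2$ and $(S_A^-)^2$ and differ only in the sign of the cross terms. Specifically, the difference $(S_A^+ + S_A^-)^2 - S_A^2$ simplifies to $2(S_A^+ S_A^- + S_A^- S_A^+)$. Taking the trace and using cyclicity, this collapses to $4\tr[S_A^+ S_A^-]$.

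The next step is to argue that $\tr[S_A^+ S_A^-] \geq 0$, which is the standard fact that the Hilbert--Schmidt inner product of two positive semi-definite operators is non-negative. I would write $\tr[S_A^+ S_A^-] = \tr[(S_A^+)^{1/2} S_A^- (S_A^+)^{1/2}]$, and observe that the operator inside the trace is positive semi-definite, so its trace is non-negative. This immediately yields the first inequality.

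For the ``in particular'' statement, I would simply apply the first inequality to the decomposition $K_{B|A} S_A K_{B|A}^* = T^+ - T^-$ where $T^{\pm} \coloneqq K_{B|A} S_A^{\pm} K_{B|A}^* \geq 0$. Since $K_{B|A}(S_A^+ + S_A^-) K_{B|A}^* = T^+ + T^-$, the first part gives $\tr[(K_{B|A} S_A K_{B|A}^*)^2] \leq \tr[(T^+ + T^-)^2] = \tr[(K_{B|A}(S_A^+ + S_A^-) K_{B|A}^*)^2]$.

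I do not anticipate a genuine obstacle here: the argument is a short algebraic expansion followed by invoking the positivity of $\tr[AB]$ for positive $A,B$. The only mild subtlety worth flagging is that the decomposition $S_A = S_A^+ - S_A^-$ is \emph{not} assumed to be the Jordan decomposition (so $S_A^+ S_A^-$ need not vanish), but this is exactly why the inequality is non-trivial and why it still holds with this more general form.
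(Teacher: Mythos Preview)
Your proposal is correct and follows essentially the same approach as the paper: expand both squares, reduce to showing $\tr[S_A^+ S_A^-]\geq 0$, and for the second part apply the first inequality to the decomposition $K_{B|A} S_A K_{B|A}^* = K_{B|A} S_A^+ K_{B|A}^* - K_{B|A} S_A^- K_{B|A}^*$. Your explicit justification of $\tr[S_A^+ S_A^-]\geq 0$ via $\tr[(S_A^+)^{1/2} S_A^- (S_A^+)^{1/2}]$ and your remark that the decomposition need not be the Jordan one are nice touches the paper leaves implicit.
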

\begin{proof}
    We have
    \begin{equation}
    \begin{aligned}
        \tr\bigl[S_A^{2}\bigr] =& \tr[(S_A^{+} - S_A^{-})^2] \\
        =& \tr[(S_A^+)^2] - 2\underbrace{\tr[S_A^+ S_A^-]}_{\geq 0} + \tr[(S_A^-)^2] \\
        \leq& \tr[(S_A^+)^2] + 2\tr[S_A^+ S_A^-] + \tr[(S_A^-)^2] \\
        =& \tr\bigl[(S_A^+ + S_A^-)^2\bigr] \\
    \end{aligned}
    \end{equation}
    For the second statement, we can apply the above inequality with the decomposition 
    \begin{equation}
        K_{B|A} S_A K_{B|A}^* = \underbrace{K_{B|A} S_A^+ K_{B|A}^*}_{\geq 0} - \underbrace{K_{B|A} S_A^- K_{B|A}^*}_{\geq 0}
    \end{equation}
    which gives
    \begin{equation}
        \tr[\bigl(K_{B|A} S_{A} K_{B|A}^*\bigr)^2] \leq \tr[\bigl(K_{B|A} (S_A^+ + S_A^-) K_{B|A}^*\bigr)^2],
    \end{equation}
    as desired.
\end{proof}

\section{Alternative model} \label{sec:alternative_model}
The goal of this section is to show that two-process extractors can be used to extract randomness in a slightly different setup. Specifically, we consider a cq state $\rho_{XB}$ and an instrument $\cN_{YT|B}$, see \cref{fig:alt_setup}. 

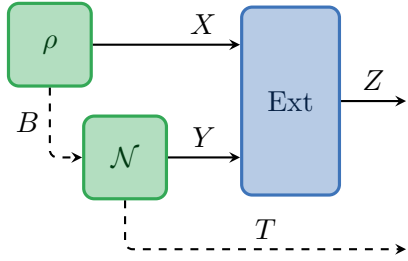
\begin{figure}[ht!]
    \centering
    \begin{tikzpicture}
        \node[green_box] (rho) at (0, 0) {$\rho$};
        \node[green_box] (N) at (+1, -1.5) {$\cN$};

        \draw[thick,->,>=stealth,dashed,rounded corners=0.1cm] (rho.south) -- node[left]{$B$} (rho.south|-N.west) -- (N.west);

        \node[blue_box, minimum height=2.5cm, minimum width=1.3cm] (Ext) at (3.2, -0.75) {$\mathrm{Ext}$};
        \draw[thick,->,>=stealth] (N.east) -- node[above]{$Y$} (N.east-|Ext.west);
        \draw[thick,->,>=stealth] (rho.east) -- node[above,pos=0.75]{$X$} (rho.east-|Ext.west);
        
        \node (Z) at ([xshift=1.0cm] Ext.east) {};
        \node (T) at ([yshift=-0.7cm] Ext.south-|Z.center) {};

        \draw[thick,->,>=stealth] (Ext.east) -- node[above]{$Z$} (Z.west);
        \draw[thick,->,>=stealth,dashed,rounded corners=0.1cm] (N.south) -- (N.south|-T.west) -- node[above]{$T$} (T.west);
    \end{tikzpicture}
    \caption{Diagramm of the alternative model studied in \cref{lem:alt_model_equivalence}. An instrument $\cN_{YT|B}$ is applied to part of a cq state $\rho_{XB}$. The function $\mathrm{Ext}$ is applied to $\rho_{XYT}^{\mathrm{out}} \coloneqq \cN_{YT|B}[\rho_{XB}]$ to extract the random bitstring $Z$.}
    \label{fig:alt_setup}
\end{figure}

The following lemma shows that two-process extractors can extract randomness from $\cN_{YT|B}[\rho_{XB}]$.
\begin{lemma} \label{lem:alt_model_equivalence}
    Let $\rho_{XB}$ be a cq state and $\cN_{YT|B}$ be an instrument. Assume that $H_{\min}(X|B)_\rho \geq k_1$ and $H_{\min}(Y|R)_{\cN[\sigma]} \geq k_2$ hold where $\sigma_{BR}$ is a purification of $\rho_{B}$.
    Let $\mathrm{Ext}$ be a $(k_1, k_2, \varepsilon)$ two-process extractor strong in $Y$.
    Then $\rho^{\mathrm{out}}_{XYT} \coloneqq \cN_{YT|B}[\rho_{XB}]$ is such that $Z = \mathrm{Ext}(X,Y)$ is $\varepsilon$-random relative to $YT$.
\end{lemma}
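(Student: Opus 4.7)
The plan is to reduce the alternative setup to the standard two-process setup of \cref{def:two_process_extractor} via a purification argument. First, I would let $\rho_{XAB}$ be an arbitrary purification of $\rho_{XB}$ on some auxiliary system $A$, and take the ``$A$-side'' of the two-process setup to be $A' \coloneqq XA$, so that $\rho_{A'B}$ is pure. On this side I would use the instrument $\tilde{\cM}_{X'|A'}$ that performs a projective measurement of the $X$-register in the computational basis and discards the auxiliary $A$ (with trivial side information $S$). Because the $X$-register of $\rho_{XAB}$ has a diagonal reduced state after tracing out $A$, one gets $\tilde{\cM}_{X'|A'}[\rho_{A'B}] = \rho_{XB}$, and therefore
\[
    (\tilde{\cM}_{X'|A'} \otimes \cN_{YT|B})[\rho_{A'B}] = \cN_{YT|B}[\rho_{XB}] = \rho^{\mathrm{out}}_{XYT},
\]
so applying the extractor to the reduced setup produces the same output state as in the alternative setup.

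Next, I would verify the two entropy conditions of \cref{def:two_process_extractor}. The condition $H_{\min}(X|SB)_{\tilde{\cM}[\rho]} \geq k_1$, with trivial $S$, reduces directly to the hypothesis $H_{\min}(X|B)_{\rho} \geq k_1$. The second condition $H_{\min}(Y|A')_{\cN[\rho]} \geq k_2$ is the slightly more delicate one: the key observation is that since $\rho_{XAB}$ purifies $\rho_{XB}$, it also purifies $\rho_B$ on the purifying system $A' = XA$. By the isometric invariance of the min-entropy, $H_{\min}(Y|XA)_{\cN[\rho_{XAB}]}$ coincides with $H_{\min}(Y|R)_{\cN[\sigma_{BR}]}$ for any purification $\sigma_{BR}$ of $\rho_B$, and hence is $\geq k_2$ by assumption.

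The conclusion then follows by invoking the defining property of a two-process extractor strong in $Y$, which yields that $Z = \mathrm{Ext}(X, Y)$ is $\varepsilon$-random relative to $YST = YT$. I do not foresee any significant obstacle; the argument is essentially a reindexing plus one application of isometric invariance of $H_{\min}$. The only point worth being careful about is ensuring that the purification $\rho_{XAB}$ is chosen so that its $X$-marginal is classical in the same basis used to define $\rho_{XB}$, which is automatic since any purification of a cq state trivially satisfies this at the level of its $XB$-reduction.
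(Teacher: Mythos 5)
Your proof is correct, but it takes a genuinely different construction from the paper's. The paper purifies $\rho_B$ using the canonical purification $\sigma_{BB'} = \rho_B^{1/2}\Omega_{BB'}\rho_B^{1/2}$ and then builds an explicit measurement channel
\begin{equation*}
    \cM_{X|B'}[\tau_{B'}] \coloneqq \tr_{B'}\!\left[\bigl(\rho_{B'}^{-1/2}\bigr)^T \rho_{XB'}^{T_{B'}} \bigl(\rho_{B'}^{-1/2}\bigr)^T \tau_{B'}\right]
\end{equation*}
on $B'$ such that $\cM_{X|B'}[\sigma_{BB'}] = \rho_{XB}$, whereas you purify $\rho_{XB}$ to $\rho_{XAB}$, take $A' = XA$ as the first subsystem, and use the much simpler channel that dephases $X$ in the computational basis and traces out $A$. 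Both reductions are valid: in your version the identity $\tr_A[\mel{x}{\rho_{XAB}}{x}_X] = \mel{x}{\rho_{XB}}{x}_X = \rho_{B\land X=x}$ guarantees that the dephasing channel reproduces $\rho_{XB}$ exactly (the coherence of the pure state $\rho_{XAB}$ in the $X$ basis is harmless because $\rho_{XB}$ is assumed classical on $X$), and your appeal to the isometric invariance of $H_{\min}$ to transport the condition $H_{\min}(Y|R)_{\cN[\sigma]}\ge k_2$ to $H_{\min}(Y|XA)_{\cN[\rho]}$ is the same isometric-invariance step the paper uses. Your construction is arguably more transparent since it avoids the inverse square roots and transposes; the paper's is more concrete in that it singles out one canonical purification and writes the channel explicitly. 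Both yield the result with identical parameters.
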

\begin{proof}
    Consider the state
    \begin{equation}
        \sigma_{BB'} \coloneqq \rho_{B}^{1/2}\Omega_{BB'}\rho_{B}^{1/2} = \left(\rho_{B'}^{1/2}\right)^T \Omega_{BB'} \left(\rho_{B'}^{1/2}\right)^T
    \end{equation}
    which is a purification of $\rho_B$. Define the channel
    \begin{equation}
        \cM_{X|B'}\left[ \sigma_{B'} \right] \coloneqq \tr_{B'}\left[ \left(\rho_{B'}^{- 1/2}\right)^T \rho_{XB'}^{T_{B'}} \left(\rho_{B'}^{- 1/2}\right)^T \sigma_{B'} \right].
    \end{equation}
    These then satisfy
    \begin{equation}
        \cM_{X|B'}\left[ \sigma_{BB'} \right] = \tr_{B'}\left[ \rho_{XB'}^{T_{B'}}\Omega_{BB'} \right] = \rho_{XB}.
    \end{equation}
    Hence
    \begin{equation}
        H_{\min}(X|B)_{\cM[\sigma]} = H_{\min}(X|B)_{\rho} \geq k_1
    \end{equation}
    and
    \begin{equation}
        \left(\cM_{X|B'} \otimes \cN_{YT|B}\right)[\sigma_{BB'}] = \rho^\mathrm{out}_{XYT}.
    \end{equation}
    Furthermore, by the isometric invariance of $H_{\min}$, we have
    \begin{equation}
        H_{\min}(Y|B')_{\cN[\sigma]} \geq k_2.
    \end{equation}
    Since $\mathrm{Ext}$ is a $(k_1, k_2, \varepsilon)$ two-process extractor strong in $Y$, we have that $Z = \mathrm{Ext}(X, Y)$ is $\varepsilon$-random relative to $YT$ as desired.
\end{proof}

Furthermore, as shown in the proof of \cref{thm:channel_ext}, any function that allows for extracting randomness from $\rho_{XB}$ and $\cN_{YT|B}$ as described by \cref{lem:alt_model_equivalence} is also a two-process extractor (with identical parameters).

\end{document}